% time spent so far = 

\documentclass[11pt]{article}

\usepackage{geometry}\geometry{margin=1in}%

\usepackage{enumitem}
\usepackage{amsmath}
\usepackage{graphicx}
\usepackage{amsthm}
\usepackage{amssymb}
\usepackage{hyperref} % This package makes your reference clickable.
\usepackage{color}
\usepackage{calrsfs}
\usepackage{thmtools}
\usepackage{amssymb}

%\fancyhead[R]{}
%\fancyhead[L]{}

%\pagestyle{fancy}
\author{Dhruv Rohatgi \\ MIT \\ drohatgi@mit.edu}

\newcommand{\smin}{\textsc{Sliding-Min}}
\newcommand{\smaj}{\textsc{Sliding-Majority}}
\newcommand{\ksmin}{\textsc{Sliding}-$l^\text{th}$-\textsc{Smallest}}

\newcommand\restr[2]{{% we make the whole thing an ordinary symbol
  \left.\kern-\nulldelimiterspace % automatically resize the bar with \right
  #1 % the function
  \vphantom{\big|} % pretend it's a little taller at normal size
  \right|_{#2} % this is the delimiter
  }}

\newtheorem{theorem}{Theorem}[section]
\newtheorem{lemma}[theorem]{Lemma}
\newtheorem*{claim}{Claim}
\newtheorem{proposition}[theorem]{Proposition} %[theorem] here means that your propositions will be numbered by the same counter as theorems.
\newtheorem{corollary}[theorem]{Corollary} 

\theoremstyle{definition}
\newtheorem{definition}[theorem]{Definition}

\mathchardef\mhyphen="2D

\newtheoremstyle{break}
  {\topsep}{\topsep}%
  {}{}%
  {\bfseries}{}%
  {\newline}{}%
\theoremstyle{break}
\newtheorem*{algorithm}{Algorithm}

\theoremstyle{remark}

\title{Sliding window order statistics in sublinear space}
%\title{Computing rank statistics in a multi-pass sliding window model}

\begin{document}

\maketitle

\begin{abstract}
We extend the multi-pass streaming model to sliding window problems, and address the problem of computing order statistics on fixed-size sliding windows, in the multi-pass streaming model as well as the closely related communication complexity model. In the $2$-pass streaming model, we show that on input of length $N$ with values in range $[0,R]$ and a window of length $K$, sliding window minimums can be computed in $\widetilde{O}(\sqrt{N})$. We show that this is nearly optimal (for any constant number of passes) when $R \geq K$, but can be improved when $R = o(K)$ to $\widetilde{O}(\sqrt{NR/K})$. Furthermore, we show that there is an $(l+1)$-pass streaming algorithm which computes $l^\text{th}$-smallest elements in $\widetilde{O}(l^{3/2} \sqrt{N})$ space. In the communication complexity model, we describe a simple $\widetilde{O}(pN^{1/p})$ algorithm to compute minimums in $p$ rounds of communication for odd $p$, and a more involved algorithm which computes the $l^\text{th}$-smallest elements in $\widetilde{O}(pl^2 N^{1/(p-2l-1)})$ space. Finally, we prove that the majority statistic on boolean streams cannot be computed in sublinear space, implying that $l^\text{th}$-smallest elements cannot be computed in space both sublinear in $N$ and independent of $l$.
\end{abstract}

\section{Introduction}
\subsection{Background and Related Work}
The streaming model of computation has seen considerable and growing attention over the past few decades, driven by a wide variety of applications---from network monitoring to processing financial records to data mining---with at least one feature in common: data sets which are transient and therefore must be processed on arrival, or which are simply so massive that they cannot be stored in main memory.  Motivated by the former application of unbounded or possibly real-time data streams, much work has been done on ``one-pass streaming algorithms'', where access to the data is limited to a single linear scan, and efficiency is measured in terms of not only time but also space. Exact compution is typically infeasible in this model, for most nontrivial problems, so typically one must be content with computing an approximation, or (in light of the fact that the data stream may be unbounded in length) maintaining an approximate ``sketch'' of the prefix of the stream which has been processed. Work in this model ranges from estimating quantiles \cite{greenwald2001}, frequent elements \cite{charikar2004, cormode2008}, and norms \cite{kane2010}; to geometric problems such as clustering \cite{guha2000}; to graph problems such as property testing or estimating maximum matchings \cite{mcgregor2014}.

A distinct line of work has focused on the multi-pass streaming model, in which algorithms are restricted to a few passes through a finite-length data stream, and sublinear space. Here, the application is processing massive data sets which are stored in external memory: under this setting, linear scans are far more efficient than random-access I/O, and the data sets are too large to permit linear space consumption. Since Munro and Paterson's seminal paper on selection and sorting in this model \cite{munro1980}, a variety of problems have been studied in the context of multi-pass algorithms, and the tradeoff between space consumption and number of passes \cite{guha2008, chan2005, demaine2014, drineas2003, feigenbaum2005}.

Within the work on one-pass algorithms, there has been a growing body of research on the sliding window model, which addresses applications such as medical monitoring and financial monitoring (to name just a few), where the data stream has an intrinsic chronology, and it is desired that computation be done on ``recent'' elements of the data stream. Datar et al. introduced the model and considered the problems of approximately maintaining the sum and $\ell^p$ norm of the last $N$ elements in the data stream seen so far \cite{datar2002}. Since then, a variety of other statistics have been studied in the sliding window model \cite{babcock2003, arasu2004, braverman2007, braverman2010, braverman2012, braverman2013, crouch2013, braverman2015}. However, to our knowledge, all work in this model has been on one-pass approximation algorithms.

In this paper, we extend the sliding window model to allow multiple passes, in turn asking for exact algorithms rather than approximation algorithms. We concern ourselves with applications in which the data set is large but bounded (e.g. stored in external memory). It is not unreasonable to expect that many massive data sets may be intrinsically ``ordered''---for instance, data which was generated in real-time but is being processed later. Therefore it may be natural to wish to compute some statistic for every fixed-size window in the input. And one of the most natural problems is computing the \textsc{MIN/MAX} statistic, or its generalization, the $k^\text{th}$-smallest element in each window. As further motivation, Datar et al. showed that no single-pass algorithm can maintain MIN or MAX in sublinear memory \cite{datar2002}. Allowing a constant-factor approximation, the saving grace of most statistics, only permits a constant factor decrease in memory. Hence, it is natural to rather ask whether allowing multiple passes permits a sublinear-space algorithm.

\subsection{Models and Contributions}

We must define the multi-pass streaming model for sliding window problems. The basic problem is to compute some statistic for each fixed-length window in the input. In the previously studied single-pass streaming model, it is required that each statistic is output immediately after reading the last element in the corresponding window. However, in the multi-pass streaming model this constraint seems to be unnaturally strong. The intended application is processing data which is stored in external memory, for which I/O consisting of linear scans requires the least overhead. Since the output of a sliding window algorithm is linear in the size of the input, it too must be stored in external memory. Therefore the efficiency of printing the output can be measured by the number of passes which the algorithm makes through the output array. With this motivation, we define the multi-pass sliding window model to have a write-only \textit{output} data stream (introduced, to our knowledge, in \cite{francois2014}, though not in the context of sliding windows). For an input of length $N$ with a sliding window of length $K$, the output stream has length $N-K+1$, and an algorithm produces output by making one or several linear scans through the output stream. To avoid defining a mechanism for overwriting, we require that every position in the output stream is written to exactly once. We do not require that the output stream is in any way ``synced'' to the input stream.

For example, we say that a $p$-pass algorithm makes one output pass if the relative order of the output is correct: the first value output by the algorithm must be the statistic for the leftmost window in the input, and so forth. And for a $p$-pass algorithm to use $p$ output passes, a sufficient though not necessary criterion is that in any fixed pass, the window answers are output in order. That is, for any fixed pass, if the answer for $[i, i+K)$ is output during the pass, and the answer for $[j, j+K)$ is output later in the same pass, then $i < j$.

All of our results in the multi-pass streaming model will be for algorithms restricted to a single output pass, unless we explicitly state otherwise.

There is a natural adaptation of the multi-pass streaming model to the two-party communication complexity model, made precise as follows. One party (conventionally known as Alice) is given the first half of the input array, and another party (known as Bob) is given the second half. During the $p$ rounds of communication, Alice and Bob must output the sliding window minimums for all windows to a shared, write-only output stream. Alice may output some, and Bob may output some, but the number of output passes is measured as in the streaming problem. As a technicality, neither Alice nor Bob have read-access to their shared position in the output stream. All of our communication complexity results will be for the single output pass model.

In both models, we will assume that the algorithm has access to the stream size $N$, the window size $K$, and a bound $R$ on the maximum integer value present in the stream, beforehand.

The main problem we consider is computing the $l^\text{th}$ smallest element on all windows of some fixed length. We call this problem \ksmin{}; in the special case of $l=1$, we refer to the problem as \smin{}. The trivial algorithm for \ksmin{} uses $\tilde{O}(K)$ space, where $K$ is the window length. Since $K$ can be linear in $N$, we seek a uniform improvement in complexity, over all $K$. Our first main result is an algorithm which provides this improvement:

\theoremstyle{theorem}
\newtheorem*{thm:ksmin}{Theorem \ref{thm:ksmin}}
\begin{thm:ksmin}
Let $l > 0$ be an integer. Then there is a streaming algorithm for \ksmin{} which uses $O(l)$ input and output passes and $\widetilde{O}(l^{3/2}\sqrt{N})$ space.
\end{thm:ksmin}

We then provide a corresponding improvement in the communication complexity setting:

\newtheorem*{ccgen}{Theorem \ref{ccgen}}
\begin{ccgen}
Let $p > 0$ be an integer. There is a $p$-round communication complexity algorithm for \ksmin{} which uses $\widetilde{O}(pl^2 N^{1/(p-2l-1)})$ communication bits.
\end{ccgen}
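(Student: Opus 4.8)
The plan is to push everything to the midpoint, exactly as in the $\smin$ case. Write $m=\lfloor N/2\rfloor$. Any window lying entirely in Alice's half or entirely in Bob's half is answered locally at no communication cost, so only the $O(\min(K,N))$ windows straddling position $m$ matter. For such a window $[i,i+K)$ put $A_i=\{x_i,\dots,x_m\}$ (Alice's part, which shrinks as $i$ grows) and $B_i=\{x_{m+1},\dots,x_{i+K-1}\}$ (Bob's part, which grows as $i$ grows), and for a rank $r$ let $a_i^{(r)}$, $b_i^{(r)}$ be the $r$-th smallest elements of $A_i$, $B_i$; breaking ties by position, $a_i^{(r)}$ is non-decreasing in $i$ and $b_i^{(r)}$ is non-increasing in $i$. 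The engine is a single primitive: given an Alice-known non-decreasing sequence and a Bob-known non-increasing sequence over an index range of length at most $N$, locate the unique index at which one overtakes the other, and have each party output, on its side of that index, whichever of its two values is the relevant (larger or smaller) one. This ``crossover'' primitive is precisely what solves $\smin$; a $q$-round $t$-ary search — each round the active party transmits the $t-1$ probe values at evenly spaced candidate indices and is told the sub-interval containing the crossing — resolves it with $t=N^{1/q}$ using $\widetilde O(q t)$ bits, and several such searches over shared index ranges can be pipelined together.

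Second, I would peel the selection rank one unit at a time. On any straddling window the overall minimum $\min(A_i\cup B_i)=\min(a_i^{(1)},b_i^{(1)})$ lies among the $l$ smallest, so the $l$-th smallest of the window equals the $(l-1)$-th smallest of the window with its minimum deleted. The minimum is drawn from $A_i$ for small $i$ and from $B_i$ for large $i$, separated by one crossover index, and deleting $\min A_i$ (resp.\ $\min B_i$) preserves the shrinking-Alice (resp.\ growing-Bob) shape of the family; hence on each of the two resulting sub-ranges we recurse. After $d$ peels a sub-range carries a tag $(j,d-j)$ recording how many minima have been removed from the $A$-side and the $B$-side; its answer function is the $(l-d)$-th smallest of $\bigl(A_i\text{ minus its }j\text{ smallest}\bigr)\cup\bigl(B_i\text{ minus its }(d-j)\text{ smallest}\bigr)$, which once $l-d=1$ is just the crossover of $a_i^{(j+1)}$ against $b_i^{(l-j)}$, while the boundary between the $(j,d-j)$ and $(j-1,d-j+1)$ sub-ranges is the crossover of $a_i^{(j)}$ against $b_i^{(d-j+1)}$. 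Since $a_i^{(r)}$ and $b_i^{(r)}$ both increase with $r$, these crossover indices are monotone in the tag, so the sub-ranges at depth $d$ are $d+1$ consecutive intervals and depth $d$ introduces exactly $d$ new crossover points; summing over $d\le l$ there are only $O(l^2)$ crossover instances in all, together with $O(l)$ terminal intervals on which a final $\smin$-type crossover produces the output. The single-output-pass requirement is then automatic: reading left to right, the terminal intervals alternate between an Alice-output stretch and a Bob-output stretch, nested so that each party writes its output positions in increasing order, and all the peeling itself is communication only.

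Third, I would schedule the rounds. The depth-$d$ crossovers live inside intervals delimited by the depth-$(d-1)$ crossovers, so the $l$ depths are sequentially dependent and must be pipelined: run one round of $t$-ary refinement on every live crossover each round, letting a depth-$d$ crossover begin refining inside its still-fuzzy enclosing interval as soon as depth $d-1$ has advanced one more step, so that depth $d$ trails depth $1$ by a bounded number of rounds. Accounting carefully for this pipeline — including the round-trip needed to pass a resolved interval down to the next depth, and a small parity adjustment for the terminal $\smin$ step and the Alice/Bob output handoff — should show that $2l+1$ of the $p$ rounds are consumed establishing the crossover structure, leaving $p-2l-1$ rounds for the finest refinement and hence forcing $t=N^{1/(p-2l-1)}$. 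With $O(l^2)$ crossover instances, each probing with single values and so contributing $\widetilde O\!\bigl(p\,N^{1/(p-2l-1)}\bigr)$ bits, the total is $\widetilde O\!\bigl(p\,l^2\,N^{1/(p-2l-1)}\bigr)$. I expect the main obstacle to be exactly this last step — fitting the $O(l^2)$ interdependent crossover searches and the single-pass output handoff into $p$ rounds with the claimed round count; the structural facts (that peeling the window minimum preserves the family's shape, and that the resulting sub-ranges and their separators are the $O(l^2)$ monotone crossovers) are by comparison routine.
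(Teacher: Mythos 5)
Your approach is genuinely different from the paper's. The paper works with the scalar function $g_b(i) = |\text{Conf}_{l,b}([i,i+K-1])| + |\text{Conf}_{l-1,b}([i,i+K-1])|$, shows (Lemma~\ref{gmonotonicity}) that it is non-decreasing in $i$ and strictly increases at every crossing of $s_l$ past the midpoint, and runs $2l-1$ parallel $D$-ary searches for its level sets --- each probe transmitting an $l$-lowset, i.e.\ $O(l)$ values --- followed by a short verification step to decide which of the $\leq 2l-1$ candidates are genuine crossings. You instead work directly with the $O(l^2)$ order-statistic sequences $a_i^{(r)}$ (non-decreasing in $i$) and $b_i^{(s)}$ (non-increasing in $i$) and locate their pairwise crossovers, each probe being a single value. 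Both routes cost $O(l^2 D)$ communication per round, produce the same set of $\leq 2l-1$ output-handoff points, and match the stated bound; your peeling lemma (that the depth-$d$ tags appear left-to-right as $(d,0),(d-1,1),\dots,(0,d)$, with the boundary between $(j,d-j)$ and $(j-1,d-j+1)$ at the crossover of $a_i^{(j)}$ against $b_i^{(d-j+1)}$, and these boundaries nested correctly inside the depth-$(d-1)$ intervals) is correct, and has the pleasant side-effect of dispensing with the paper's verification rounds. The one thing to repair is your round accounting: you call the depths ``sequentially dependent'' and invoke a pipeline, but there is no dependence. Since $a_i^{(r)}$ is non-decreasing and $b_i^{(s)}$ non-increasing over the \emph{entire} range of straddling windows (not just inside a parent interval), the crossover of any pair $(r,s)$ is a globally well-defined target, so all $O(l^2)$ $D$-ary searches run fully in parallel from round one, alternating Alice/Bob messages in lockstep. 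The rounds spent outside the search phase are simply the $2l-1$ output handoffs as Alice and Bob alternate across the $l$ terminal intervals, plus $O(1)$ rounds of final bookkeeping; this gives $q = p - O(l)$ search rounds with $D = N^{1/q}$, which yields the stated $\widetilde{O}(p\,l^2\,N^{1/(p-2l-1)})$ and in fact has a little slack relative to the paper's accounting. The ``main obstacle'' you flagged is the easy part once the unneeded pipeline is dropped.
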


Theorem~\ref{ccgen} implies that arbitrarily efficient communication algorithms for \smin{} exist, using varying constant numbers of input rounds and a single output pass. In contrast, Theorem~\ref{thm:ksmin} implies that we can solve \smin{} in the streaming model using two input and one output pass, and $\widetilde{O}(\sqrt{N})$ space. But it is not at all clear how to trade passes for space. In fact, it turns out that this disparity is intrinsic, as shown by the following theorem, which describes a nearly tight lower bound for \smin{} in the streaming setting, for any constant number of input passes.

\newtheorem*{thm:lbound}{Theorem \ref{thm:lbound}}
\begin{thm:lbound}
Any algorithm for \smin{} which uses a constant number of input passes and a single output pass has a worst-case space complexity (over all $K$) of $\Omega(\sqrt{N})$.
\end{thm:lbound}

And we partially resolve the complexity of the sliding window majority problem \smaj{} in our new model. Observe that this is the special case of \ksmin{} in which $l = N/2$ and the maximum input value is $R = 1$. It is also a special case of basic statistics such as sum, median, and mode. Unfortunately, under the assumption of single output pass, there is no sublinear space algorithm:

\newtheorem*{thm:majlbound}{Theorem \ref{thm:majlbound}}
\begin{thm:majlbound}
Any algorithm for \smaj{} which uses a constant number of input passes and a single output pass has a worst-case space complexity of $\Omega(N)$.
\end{thm:majlbound}

To achieve the stated streaming upper bounds, our key technical tool is, informally, some notion of monotonicity: the index of the smallest element in a sliding window is non-decreasing as the window indices increase, and this allows interpolation of the minima indices after computing them for a sparse subset of the windows. For the general problem of $l^\text{th}$-smallest elements, it is necessary to develop a weaker invariant of ``near-monotonicity''. Our communication complexity results use these tools to split the input array by a small number of ``critical'' indices, and then solve each subarray independently.

For our lower bounds, we construct information theoretic proofs in the communication complexity model, and apply these to the streaming model. This second step is not entirely trivial, since we wish to prove bounds for multi-pass streaming algorithms, whereas the most convenient communication lower bounds apply to the one-round communication model. Furthermore, as seen in our results, it is in fact possible to solve \smin{} with $o(\sqrt{N})$ memory in the communication complexity model, so a direct application of the communication bound could not possibly work. To address the former issue, we simply localize to a single pass in which significant work was performed. We circumvent the latter barrier with techniques that could generalize to prove sublinear lower bounds for other problems in this model: we divide the input stream into blocks, show that the communication complexity lower bound can be applied independently to many blocks, and compound the resulting bounds.

\subsection{Organization}
The remainder of the paper is organized as follows. In Section~\ref{sec:prelim}, we discuss some notation and preliminaries that will be useful throughout the paper. Section~\ref{sec:min} contains sublinear-space streaming and communication algorithms for \smin{}. In section~\ref{sec:lthsmallest} we present our main algorithmic results, generalizing our $l=1$ algorithms to arbitrary $l$ in both models. In section~\ref{sec:lowerbounds}, we prove a nearly tight lower bound for \smin{}. Finally, we show in Section~\ref{sec:maj} that the trivial linear space algorithm for \smaj{} is nearly optimal.

\section{Preliminaries}\label{sec:prelim}

Throughout this paper, unless we specify otherwise, we will refer to $A$ as our input array of nonnegative integers, $N$ as the number of elements in the input, $K$ as the size of the sliding window, and $R$ as the upper bound on element values in the input. And we use the notation $[a,b]$ to denote the set of integers between $a$ and $b$ inclusive. 

Moving to notation more specific to this paper, we will supplement our streaming algorithms with pseudocode, in which we let $\textsc{Process}(p,j)$ denote the algorithm for processing the $j^\text{th}$ element of the input stream in pass $p$. And when we refer to the ``$p$-pass streaming model'', we mean the multi-pass streaming model restricted to $p$ input passes and one output pass: if several output passes are required, we will make note of this fact.

Finally, we must describe a data structure that will be a useful building block and black box in one of our algorithms. This data structure is typically known as a monotonic queue. \footnote{This data structure is widely known in the world of competitive programming, and has probably been used in algorithms research in the past, but we do not know if it is attributable to a particular paper in the literature.} It is essentially a queue which maintains the invariant that its elements are in strictly increasing order. The following lemma summarizes the relevant properties:

\begin{lemma}\label{lemma:mqueue}
Given a stream of integers $a_1, a_2, \dots$, there is a data structure which supports the operations $\mathrm{insert}(i, a_i)$, $\mathrm{get \mhyphen front}$, and $\mathrm{delete \mhyphen front}$. The inserted indices must be in strictly increasing order. The operation $\mathrm{get \mhyphen front}$ returns the index and value of the smallest element (by value) which has larger index than that of the most recently deleted element; the operation $\mathrm{delete \mhyphen front}$ deletes that element. All operations require amortized constant time, and space usage is at all times linear in the number of insertions minus deletions, ignoring logarithmic factors.
\end{lemma}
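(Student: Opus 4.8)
\emph{Proof sketch.} The plan is to realize the structure as a double-ended queue $Q$ of $(\text{index},\text{value})$ pairs, together with a scalar $d$ holding the index of the most recently $\mathrm{delete\mhyphen front}$-ed element (initially $-\infty$). The maintained invariant is that, reading $Q$ from front to back, both coordinates strictly increase. The operations are implemented as follows. For $\mathrm{insert}(i,a_i)$, repeatedly pop the back of $Q$ while its value is $\ge a_i$, then push $(i,a_i)$ onto the back; since inserted indices strictly increase, $i$ exceeds every index in $Q$, so index-monotonicity is preserved, and the pop loop restores value-monotonicity. $\mathrm{get\mhyphen front}$ simply returns the front pair of $Q$. $\mathrm{delete\mhyphen front}$ sets $d$ to the front's index and pops the front.

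For correctness I would prove that at every moment the front of $Q$ is a smallest-value element among all elements inserted so far whose index exceeds $d$. Two observations suffice. First, every element currently in $Q$ has index $> d$: at the last $\mathrm{delete\mhyphen front}$ we removed the element of smallest index then in $Q$, and all elements present since---whether survivors of that deletion or inserted afterward---have strictly larger index. Second, if an inserted element $e$ with index $> d$ is not in $Q$, it must have been popped from the back during some $\mathrm{insert}$ call by an element $e'$ whose index is strictly larger and whose value is $\le$ that of $e$ (it cannot have been removed by $\mathrm{delete\mhyphen front}$, for then its index would be $\le d$); since $e'$ again has index $> d$, iterating this shows $e$ is dominated by some element of $Q$ of no larger value. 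Hence the minimum value over qualifying elements is attained inside $Q$, and by strict value-monotonicity it is attained at the front.

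For the running time I would use the potential $\Phi = |Q|$: each $\mathrm{insert}$ performs one push (cost $O(1)$, raising $\Phi$ by one) and some back-pops, each absorbed by the matching unit drop in $\Phi$; $\mathrm{get\mhyphen front}$ is $O(1)$ worst-case; $\mathrm{delete\mhyphen front}$ performs one pop absorbed by the drop in $\Phi$. Equivalently, each inserted pair is pushed exactly once and removed at most once, so total work is linear in the number of insertions, giving amortized $O(1)$ per operation. For space, $|Q|$ never exceeds the number of insertions minus the number of removals from $Q$, and removals include every $\mathrm{delete\mhyphen front}$, so $|Q|$ is at most (insertions) $-$ (deletions); each stored pair occupies $O(\log R + \log i)$ bits, which is the claimed bound up to logarithmic factors.

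The only real content is the correctness of the back-popping step---the lower-envelope/domination argument above, together with the bookkeeping of $d$---and I expect that to be the main (though still mild) obstacle; the amortized analysis and the space bound are routine.
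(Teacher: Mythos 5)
Your proof is correct, but there is nothing in the paper to compare it against: the paper states Lemma~\ref{lemma:mqueue} without proof, explicitly treating the monotonic queue as a folklore data structure (see the footnote attached to it), and only sketches in prose how it is used to compute sliding-window minima in the standard RAM model. Your proposal is essentially the canonical argument one would write out: a deque with the strictly-increasing-in-both-coordinates invariant, back-pops on insert, a scalar $d$ recording the last deleted index, the domination/lower-envelope argument showing the minimum qualifying element always survives to the front, a unit potential $\Phi=|Q|$ for the amortized bound, and the observation that $|Q|$ is bounded by insertions minus removals (hence by insertions minus deletions) for the space bound. The one place to be slightly more careful, which you do handle correctly, is the iterated domination step: you need that the dominating element $e'$ was never removed by $\mathrm{delete\mhyphen front}$, which follows because deleted indices increase monotonically so anything deleted has index $\le d$, while $e'$ has index exceeding $e$'s index, which exceeds $d$. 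Iteration then terminates because indices strictly increase at each step. In short, your proof fills a gap the paper leaves open by citation to folklore, and it does so correctly and in the standard way.
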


In the standard model of computation, the monotonic queue is the canonical data structure used to compute sliding window minimums in $O(N)$ time. As input elements are passed by the leading end of the sliding window, they are inserted. As input elements are passed by the lagging end of the sliding window, they are deleted from the front of the queue if necessary. The minimum element in a sliding window is simply the minimum element in the queue. The amortized time complexity is $O(N)$. If the input elements are integers in $[0,R]$, then the space complexity is $O(\min(K,R) (\log N + \log R))$.

It's worth noting why we should not be content with the monotonic queue algorithm in the streaming setting. For small $K$, this algorithm is of course efficient. However, in the ``worst case'', when $K = \Theta(N)$, it can be quite bad: the algorithm uses linear space. In this paper, we'll look at improving this worst-case performance, to find algorithms that are space-efficient for arbitrary $K$.

\section{Algorithms for \smin{}}\label{sec:min}

To introduce some of the key ideas for our algorithms in a simpler setting, we start with the special case of \ksmin{} in which $l = 1$. Datar et al. showed that in the single-pass streaming model, \smin{} requires $\Theta(K)$ space \cite{datar2002}. That is, the trivial algorithm which stores the last $K$ elements is essentially optimal. However, it is possible to solve \smin{} more efficiently with more passes. In section~\ref{sec:basicalg}, we show that there is a $2$-pass streaming algorithm for \smin{} using $O(\sqrt{N} (\log R + \log N))$ space and $O(N)$ time. In section~\ref{sec:smallint}, we describe a $2$-pass streaming algorithm using $O(\sqrt{NR/K} (\log R + \log N))$ space and $O(N)$ time, which is more space efficient if $R = o(K)$.

As is to be expected, our results for the communication complexity model are stronger. In section~\ref{sec:basiccc}, we show that for any odd $p$, in the $p$-round communication complexity model, which is a relaxation of the $\lfloor p/2 \rfloor$ streaming model, \smin{} can be solved with $O(pN^{1/p} \log R)$ bits of communication.

\subsection{A Two-Pass Streaming Algorithm}\label{sec:basicalg}

We start by describing a streaming algorithm that solves \smin{} in sublinear space with two passes through the input. Crucial to the algorithm is monotonicity, which is a motif that will recur in various forms throughout the paper. Fix some input $A$ of length $N$ with window size $K$. For each index $i \in [0, N-K]$, let $\text{low}(i)$ be the minimum value in window $[i,i+K-1]$. Let $f(i)$ be the earliest index in $[i,i+K-1]$ achieving value $\text{low}(i)$.

Note that $f$ is non-decreasing. Therefore computing $f$ on a sparse set of indices in one pass provides some information that may be useful for computing $f$ on all indices in the second pass. This observation is the heart of the following algorithm. In the first pass, we compute $f(i\sqrt{N})$ and $\text{low}(i\sqrt{N})$ for $0 \leq i < \sqrt{N}$. Now we observe that $f$ can now be computed in blocks of $\sqrt{N}$. If $f(i\sqrt{N}) < j < f((i+1)\sqrt{N})$, then $A_j$ can only be the minimum for windows $[i',i'+K)$ with $i\sqrt{N} < i' < (i+1)\sqrt{N}$, so only $\sqrt{N}$ counters need to be maintained at any time. Hence, in the second pass, we scan through the array, maintaining minima for the first $\sqrt{N}$ windows, until index $f(\sqrt{N})$ is reached. Output the first $\sqrt{N}$ minima, initialize minima for the next $\sqrt{N}$ windows, and repeat.

\begin{figure}
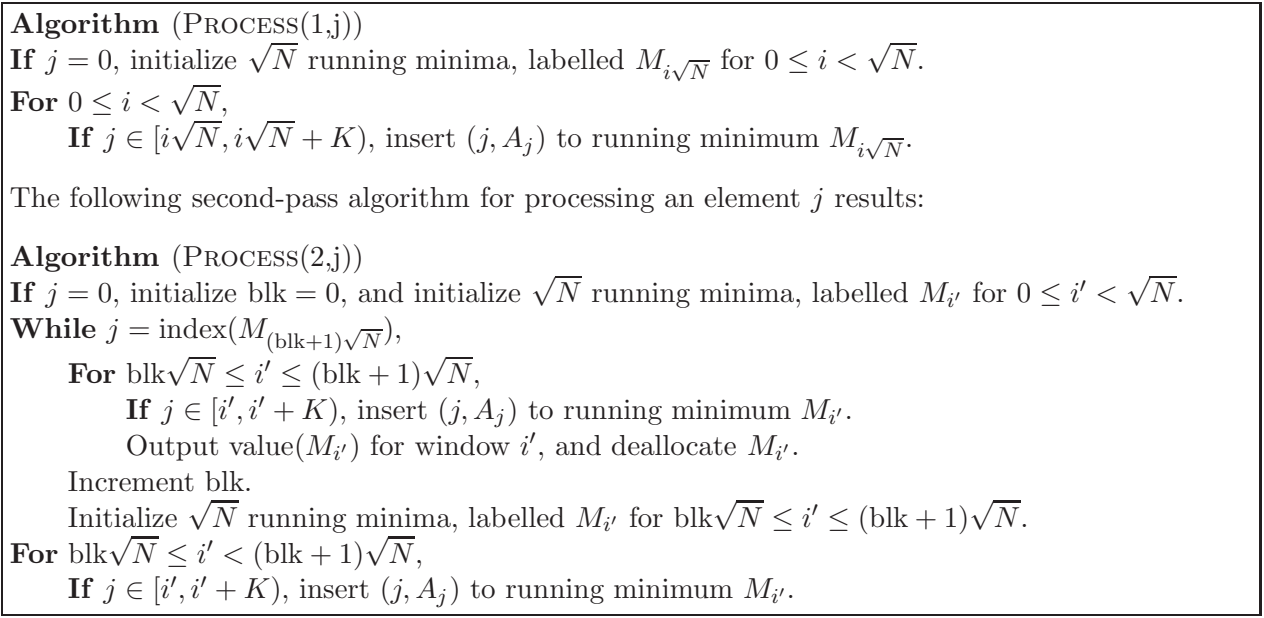

\fbox{\begin{minipage}{\textwidth}
\begin{algorithm}[\textsc{Process}(1,j)]
{\parindent0pt
\textbf{If} $j = 0$, initialize $\sqrt{N}$ running minima, labelled $M_{i\sqrt{N}}$ for $0 \leq i < \sqrt{N}$.

\textbf{For} $0 \leq i < \sqrt{N}$,

\qquad \textbf{If} $j \in [i\sqrt{N},i\sqrt{N}+K)$, insert $(j,A_j)$ to running minimum $M_{i\sqrt{N}}$.
}
\end{algorithm}

 The following second-pass algorithm for processing an element $j$ results:

\begin{algorithm}[\textsc{Process}(2,j)]
{\parindent0pt
\textbf{If} $j = 0$, initialize $\text{blk} = 0$, and initialize $\sqrt{N}$ running minima, labelled $M_{i'}$ for $0 \leq i' < \sqrt{N}$.

\textbf{While} $j = \text{index}(M_{(\text{blk}+1)\sqrt{N}})$,

\qquad \textbf{For} $\text{blk}\sqrt{N} \leq i' \leq (\text{blk}+1)\sqrt{N}$,

\qquad \qquad \textbf{If} $j \in [i',i'+K)$, insert $(j,A_j)$ to running minimum $M_{i'}$.

\qquad \qquad Output $\text{value}(M_{i'})$ for window $i'$, and deallocate $M_{i'}$.

\qquad Increment $\text{blk}$.

\qquad Initialize $\sqrt{N}$ running minima, labelled $M_{i'}$ for $\text{blk}\sqrt{N} \leq i' \leq (\text{blk}+1)\sqrt{N}$.

\textbf{For} $\text{blk}\sqrt{N} \leq i' < (\text{blk}+1)\sqrt{N}$,

\qquad \textbf{If} $j \in [i',i'+K)$, insert $(j,A_j)$ to running minimum $M_{i'}$.
}
\end{algorithm}

\end{minipage}}

\caption{The two-pass algorithm for \smin{}}
\label{figure:twopass}
\end{figure}

This leads to the pseudocode in Figure~\ref{figure:twopass}. Analyzing the space complexity yields the following result:

\begin{proposition}\label{twopass}
In the $2$-pass streaming model, \smin{} can be solved in $O(\sqrt{N} (\log R + \log N))$ space.
\end{proposition}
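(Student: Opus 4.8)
The plan is to bound the space used by the two passes of the algorithm in Figure~\ref{figure:twopass}, showing that at no point are more than $O(\sqrt N)$ monotonic queues active, and each stores at most $O(\sqrt N)$ entries, each of which is a pair $(j, A_j)$ requiring $O(\log N + \log R)$ bits. Combined with $O(\sqrt N)$ auxiliary bookkeeping, this gives the claimed bound.

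First I would analyze the first pass. Here we maintain $\sqrt N$ running minima $M_{i\sqrt N}$, one for each of the sparse windows $[i\sqrt N, i\sqrt N + K)$ with $0 \le i < \sqrt N$. By Lemma~\ref{lemma:mqueue}, the space used by a monotonic queue is linear (up to log factors) in the number of live insertions. A subtle point worth making: although each window has length $K$ (possibly $\Theta(N)$), the monotonic queue invariant guarantees that the number of stored elements never exceeds the number of distinct values in the window, and in any case I claim we never need more than $\sqrt N$ entries per queue here — but actually the cleaner argument is that the total number of insertions across \emph{all} $\sqrt N$ queues during the entire first pass is $O(N)$ (each index $j$ is inserted into at most $\lceil K/\sqrt N\rceil$... ) — here I need to be a little careful. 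The honest bound is: at any fixed time $j$, window $[i\sqrt N, i\sqrt N + K)$ contains $j$ for a contiguous range of $i$, and each such queue holds at most $\min(K, j+1) \le K$ elements naively; to get $\widetilde O(\sqrt N)$ total I should invoke the monotonic-queue property that distinct stored values are increasing, so storage per queue is $O(\min(K,R))$, and note $\sqrt N$ queues times $\widetilde{O}(1)$... This does not immediately close. The correct observation, which the preceding text hints at, is that we should bound stored entries per queue by $\sqrt N$: once more than $\sqrt N$ consecutive indices have been inserted into $M_{i\sqrt N}$ without a smaller one appearing, the earliest cannot be $f$ of any relevant... — no. I will instead simply argue each queue stores $O(\sqrt N)$ entries because we only ever need to distinguish $f(i\sqrt N)$ up to the resolution needed, OR — the clean route — observe that the first pass queues can be flushed: we only need $f(i\sqrt N)$ and $\mathrm{low}(i\sqrt N)$, the final answers, so each queue needs just $O(1)$ space once the window has passed, and while active holds at most... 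Here is where the real care lies.

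Let me restate the key step cleanly. The main obstacle is establishing that each active monotonic queue holds only $\widetilde O(\sqrt N)$ entries, not $\widetilde O(K)$. The resolution: in the first pass, for queue $M_{i\sqrt N}$, we never perform $\mathrm{delete\mhyphen front}$, so it could grow to $\Theta(K)$ — \emph{unless} we observe that for the purpose of computing $\mathrm{low}(i\sqrt N)$ we only need to track the running minimum \emph{value and index}, i.e. $O(1)$ space, not the full queue. So the first pass uses only $O(\sqrt N (\log N + \log R))$ space: $\sqrt N$ (value, index) pairs. In the second pass, at any time we hold running minima $M_{i'}$ for $i'$ in two consecutive blocks, i.e. at most $2\sqrt N + 1$ windows. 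For each such $i'$, we again only need the running minimum over the portion of $[i', i'+K)$ seen so far restricted to indices $\ge f(\mathrm{blk}\sqrt N)$ — which by the monotonicity argument in the text is the true minimum, since $A_j$ for $j < f(\mathrm{blk}\sqrt N)$ cannot be the minimum of any window in block $\mathrm{blk}$. So again $O(1)$ space per window, and $O(\sqrt N)$ windows, giving $O(\sqrt N(\log N + \log R))$ total. I would then observe the correctness claim — that when the \textbf{While} loop in \textsc{Process}$(2,j)$ fires at $j = \mathrm{index}(M_{(\mathrm{blk}+1)\sqrt N})= f((\mathrm{blk}+1)\sqrt N)$, every window $i'$ in block $\mathrm{blk}$ has had its true minimum computed, because $f$ non-decreasing implies $f(i') \in [f(\mathrm{blk}\sqrt N), f((\mathrm{blk}+1)\sqrt N)]$, so the minimizing index lies in the range scanned while $M_{i'}$ was active — and conclude. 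The whole argument is short once the "running minimum needs only $O(1)$ words" point is made explicit; that is the one place a careless reading of Lemma~\ref{lemma:mqueue} would mislead, so I would emphasize it.
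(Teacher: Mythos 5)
Your proposal is correct and, once you cut through the visible backtracking, lands on exactly the argument the paper uses. The key realization you eventually isolate — that the $M_{i'}$ objects are \emph{running minima} (a single $(\mathrm{index},\mathrm{value})$ pair, $O(\log N + \log R)$ bits each), not monotonic queues, so no $\widetilde O(K)$ blowup occurs — is precisely what the paper's pseudocode in Figure~\ref{figure:twopass} intends, and the paper's one-line space accounting (``$O(\sqrt N)$ variables, each in $[0,R]$'') presupposes it. Your correctness argument via monotonicity of $f$ (that $f(i') \in [f(\mathrm{blk}\sqrt N), f((\mathrm{blk}+1)\sqrt N)]$ for every $i'$ in the block, so the minimizing index is scanned while $M_{i'}$ is live) is the same as the paper's. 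The detours in the middle of your write-up, where you briefly reason as if the first pass held full monotonic queues, should simply be deleted from a final version; they reflect a misreading you corrected, not a gap in the argument you ultimately give.
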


\begin{proof}
Correctness has essentially been shown already: consider any block index $i$ and index $j \in [i\sqrt{N},(i+1)\sqrt{N})$. Then the running minimum for the window starting at $j$ is maintained from when index $f(i\sqrt{N})$ is reached in the stream, to when index $f((i+1)\sqrt{N})$ is reached, inclusive. By the monotonicity observation, this is sufficient: by the time the running minimum is output, it is equal to $\text{low}(j)$, as desired.

At any time during either pass, only $O(\sqrt{N})$ variables are in memory, each taking a value in $[0,R]$. Therefore the space complexity of each pass is $\sqrt{N}\log R$.
\end{proof}

Since each stream element may trigger updates to $\sqrt{N}$ running minimums, the time complexity of the above algorithm is $O(\sqrt{N})$ per stream element. It is not difficult to improve the overall time complexity of the above algorithm to $O(N)$ using monotonic queues, but we will not elaborate further on time efficiency.

\subsection{A Small-Integer Algorithm}\label{sec:smallint}

%Our lower bounds do not address the case where $R = o(K)$, for good reason: it is possible to improve upon the $O(N^{1/2})$ space complexity in this special case. 
When the size of stream elements is small (in particular, $R = o(K)$), it is possible to improve upon the above $\Theta(N^{1/2})$ space complexity. As the following lemma formalizes, if the sliding window minimum is restricted to a small set of values, then it cannot change ``too often." Like before, we let $\text{low}(i)$ refer to the minimum of window $[i,i+K-1]$.

\begin{lemma}\label{boundedchange}
Let $A$ be an array of $N$ integers in $[0,R]$, and let the window size be $K$. Then there are $O(NR/K)$ indices $i \in [0,N-K]$ at which $$\text{low}(i) \neq \text{low}(i+1).$$
\end{lemma}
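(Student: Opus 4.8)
The plan is to bound separately the number of \emph{down-steps}, meaning indices $i$ with $\text{low}(i+1)<\text{low}(i)$, and the number of \emph{up-steps}, meaning indices $i$ with $\text{low}(i+1)>\text{low}(i)$; I would show each is $O(NR/K)$, and since every index counted by the lemma is exactly one of the two, their sum gives the desired bound.

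For down-steps, I would use that passing from window $[i,i+K-1]$ to $[i+1,i+K]$ deletes only $A_i$ and inserts only $A_{i+K}$, and deleting an element cannot decrease a minimum; hence if $\text{low}(i+1)<\text{low}(i)$ the new minimum must be the freshly inserted element, so $A_{i+K}=\text{low}(i+1)$. Then I would fix a value $v$ and argue that any two down-steps $i<i'$ with $\text{low}(i+1)=\text{low}(i'+1)=v$ satisfy $i'\ge i+K+1$: at the later step $\text{low}(i')>v$, so every entry of $[i',i'+K-1]$ exceeds $v$, yet $A_{i+K}=v$, which forces $i+K\notin[i',i'+K-1]$, and combined with $i<i'$ this gives $i'>i+K$. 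So there are at most $N/K+1$ down-steps for each fixed $v$, and summing over the $O(R)$ possible values of $v$ yields $O(NR/K)$ down-steps in total.

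For up-steps I would run the symmetric argument at the left end of the window. If $\text{low}(i+1)>\text{low}(i)$, then deleting $A_i$ is what raised the minimum, so $A_i=\text{low}(i)$ and every other entry of $[i,i+K-1]$---that is, all of $[i+1,i+K-1]$---is strictly greater than $\text{low}(i)$. Fixing $v$ and taking up-steps $i<i'$ with $\text{low}(i)=\text{low}(i')=v$, since $A_{i'}=v$ while all of $[i+1,i+K-1]$ exceeds $v$ we cannot have $i'\in[i+1,i+K-1]$, hence $i'\ge i+K$; again at most $N/K+1$ up-steps per value, so $O(NR/K)$ overall.

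The step I expect to be the crux is correctly extracting the ``pinning'' structure from a strict change of value---at a down-step, that a specific recently-arrived element is frozen at value $v$ at a known position, and at an up-step, that the leftmost element of the window is frozen at $v$ and is the strict minimizer of its window. Once that is set up, the spacing argument (two equal-value pins cannot lie within $K$ of each other, since one would then sit inside the other's window and contradict its minimality) is essentially immediate. It is worth noting that the bound is only meaningful when $R\le K$, since for $R\ge K$ one trivially has at most $N-K+1\le NR/K$ changes, so no special handling of degenerate parameter ranges is needed.
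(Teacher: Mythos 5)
Your proof is correct, and it takes a genuinely different route from the paper's. The paper decomposes by \emph{position}: it first shows that after a decrease, $\text{low}$ cannot increase for the next $K-1$ windows, so within any block of $K$ consecutive window indices all increases precede all decreases, and since $\text{low}$ is confined to $[0,R]$ there are $O(R)$ changes per block, hence $O(NR/K)$ overall. You instead decompose by \emph{value}: for each $v \in [0,R]$, you pin a specific array element to the value $v$ at each down-step (resp.\ up-step) and show that two pins at the same value must be $\geq K$ apart, because one pinned element would otherwise lie inside the later window and contradict its strict minimality; this gives $O(N/K)$ down-steps and $O(N/K)$ up-steps per value, and summing over $O(R)$ values again gives $O(NR/K)$. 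Both arguments are short and elementary; the paper's version makes the qualitative structure ``within a block, increases come before decreases'' explicit, which is a pleasant reusable observation, while yours has the virtue of isolating the concrete witness ($A_{i+K}$ or $A_i$) for each change and using only a pairwise separation argument, which some readers may find more transparent. Your closing remark about the bound being vacuous for $R \geq K$ is also correct and harmless.
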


\begin{proof}
Suppose $\text{low}(i) > \text{low}(i+1)$ for some index $i$. Then $\text{low}(i+1) = A_{i+K}$. But for each index $j \in [i+1,i+K]$, the value $A_{i+K}$ is in the window $[j,j+K-1]$. Hence $\text{low}(j) \leq \text{low}(i+1)$ for $j \in [i+1,i+K]$. So after any decrease, $\text{low}(i)$ cannot increase in the next $K-1$ windows.

Consider any $K$ consecutive windows. By the above fact, every increase must occur before every decrease. But $\text{low}(i)$ can increase at most $R$ times without any interleaved decreases, and it can decrease at most $R$ times without any interleaved increases. So $\text{low}(i)$ changes $O(R)$ times among these $K$ windows. It follows that $\text{low}(i)$ can change only $O(NR/K)$ times in total.
\end{proof}

To use this, we divide the interval $[0,R]$ into $P$ equally sized buckets, where $P$ is a constant we will pick later. In the first pass, we use a monotonic queue to compute for each window which bucket its minimum is in. To save space, we only store the indices where the minimum changes (and the new minima). Now let $i_1,\dots,i_k$ be the indices at which the minimum increases; these can be computed from the list $L$. For ease of notation, let $i_{k+1} = N$. In the second pass, we divide the input by these indices, and maintain a monotonic queue for each subarray, while ensuring that the queue only ever contains elements from a single bucket.

Now let $i_1,\dots,i_k$ be the indices at which the minimum increases; these can be computed from the list $L$. For ease of notation, let $i_{k+1} = N$. In the second pass, we divide the input by these indices, and maintain a monotonic queue for each subarray. The key is that we can ensure the queue only ever contains elements from a single bucket. More formally: for each $k$ sequentially, read the input stream until index $i_k$, maintaining a monotonic queue with window $K$, inserting all elements which do not lie in larger buckets than the front of the queue. After processing any element $A_j$ with $j-K+1 \geq i_{k-1}$, output the front of the queue as the answer for window $[j-K+1, j]$. And before processing element $A_{i_k}$, output the front of the queue as the answer for each remaining window $[i_\text{out},i_\text{out}+K)$ until the output pointer $i_\text{out}$ reaches $i_k$, deleting the front of the queue each time it exits the current output window. Then increment $k$, proceeding as above until the entire stream is processed. See Figure~\ref{figure:smallintcode} for the pseudocode.

\begin{figure}
\fbox{\begin{minipage}{\textwidth}

\begin{algorithm}[\textsc{Process}$(1,j)$]
{\parindent0pt
\textbf{If} $j = 0$, initialize an empty monotonic queue MQ, and an empty list $L$.

\textbf{Insert} $\left (j, \left \lfloor \frac{PA_j}{R} \right \rfloor \right)$ to MQ.

\textbf{While} $\text{index}(\text{MQ}.\text{front}) \leq j-K$, \textbf{delete} front of MQ.

\textbf{If} $L$ is empty or $\text{value}(\text{MQ}.\text{front}) \neq \text{value}(L.\text{last})$, append $\text{MQ}.\text{front}$ to $L$.
}
\end{algorithm}

\begin{algorithm}[\textsc{Process}$(2,j)$]
{\parindent0pt
\textbf{If} $j = 0$, initialize an empty monotonic queue MQ, and initialize $\text{blk} = 0$.

\textbf{While} $\text{index}(\text{MQ}.\text{front}) \leq j-K$, \textbf{delete} front of MQ.

\textbf{If} $\left \lfloor \frac{PA_j}{R} \right \rfloor \leq \left \lfloor \frac{P \cdot \text{value}(\text{MQ}.\text{front})}{R} \right \rfloor$, \textbf{insert} $\left (j, A_j \right)$ to MQ.

\textbf{If} $j - K + 1 \geq i_\text{blk}$, output $\text{value}(\text{MQ}.\text{front})$ for window $[j-K+1, j]$.

\textbf{If} $j = i_\text{blk+1} - 1$,

\qquad \textbf{For} $\max(i_\text{blk}, j - K + 2) \leq i < i_\text{blk+1}$,

\qquad \qquad \textbf{While} $\text{index}(\text{MQ}.\text{front}) < i$, \textbf{delete} front of MQ.

\qquad \qquad Output $\text{value}(\text{MQ}.\text{front})$ for window $[i,i+K)$.

\qquad Increment blk.
}
\end{algorithm}

\end{minipage}}

\caption{The small-integer algorithm for \smin{}}
\label{figure:smallintcode}
\end{figure}

\begin{claim}
The above algorithm is correct.
\end{claim}

\begin{proof}
Correctness of the first pass follows from Lemma~\ref{lemma:mqueue}. For the second pass, two issues must be argued. First, we must show that after reading element $i_\text{blk+1}-1$, we can read off $\text{low}(i)$ for each $i < i_\text{blk+1}$ from the queue. This holds since $\text{low}(i_\text{blk+1}-1)$ is in a smaller bucket than $\text{low}(i_\text{blk+1})$, so $A_{i_\text{blk+1}-1}$ is smaller than every element in range $[i_\text{blk+1},i_\text{blk+1}+K)$.

Second, we must show that it's not necessary to insert an element $A_j$ to the queue if $A_j$ lies in a bucket larger than the bucket of the front of the queue. Let $j'$ be the index of the element at front of the queue, and suppose $A_j$ lies in a larger bucket than $A_{j'}$. Observe that $j' > j - K$ and $j' \geq i_c$, where $c$ is such that $j \in [i_c, i_{c+1})$. Consider the window starting at $\max(j-K+1, i_c)$. Its minimum is at most $A_{j'}$, so the minimum lies in a smaller bucket than $A_j$. But the minimum's bucket only increases at $i_1,\dots,i_k$, so for any window $[i,i+K)$ with $\max(j-K+1,i_c) \leq i \leq j$, the minimum lies in a smaller bucket than $A_j$. Therefore $A_j$ is the minimum element of no window containing it, so inserting $A_j$ to the monotonic queue is unnecessary.
\end{proof}

\begin{proposition}
In the $2$-pass streaming model, \smin{} can be solved in $O(\sqrt{NR/K} (\log R + \log N))$ space.
\end{proposition}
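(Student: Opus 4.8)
Correctness of the algorithm in Figure~\ref{figure:smallintcode} has already been established by the preceding Claim, so the whole task is to bound the space as a function of the bucket count $P$ and then optimize $P$. I would first dispatch the regime $RK \le N$ separately: here $R^2 \le NR/K$, so $R \le \sqrt{NR/K}$, and the plain monotonic queue of Lemma~\ref{lemma:mqueue} run with window $K$ (one input pass, one output pass) already solves \smin{} in $O(\min(R,K)(\log N+\log R)) = O(\sqrt{NR/K}(\log N+\log R))$ space. So assume $RK > N$ and set $P = \lceil \sqrt{RK/N}\,\rceil$; one checks $1 \le P \le R$ and $R/P = \sqrt{RN/K} \ge 1$ in this regime.

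There are three contributions to the space of the bucketed algorithm: (i) the first-pass monotonic queue; (ii) the list $L$ of bucket-change points, which must be retained from the first pass into the second (equivalently, the increase-indices $i_1,\dots,i_k$ derived from it); and (iii) the second-pass monotonic queue. For (i): the first-pass queue stores (index, bucket) pairs whose bucket values are strictly increasing and lie in $\{0,\dots,P\}$, hence at most $P+1$ elements, costing $O(P(\log N + \log R))$ bits. For (ii): since $\lfloor P\,\text{low}(i)/R\rfloor$ is a monotone function of $\text{low}(i)$, any decrease of the bucket forces a decrease of the value, so the argument in the proof of Lemma~\ref{boundedchange} goes through verbatim with the value range $R$ replaced by the number of buckets $P$; this gives $O(NP/K)$ indices where the bucket of $\text{low}(\cdot)$ changes, so $|L| = O(1 + NP/K)$ and storing it costs $O((1+NP/K)(\log N + \log R))$ bits.

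The crux is (iii), and I expect this to be the main obstacle: I would prove that at every moment of the second pass all elements currently in MQ lie in one common bucket. Given this, since a bucket contains $O(R/P)$ distinct integers and MQ holds strictly increasing values, MQ has $O(R/P)$ elements, costing $O((R/P)(\log N + \log R))$ bits. The invariant is preserved trivially by the front-deletions (the queue only shrinks) and by the boundary clean-up when $j = i_\text{blk+1}-1$ (the queue is pared down to at most the single element $A_{i_\text{blk+1}-1}$). The one nontrivial case is an insertion of $A_j$: we insert only when $\lfloor PA_j/R\rfloor \le \lfloor P\cdot\text{value}(\text{MQ}.\text{front})/R\rfloor$; if $A_j \le \text{value}(\text{MQ}.\text{front})$ then inserting $A_j$ evicts every element (all are $\ge A_j$) and MQ collapses to $\{(j,A_j)\}$; otherwise $A_j > \text{value}(\text{MQ}.\text{front})$, which forces the bucket of $A_j$ to be $\ge$ that of the front and hence \emph{equal} to it, and then the insertion only appends $A_j$ after evicting larger-valued elements of that same bucket. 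In every case MQ stays inside a single bucket.

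Summing (i)--(iii) gives total space $O((P + NP/K + R/P)(\log N + \log R))$. Since $K \le N$ we have $P \le NP/K$, and with $P = \Theta(\sqrt{RK/N})$ both $NP/K$ and $R/P$ equal $\Theta(\sqrt{NR/K})$ (indeed $NP/K = \sqrt{NR/K}$ and $R/P = \sqrt{NR/K}$ up to constants), so the total is $O(\sqrt{NR/K}(\log N + \log R))$, as claimed. Beyond the single-bucket invariant, the only point that is easy to miss is that the $|L|$ bound must be the bucketed refinement of Lemma~\ref{boundedchange} rather than the raw $O(NR/K)$ bound — otherwise the retained list alone would blow the budget.
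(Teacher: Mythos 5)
Your proof is correct and follows essentially the same approach as the paper: bound the first-pass monotonic queue by the number of buckets $P$, bound the retained list $L$ via the bucketed specialization of Lemma~\ref{boundedchange} (with $R$ replaced by $P$), prove that the second-pass queue lives entirely inside one bucket and hence has size $O(R/P)$, and balance with $P = \Theta(\sqrt{RK/N})$. You are slightly more careful than the paper in two places — you dispatch the degenerate regime $RK \le N$ separately (where $P$ would fall below $1$) by falling back to the plain monotonic queue, and you give an explicit inductive case analysis for the single-bucket invariant whereas the paper gives a one-shot argument — but neither changes the substance of the proof.
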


\begin{proof}
The first pass uses $O(P (\log P + \log N))$ space for the monotonic queue, and by Lemma~\ref{boundedchange}, it uses $O((NP/K) (\log P + \log N))$ space to store the approximate minima. 

For the second pass, we must bound the length of the monotonic queue. Suppose $A_i$ has just been inserted into the queue, for some arbitrary index $i$. Then $A_i$ is not in a larger bucket than the front element of the queue. But monotonic queues are kept sorted in increasing order, so no element of the queue is in a larger bucket. And since the queue is monotonic, no element can be in a smaller bucket. Thus all elements of the queue are in the same bucket, so the queue has size at most $R/P$, the size of the bucket. 

We conclude that the space used by the second pass is $$O((R/P + NP/K)(\log P + \log N)),$$ to store the queue and the information from the first pass. The space used by the first pass is no greater. The optimum choice for $P$ is $\sqrt{RK/N}$, yielding space complexity $O(\sqrt{NR/K}(\log R + \log N)$.
\end{proof}

\subsection{A Two-Party Communication Algorithm}\label{sec:basiccc}

The streaming algorithm for \smin{} presented in Theorem~\ref{twopass} was in fact motivated by a similar algorithm for the communication complexity analogue. The key idea is the same: the index of the minimum is non-decreasing. In the two-party communication model, the implication is that there exists some index $i_\text{last}$ such that the minimum of window $[i,i+K)$ lies in Alice's input if and only if $i \leq i_\text{last}$. Thus for $i \leq i_{\textrm{last}}$, Alice can determine the minimum with no further information from Bob. And for $i > i_{\textrm{last}}$, Bob can determine it with no further information from Alice. So essentially, the task of Alice and Bob is to determine $i_{\textrm{last}}$ efficiently. 

In the following $3$-round algorithm, Alice and Bob pinpoint $i_\text{last}$ through a $D$-ary search, where $D = N^{1/3}$. Let $a_i$ denote the minimum of the last $i$ elements of Alice's input, and let $b_j$ denote the minimum of the first $j$ elements of Bob's input.

\begin{algorithm}
{\parindent0pt
1. Alice sends to Bob $a_{iD^2}$ for $1 \leq i \leq D/2$.

2. Bob determines the unique $i'$ such that $N/2 - i_\text{last}$ is in the range $[i'D^2, (i'+1)D^2)$. Next, Bob sends to Alice $i'$ along with $b_{K - i'D^2 - jD}$ for each $0 \leq j \leq D$.

3. Alice determines the unique $j'$ such that $N/2 - i_\text{last}$ is in the range $$[i_1, i_2] = [i'D^2 + j'D, i'D^2 + (j'+1)D].$$ Alice outputs the minimums for all windows $[i, N/2)$ such that $i \leq i_1$. Then she sends $j'$ to Bob, along with each $a_i$ for $i_1 \leq i \leq i_2$.

4. Bob outputs the minimums for all windows $[N/2, i+K)$ such that $i > i_1$.
}
\end{algorithm}

The feasibility of the algorithm follows from the fact that the minimum of a window $[i,i+K)$ is equal to $\min(a_{N/2-i}, b_{i+K-N/2})$. So for instance, when Alice sends $a_{iD^2}$, Bob can determine whether $i_\text{last} \leq N/2 - iD^2$ or not. The communication used is $O(D \log R + \log N)$, and it's clear that the algorithm can be extended to any odd number of rounds. The below proposition follows.

\begin{proposition}\label{cc3}
For any odd $p$, in the $p$-round communication complexity model, \smin{} can be solved using $O(pN^{1/p} \log R)$ bits of communication.
\end{proposition}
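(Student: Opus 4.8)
The plan is to formalize the $D$-ary search sketched in the paragraph preceding the statement and verify the communication cost carefully. First I would set up the notation: for a general odd $p = 2q+1$, let $D = N^{1/p}$, and define $a_i$ as the minimum of the last $i$ elements of Alice's input and $b_j$ as the minimum of the first $j$ elements of Bob's input, as in the excerpt. The central fact to record is that the minimum of window $[i, i+K)$ equals $\min(a_{N/2 - i}, b_{i + K - N/2})$ whenever the window straddles the midpoint, and is determined entirely by one party otherwise; and crucially that $f$ (or equivalently the ``split point'' $i_\text{last}$) is non-decreasing, so there is a single threshold index $i_\text{last}$ with the property that window $[i,i+K)$ takes its minimum from Alice's side iff $i \le i_\text{last}$. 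This monotonicity is what makes a search strategy correct.

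Next I would describe the search as $q$ rounds of narrowing followed by a final ``reveal'' round. Maintain the invariant that after round $r$ both parties know an interval $I_r$ of length $D^{p-r}$ (up to constant factors) that contains $N/2 - i_\text{last}$. In an odd-indexed round Alice subdivides the current interval into $D$ (or $D/2$) equal pieces and sends the relevant $a$-values — one per subdivision point, each costing $O(\log R)$ bits — so that the other party can locate which piece contains the threshold by comparing against its own $b$-values; in an even-indexed round Bob does the symmetric thing. After $q$ such rounds the interval has length $O(D)$; in the final round the party holding the ``Alice side'' sends all $O(D)$ of the $a_i$ values inside this last interval, which lets both parties determine $i_\text{last}$ exactly and output every window minimum: Alice outputs all windows with $i \le i_\text{last}$ (she knows these, since those minima lie in her input or in the revealed range), Bob outputs the rest. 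One should check the single-output-pass requirement — Alice outputs a prefix of the output stream and Bob the complementary suffix, so the outputs appear in order across the two parties.

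For the cost bound: each of the $p$ rounds sends $O(D)$ words plus possibly an index, for a total of $O(D \log R + \log N)$ bits per round and $O(p D \log R + p \log N) = O(p N^{1/p} \log R)$ bits overall, absorbing the $\log N$ term since $N^{1/p} \log R$ dominates (or stating the bound with an additive $\log N$ and noting it is $\widetilde O$). I would also remark why ``$p$-round communication is a relaxation of the $\lfloor p/2\rfloor$-pass streaming model'': a streaming algorithm's memory contents at the midpoint, replayed once per pass, gives a communication protocol with two messages per pass.

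The main obstacle I expect is bookkeeping rather than any conceptual difficulty: getting the interval lengths, the off-by-one's in the window-to-index translation $i \mapsto N/2 - i$ and $i \mapsto i + K - N/2$, and the $D/2$-versus-$D$ discrepancy in the first round all consistent, and making sure the endpoints of each search interval are handled so that the threshold is never ``lost'' at a boundary. A secondary point worth a sentence is confirming that each party actually has enough information to carry out its step — e.g. that Bob, upon receiving $a_{iD^2}$, can compare it to $b_{K - iD^2}$ to decide which side of the subdivision point $i_\text{last}$ lies on — which again follows from the $\min(a,b)$ decomposition plus monotonicity. None of this requires heavy computation, so the proof should be short once the invariant is stated cleanly.
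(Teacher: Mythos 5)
Your proposal matches the paper's approach: a $D$-ary search for the threshold $i_{\text{last}}$ via the decomposition $\min(a_{N/2-i},\,b_{i+K-N/2})$ and the monotonicity of the split point, with rounds alternating between the parties, each shrinking the search interval by a factor $D = N^{1/p}$, and a final round revealing the $O(D)$ prefix minima inside the surviving interval (the paper spells this out for $p=3$ and waves at the extension; you correctly formalize it). One bookkeeping slip: you write ``$q$ rounds of narrowing'' where $q=(p-1)/2$, but your own invariant $|I_r|\approx D^{p-r}$ and the stated cost $O(pN^{1/p}\log R)$ both require $p-1$ narrowing rounds followed by one reveal round, giving $p$ rounds total --- the rest of the argument is internally consistent with $p-1$, so this is a typo rather than a gap.
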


\section{Algorithms for \ksmin{}}\label{sec:lthsmallest}

Unfortunately, the arguments of Theorem~\ref{twopass} and Proposition~\ref{cc3} do not directly generalize to \ksmin{} for $l>1$, since the indices of the $l^\text{th}$ smallest elements in successive windows are not necessarily non-decreasing. However, we can prove a weaker cousin of monotonicity for $l^\text{th}$ smallest elements, that still yields nontrivial algorithms in both the streaming model and communication model.

\subsection{Generalizing Monotonicity to \ksmin{}}\label{sec:gmonotonicity}

In this subsection, we'll develop the ``near-monotonicity'' results which we need for our \ksmin{} algorithms. A few definitions are needed first.

\begin{definition}
Given an array $A$ of $M$ integers indexed $1 \dots M$, let the \textit{rank} of the $i^\text{th}$ element $A_i$ be the number of elements $A_j$ such that either $A_i < A_j$ or $A_i = A_j$ and $j \leq i$.
\end{definition}
That is, the rank of $A_i$ is the $1$-index of $A_i$ when $A$ is sorted and ties are broken by location in the original array.

Fix an array $A$ of $N$ integers, and fix some window size $K$.

\begin{definition}
For any indices $i,j$ into $A$ with $i \leq j$, let $s_l([i,j])$ be the index $i'$ such that $A_{i'}$ has rank $l$ in $A_{[i,j]}$. Furthermore, let the \textit{$l$-lowset} of window $[i,j]$, denoted $S_l([i,j])$, be the set of indices $i' \in [i,j]$ such that the rank of $A_{i'}$ in $A_{[i,j]}$ is at most  $l$. That is, $$S_l([i,j]) = \{s_1([i,j]), \dots, s_l([i,j])\}.$$
\end{definition}

%\begin{definition}
%For any window $[i,j]$, let the \textit{$l$-index}, denoted $\text{index}_l([i,j])$, be $$|\{1 \leq k \leq l \mid s_k([i,j]) \leq s_l([i,j])\}.$$
%\end{definition}

Suppose we sort the $l$-lowset of a window $[i,j]$ from smallest index to largest index (rather than from smallest rank to largest rank). The following lemma states that under this sorting, the indices of the $l$-lowsets of sliding windows $[i,i+K)$ are elementwise non-decreasing.

\begin{lemma}\label{nondec}
For any fixed $l$, the sets $S_l([i,i+K-1])$ are non-decreasing as $i$ increases. That is, for each $i$, the $k^\text{th}$ smallest index in set $S_l([i,i+K-1])$ is less than or equal to the $k^\text{th}$ smallest index in set $S_l([i+1,i+K])$.
\end{lemma}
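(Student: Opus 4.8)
The plan is to prove the lemma by a direct exchange/comparison argument between the two windows $W = [i,i+K-1]$ and $W' = [i+1,i+K]$, which differ only in that $W$ contains index $i$ but not $i+K$, while $W'$ contains $i+K$ but not $i$. Write the $l$-lowset of $W$, sorted by index, as $x_1 < x_2 < \dots < x_l$, and that of $W'$ as $y_1 < y_2 < \dots < y_l$; the goal is $x_k \le y_k$ for every $k$. I would argue by contradiction: suppose $x_k > y_k$ for some $k$. Then the indices $y_1, \dots, y_k$ are all strictly less than $x_k$, and in particular none of them equals $i+K$ (since $i+K > x_k > y_k$), so all of $y_1,\dots,y_k$ lie in the common part $[i+1, i+K-1]$ of the two windows, hence also in $W$.

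The key quantitative step is to compare ranks within $W$ versus within $W'$ for indices lying in the overlap $[i+1,i+K-1]$. For such an index $m$, its rank in $W$ and its rank in $W'$ differ by at most $1$: passing from $W$ to $W'$ we remove $A_i$ and add $A_{i+K}$, each of which can shift $m$'s rank by at most one, and moreover the rank in $W'$ is at least the rank in $W$ minus $1$. Now, each of $y_1,\dots,y_k$ has rank $\le l$ in $W'$, so each has rank $\le l+1$ in $W$ — but that is too weak on its own. I would instead use the sharper observation that the indices with rank $\le k$ among $\{y_1,\dots,y_l\}\cap[i+1,i+K-1]$, when we also throw index $i$ back in, still have controlled rank: more carefully, consider the rank in $W$ restricted to the sub-multiset of values at indices $\{i\} \cup (W' \setminus\{i+K\})$. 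The cleanest route is to fix the threshold value: let $v$ be the value $A_{x_k}$, i.e. the $k$th smallest (by the tie-breaking rank) in $W$ is tied to some cutoff; then at most $l-k$ indices of $W$ are "above" $x_k$ in rank. Meanwhile $y_1,\dots,y_k$ all have rank $\le k$ in $W'$... and the contradiction comes from counting: $\{y_1,\dots,y_k\}$ together with those $x_j$ for $j\le k-1$ wait—

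Let me restate the mechanism I expect to work: since $x_k > y_k$, every $y_1,\dots,y_k$ is an index in $W$ strictly less than $x_k$; I want to show at least $k$ indices of $W$ that are $< x_k$ have rank $\le l$ in $W$, which combined with $x_k$ itself (rank $\le l$, by definition of the lowset) would give $k+1$ indices among the first $k+1$ sorted-by-index positions — but the lowset sorted by index has $x_k$ in position $k$, so exactly $k-1$ indices of the lowset precede it, a contradiction once I show each $y_t$ (for $t \le k$) is in the lowset of $W$. To get $y_t \in S_l(W)$: $y_t$ has rank $\le l$ in $W'$; it lies in the overlap; I claim its rank in $W$ is also $\le l$. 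This is where I would use the structure of which values move: if $y_t$'s rank increased by going from $W'$ to $W$ (i.e. it is worse-ranked in $W$), that happens only if $A_i \le A_{y_t}$ in the tie-broken sense while $A_{i+K}$ was not, or similar bookkeeping — and I can then "charge" the violation, showing that if all of $y_1,\dots,y_k$ had rank $> l$ in $W$, too many small elements would be concentrated at index $i$ alone, which is impossible since $i$ is a single index.

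The main obstacle will be handling ties correctly: the rank function breaks ties by index, so removing $A_i$ (a small index) versus adding $A_{i+K}$ (a large index) affects tie-breaking asymmetrically, and I must track this carefully to be sure the rank of an overlap element changes by at most $1$ in the favorable direction. I expect the bookkeeping to reduce to: for $m$ in the overlap, $\mathrm{rank}_{W'}(m) \le \mathrm{rank}_W(m) + \mathbf{1}[A_{i+K} \text{ beats } A_m] - \mathbf{1}[A_i \text{ beats } A_m] \le \mathrm{rank}_W(m)+1$, and the reverse inequality $\mathrm{rank}_W(m) \le \mathrm{rank}_{W'}(m)+1$ analogously, where "beats" is the tie-broken comparison. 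Once these one-step inequalities are in hand, the contradiction argument above closes: $x_k > y_k$ forces $y_1,\dots,y_k$ into the $l$-lowset of $W$ as $k$ indices all smaller than $x_k$, contradicting that $x_k$ is the $k$th smallest index of that lowset. I would then remark that an identical argument (or symmetry) is not needed since the statement is one-directional in $i$.
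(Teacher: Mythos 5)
Your plan takes a genuinely different route from the paper's, but it contains a gap at precisely the step you flag as delicate, and the sketch you give to close it would not suffice.

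You reduce the lemma to showing that, under the contradiction hypothesis $x_k > y_k$, all of $y_1,\dots,y_k$ lie in $S_l(W)$, which would place $k$ indices strictly below $x_k$ inside $S_l(W)$ and contradict that $x_k$ is its $k$\textsuperscript{th} smallest index. Your one-step rank identity
\[
\mathrm{rank}_{W'}(m) = \mathrm{rank}_W(m) + \mathbf{1}[A_{i+K}\text{ beats }A_m] - \mathbf{1}[A_i\text{ beats }A_m]
\]
(for $m$ in the overlap) only gives $\mathrm{rank}_W(y_t) \le \mathrm{rank}_{W'}(y_t)+1 \le l+1$. That is not enough: since ranks in $W'$ are distinct, \emph{exactly one} of $y_1,\dots,y_l$ has $\mathrm{rank}_{W'} = l$, and that one $y_t$ can genuinely land at rank $l+1$ in $W$ and drop out of $S_l(W)$. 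Your ``charging'' paragraph only argues against ``all of $y_1,\dots,y_k$ having rank $>l$ in $W$,'' which is much weaker than what your final step needs, so the contradiction does not close as written. The argument can in fact be rescued with one extra observation you never make: if $y_t$ rises to rank $l+1$ in $W$, the identity forces $A_i$ to beat $A_{y_t}$, hence $\mathrm{rank}_W(i) < l+1$, so $i \in S_l(W)$; since $i$ is strictly smaller than every $y_j$ and (for $k\ge 2$) the remaining $k-1$ of $y_1,\dots,y_k$ must coincide with $x_1,\dots,x_{k-1}$, one gets $x_1 = i$, contradicting $x_1 \in \{y_1,\dots,y_k\}\setminus\{y_t\}\subseteq[i+1,\infty)$; the $k=1$ case is handled directly by $i\in S_l(W)$ and $i<x_1$. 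Without something like this, the proof is incomplete.

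The paper avoids all of this bookkeeping by splitting the window shift into two atomic moves, $[i,j]\to[i,j+1]$ (append $A_{j+1}$) and then $[i,j+1]\to[i+1,j+1]$ (delete $A_i$), and checking that the sorted-by-index lowset is elementwise non-decreasing under each move. The key simplification is that the inserted index $j+1$ is the maximum of the window and the deleted index $i$ is the minimum, so in both moves the lowset changes by removing one index and inserting one that cannot make the sorted sequence decrease. That decomposition is worth internalizing: it turns the simultaneous ``add one, drop one'' comparison you were fighting with into two one-sided comparisons, each of which is essentially trivial.
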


\begin{proof}
Consider some window $[i,j]$. We seek to show that elements in the unordered set of indices $S_l([i,j])$ do not decrease when the window is shifted to $[i+1,j+1]$.

Consider the addition of $A_{j+1}$ to the window. If the rank of $A_{j+1}$ in $[i,j+1]$ is greater than $l$, then the set does not change at all. Suppose the rank is $l_{j+1} \leq l$. Then the indices of the smallest $l_{j+1}-1$ elements do not change. But the index of the element with rank $l_{j+1}$ is $j+1$, so $j+1$ is inserted into the set. For each $l' \in [l_{j+1}+1, l]$, the index of the element with rank $l'$ in $A_{[i,j+1]}$ is the index of the element with rank $l'-1$ in $A_{[i,j]}$, which is in $S_l([i,j])$ already. Finally, the element with rank $l$ in $A_{[i,j]}$ has rank $l+1$ in $A_{[i,j+1]}$, so it is removed from the set. In total, some index in $[i,j]$ is removed from the set, but $j+1$ is added to the set. Hence, the set does not decrease when $A_{j+1}$ is added to the window.

Now consider the removal of $A_i$ from the window $[i,j+1]$. If the rank of $A_i$ is greater than $l$, then $i$ is not in $S_l([i,j+1])$, so the set does not change. Otherwise, by analogous logic to the first step, index $i$ is removed from the set and some index in $[i+1,j+1]$ is added to the set. Hence the set does not decrease, and the desired result follows.
\end{proof}

Of course, the above lemma does not directly say anything about the motion of $s_l([i,i+K))$. In particular, we want to be able to detect when it crosses a given value. Some more notation must be defined.

\begin{definition}
For each index $i$ define the $l,b$-\textit{configuration} of window $[i,j]$ as the set of ranks $l' \in [1,l]$ such that the index of the element with rank $l'$ in $[i,j]$ is greater than $b$. Denote this set by $\text{Conf}_{l,b}([i,j])$.
\end{definition}

Now fix some rank $l$.

\begin{definition}
Let $S_{b-\text{cross}}$ be the set of indices $i$ so that $l$ is not in the $l,b$-configuration of window $[i-1,i+K-2]$ but is in the $l,b$-configuration of window $[i,i+K-1]$, or $l$ is in the $l,b$-configuration of window $[i-1,i+K-2]$ but not in the $l,b$-configuration of window $[i,i+K-1]$.
\end{definition}

The observation allowing the elements of $S_{b-\text{cross}}$ to be found is consideration of the following non-decreasing function.

\begin{definition}
Define for every index $b$ the function
$$g_{b}(i) = |\text{Conf}_{l,b}([i,i+K-1])| + |\text{Conf}_{l-1,b}([i,i+K-1])|$$ for all indices $i$.
\end{definition}

We have a key result:

\begin{lemma}\label{gmonotonicity}
Let $b$ be any index. Then for any index $i$, we have $g_b(i-1) \leq g_b(i)$. Furthermore, if $i \in S_{b-\text{cross}}$ then $g_b(i-1) < g_b(i)$.
\end{lemma}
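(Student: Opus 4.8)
The plan is to reduce the statement to tracking how the two ``configuration'' sets change as the window slides from $[i-1,i+K-2]$ to $[i,i+K-1]$, i.e. as we delete $A_{i-1}$ from the left and insert $A_{i+K-1}$ on the right. I would first observe that $g_b$ decomposes as a sum of two terms, $|\text{Conf}_{l,b}([i,i+K-1])|$ and $|\text{Conf}_{l-1,b}([i,i+K-1])|$, and argue monotonicity for each piece. For a fixed rank cutoff $m$ (either $l$ or $l-1$), $|\text{Conf}_{m,b}([i,j])|$ counts how many of the $m$ smallest-ranked elements of $A_{[i,j]}$ have index $> b$; equivalently, it is $m$ minus the number of indices in $S_m([i,j])$ that are $\le b$. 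Since Lemma~\ref{nondec} says the sorted indices of $S_m([i,i+K-1])$ are elementwise non-decreasing in $i$, the count of those indices that are $\le b$ is non-increasing in $i$, hence $|\text{Conf}_{m,b}([i,i+K-1])|$ is non-decreasing in $i$. Summing the two pieces gives $g_b(i-1)\le g_b(i)$.

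For the strict part, I would unwind what $i\in S_{b\text{-cross}}$ means in terms of the rank-$l$ element: either the index $s_l([i-1,i+K-2])$ is $\le b$ while $s_l([i,i+K-1])$ is $> b$, or vice versa. In the first case $l$ enters $\text{Conf}_{l,b}$, so $|\text{Conf}_{l,b}|$ strictly increases by one when moving from window $i-1$ to window $i$; combined with the (non-strict) monotonicity of the $l-1$ term, this gives $g_b(i-1)<g_b(i)$. The second case is the one that needs care, since there $|\text{Conf}_{l,b}|$ would \emph{decrease}, which seemingly contradicts the monotonicity just proved — so this case must actually be impossible, or must be compensated. Here is where the $l-1$ term is essential: I expect the argument to be that if the rank-$l$ index drops from $> b$ to $\le b$, then by Lemma~\ref{nondec} applied to $S_l$ (whose sorted indices can't decrease), this can only happen because the rank-$l$ \emph{slot} was vacated and refilled by a smaller-rank element sliding up — which forces the rank-$(l-1)$ index to cross $b$ in the same direction, keeping $g_b$ non-decreasing; and in fact one shows $g_b$ strictly increases. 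More carefully: track the elements added/removed from $S_l$ and $S_{l-1}$ during the delete and the insert (as in the proof of Lemma~\ref{nondec}), and check that the net effect on $g_b$ is an increase of at least one whenever $l$ crosses $b$ in either direction.

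The main obstacle is handling the ``downward crossing'' case and making precise why the $g_{b}$ definition bundles together the $l$ and $l-1$ configurations: the single-rank count $|\text{Conf}_{l,b}|$ alone is monotone but its increments do not detect all of $S_{b\text{-cross}}$, whereas the count $|\text{Conf}_{l-1,b}|$ fills the gap. I would organize the proof as a short case analysis on whether the inserted element $A_{i+K-1}$ and the deleted element $A_{i-1}$ have rank $\le l$ in the relevant windows, reusing the bookkeeping from Lemma~\ref{nondec} (which index leaves $S_m$, which enters) applied at cutoffs $m=l$ and $m=l-1$ simultaneously, and in each case read off the change in $|\text{Conf}_{l,b}|+|\text{Conf}_{l-1,b}|$ relative to whether those changing indices lie above or at/below $b$. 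The non-strict bound falls out of Lemma~\ref{nondec} with essentially no work; the strict bound is the content.
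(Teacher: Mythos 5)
Your argument for the non-strict inequality is exactly the paper's: apply Lemma~\ref{nondec} separately to the ranks-$\le l$ and ranks-$\le (l-1)$ lowsets to see that each of $|\text{Conf}_{l,b}|$ and $|\text{Conf}_{l-1,b}|$ is non-decreasing, then add. That part is fine.

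The strict part has a real gap. You correctly isolate the two cases (rank $l$ enters vs.\ exits $\text{Conf}_{l,b}$), but you miss the single observation the paper leans on, which handles both cases with one line of algebra and no insert/delete bookkeeping. The observation is that $\text{Conf}_{l,b}([i,j])$ and $\text{Conf}_{l-1,b}([i,j])$ are nested with symmetric difference either $\emptyset$ or $\{l\}$; concretely, $|\text{Conf}_{l,b}| = |\text{Conf}_{l-1,b}|$ if $l \notin \text{Conf}_{l,b}$ and $|\text{Conf}_{l,b}| = |\text{Conf}_{l-1,b}| + 1$ otherwise. Given this and the already-proved monotonicity of each summand, the strict increase of $g_b$ on a crossing is immediate: in your Case~1 (``$l$ enters'') one writes
$|\text{Conf}_{l,b}([i,i+K-1])| = 1 + |\text{Conf}_{l-1,b}([i,i+K-1])| \geq 1 + |\text{Conf}_{l-1,b}([i-1,i+K-2])| = 1 + |\text{Conf}_{l,b}([i-1,i+K-2])|$,
and in your Case~2 (``$l$ exits'') the symmetric chain gives $|\text{Conf}_{l-1,b}([i,i+K-1])| \geq 1 + |\text{Conf}_{l-1,b}([i-1,i+K-2])|$. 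In particular, your suggestion that Case~2 ``must actually be impossible'' is wrong: rank $l$ can drop below $b$ (some smaller-rank index below $b$ leaves the window and the rank-$l$ index shifts down), but when it does, the size of $\text{Conf}_{l-1,b}$ strictly increases, which is exactly the compensation you vaguely gesture at. Your fallback plan of case-analyzing on the ranks of $A_{i-1}$ and $A_{i+K-1}$ in the style of Lemma~\ref{nondec} could probably be made to work, but it is both harder and unnecessary; the point of packaging $g_b$ as the sum of the two configuration sizes is precisely so that the nested-set identity lets you avoid that bookkeeping. Also, your Case~1 assertion that ``$|\text{Conf}_{l,b}|$ strictly increases by one'' is an overstatement of what the entry of $l$ alone gives you (it gives $\ge 1$ after invoking monotonicity of the $l-1$ term, not exactly $1$), though the conclusion you draw from it is still correct.
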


\begin{proof}
By Lemma~\ref{nondec}, we know that the set $S_l([i,i+K-1])$ is elementwise greater than or equal to the set $S_l([i-1,i+K-2])$.  Hence the number of elements in $S_l([i,i+K-1])$ which exceed $b$ is at least the number of elements in $S_l([i-1,i+K-2])$ which exceed $b$. But this implies that $$|\text{Conf}_{l,b}([i,i+K-1])| \geq |\text{Conf}_{l,b}([i-1,i+K-2])|.$$ An identical argument holds for $\text{Conf}_{l-1,b}$, so we conclude that $g_b$ is non-decreasing.

For the second claim, there are two cases. In the first case, suppose that $l$ is not in $\text{Conf}_{l,b}([i-1,i+K-2])$ but is in $\text{Conf}_{l,b}([i,i+K-1])$. Then
\begin{align*}
|\text{Conf}_{l,b}([i,i+K-1])|
&= 1 + |\text{Conf}_{l-1,b}([i,i+K-1])|\\
&\geq 1 + |\text{Conf}_{l-1,b}([i-1,i+K-2])| \\
&= 1 + |\text{Conf}_{l,b}([i-1,i+K-2])|.
\end{align*}
Since $\text{Conf}_{l-1,b}$ is non-decreasing, this difference of $1$ suffices to show that $g_b(i) > g_b(i-1)$.

In the second case, $l$ is in $\text{Conf}_{l,b}([i-1,i+K-2])$ but is not in $\text{Conf}_{l,b}([i,i+K-1])$. Then
\begin{align*}
|\text{Conf}_{l-1,b}([i,i+K-1])|
&= |\text{Conf}_{l,b}([i,i+K-1])| \\
&\geq |\text{Conf}_{l,b}([i-1,i+K-2])| \\
&= 1 + |\text{Conf}_{l-1,b}([i-1,i+K-2])|.
\end{align*}
Combining the resulting strict inequality with the monotonicity of $\text{Conf}_{l,b}$ again yields $g_b(i) > g_b(i-1)$.
\end{proof}

On a side-note, it immediately follows from Lemma~\ref{gmonotonicity} and the bound $g_b(i) < 2l$ that the index with rank $l$ does not cross the boundary between any two consecutive elements indexed $b$ and $b+1$ very often, although this observation in itself (the direct generalization of $l=1$ monotonicity) is not sufficient to permit an efficient streaming or communication algorithm.

\begin{corollary}
For any index $b$ and any rank $l$, we have $|S_{b-\text{cross}}| < 2l$.
\end{corollary}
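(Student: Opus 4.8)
The plan is to derive the bound $|S_{b\text{-cross}}| < 2l$ directly from Lemma~\ref{gmonotonicity} together with the trivial upper bound on the range of $g_b$. First I would observe that $g_b(i) = |\text{Conf}_{l,b}([i,i+K-1])| + |\text{Conf}_{l-1,b}([i,i+K-1])|$ is a sum of the sizes of two sets, the first a subset of $[1,l]$ and the second a subset of $[1,l-1]$. Hence $0 \le g_b(i) \le l + (l-1) = 2l-1 < 2l$ for every index $i$. This is the only quantitative input beyond the lemma.

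Next I would combine this with the two conclusions of Lemma~\ref{gmonotonicity}: that $g_b$ is non-decreasing in $i$, and that $g_b$ strictly increases at every index $i \in S_{b\text{-cross}}$. Since $g_b$ is integer-valued, each element of $S_{b\text{-cross}}$ forces $g_b$ to increase by at least $1$ at a distinct step, while all other steps are non-decreasing (increase by $\ge 0$). Therefore $g_b$ increases by at least $|S_{b\text{-cross}}|$ in total over the whole range of indices. But the total increase of $g_b$ is bounded by $\max_i g_b(i) - \min_i g_b(i) \le (2l-1) - 0 = 2l - 1$. Combining, $|S_{b\text{-cross}}| \le 2l - 1 < 2l$, which is exactly the claimed bound.

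The argument is short and essentially a telescoping/counting argument, so there is no real obstacle; the one point to state carefully is the boundary behavior, namely that $S_{b\text{-cross}}$ is defined in terms of transitions from window $[i-1,i+K-2]$ to window $[i,i+K-1]$, so summing the increments $g_b(i) - g_b(i-1)$ over all valid $i$ telescopes to $g_b(\text{last}) - g_b(\text{first})$, and each $i \in S_{b\text{-cross}}$ contributes a strictly positive (hence $\ge 1$) term to that sum while every other term is $\ge 0$. I would also note that this corollary is not on its own strong enough for an efficient algorithm — the subsequent sections will need the finer structure of Lemma~\ref{gmonotonicity} rather than merely this count — but it is a clean sanity check and a useful intermediate fact.
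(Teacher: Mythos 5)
Your proof is correct and matches the approach the paper has in mind: the paper states the corollary follows immediately from Lemma~\ref{gmonotonicity} together with the bound $g_b(i) < 2l$, and your telescoping argument (monotone integer-valued $g_b$, strict increase at each crossing index, total range at most $2l-1$) is precisely the reasoning being invoked. You simply spell out the details the paper leaves implicit.
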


\subsection{An $(l+1)$-pass streaming algorithm}\label{sec:lstream}

Now we generalize our $2$-pass streaming algorithm for \smin{} to an $(l+1)$ input pass, $l$ output pass streaming algorithm for \ksmin{}. The first pass is analogous to the original first pass: compute the $l$-lowsets of $\sqrt{lN}$ evenly spaced $K$-length windows. The subsequent $l$ passes all perform identical computation; they only differ in the sets of windows for which output is produced in that pass. The algorithm will have to use more than just a single pass through the output stream, but by dividing the output in the following manner, we'll be able to show that the algorithm uses only $l$ output passes.

Let $B = \sqrt{N/L}$. For each $i$, call the range of windows starting between $iB$ and $(i+1)B$ a ``block.'' For each block $i$, we'll construct an interval $[a_i, b_i]$ such that $s_l([j, j+K)) \in [a_i, b_i]$ for all $iB \leq j \leq (i+1)B$, and we know its rank in the interval. If we can construct intervals small enough that any element is contained in only $O(l)$ intervals, then we'll be able to achieve an $O(lN/B + l^2B)$ space bound.

The following algorithm will make use of an ``$l$-capped ordered dictionary'' data structure which maintains, in order, the $l$ smallest values and associated indices inserted into the data structure, and has an ``insert'' operation. We refer to the $i^\text{th}$ smallest element of dictionary $D$ as $D(i)$.

In the first pass, maintain $N/B$ $l$-capped dictionaries, one for each window $[iB, iB+K)$ where $0 \leq i < N/B$. So after the first pass, dictionary $L_{iB}$ will store the $l$ smallest element/index pairs in window $[iB, iB+K)$. Now we describe the procedure for pass $p$, where $2 \leq p \leq l+1$. At the beginning of pass $p$, for each $i$, use $L_{iB}$ to compute the maximal interval $[a_i, b_i]$ around $L_{iB}(l)$ such that $g_j(iB) < g_j((i+1)B)$ for each $j \in [a_i, b_i)$. Also compute a flag array $O$, which indicates the blocks for which output is produced in this pass. Set $O(i) = 1$ if $b_i$ is the $(p-1)^\text{th}$ smallest element in $L_i$. 

Now to process an element $A_j$, there are three steps: initialization for intervals that just began; actual processing of $A_j$; and post-processing for intervals that just ended. First, iterate through all $i$ such that $j = a_i$. If $a_i < b_i$, initialize dictionaries $L_{i'}$ for all windows $[i', i'+K)$ in block $i$. If $a_i = b_i$, we cannot afford to store memory, but don't need to: if $O(i) = 1$, output $A_j$ as the answer for each window in block $i$ (otherwise do nothing). Second, to process $A_j$, insert it to all active dictionaries corresponding to windows that contain index $j$. Third, iterate through all $i$ such that $j = b_i > a_i$. If $O(i) = 1$, for each window in block $i$ output $L_i'(d+1)$, where $$d = |\text{Conf}_{l-1, a_i-1}([iB, iB+K))| - |\text{Conf}_{l-1, b_i}([iB, iB+K))|$$ can be shown to be the number of elements of rank less than $l$ caught in $[a_i, b_i]$. Finally, if $j = b_i > a_i$ then deallocate all block $i$ dictionaries. See Figure~\ref{figure:lthalgo} for pseudocode.

\begin{figure}
\fbox{\begin{minipage}{\textwidth}

\begin{algorithm}[\textsc{Process}(1,j)]
{\parindent0pt
\textbf{If} $j = 0$, initialize $N/B$ dictionaries, labelled $L_{iB}$ for $0 \leq i < N/B$.

\textbf{For} $0 \leq i < N/B$,

\qquad \textbf{If} $j \in [iB, iB+K)$, insert $(j,a_j)$ to dictionary $L_{iB}$.
}
\end{algorithm}

\begin{algorithm}[\textsc{Process}(p,j)]
{\parindent0pt
\textbf{If} $j = 0$,

\qquad \textbf{For} $0 \leq i < N/B$,

\qquad \qquad Initialize variables $a_i = b_i = L_{iB}(l)$.

\qquad \qquad \textbf{While} $g_{a_i-1}(iB) < g_{a_i-1}((i+1)B)$, decrement $a_i$.

\qquad \qquad \textbf{While} $g_{b_i}(iB) < g_{b_i}((i+1)B)$, increment $b_i$.

\textbf{For} $0 \leq i \leq N/B$,

\qquad \textbf{If} $j = a_i = b_i$ and $b_i$ is the $(p-1)^\text{th}$ smallest index in $L_i$,

\qquad \qquad Output $A_j$ as the answer for each of the $B$ windows in block $i$.

\qquad \textbf{If} $j = a_i < b_i$,

\qquad \qquad Initialize $B$ dictionaries, labelled $L_{i'}$ for $iB \leq i' < (i+1)B$.

\textbf{For each} active dictionary $L_{i'}$,

\qquad \textbf{If} $j \in [i',i'+K)$, insert $(j,a_j)$ to dictionary $L_{i'}$.

\textbf{For} $0 \leq i \leq N/B$,

\qquad \textbf{If} $j = b_i > a_i$ and $b_i$ is the $(p-1)^\text{th}$ smallest index in $L_i$,

\qquad \qquad Set $d = |\text{Conf}_{l-1,a_i-1}([iB,iB+K))| - |\text{Conf}_{l-1,b_i}([iB,iB+K))|$.

\qquad \qquad \textbf{For} $iB \leq i' < (i+1)B$,

\qquad \qquad \qquad Output $\text{value}(L_{i'}(d+1))$ as the answer for window $[i',i'+K)$.

\qquad \textbf{If} $j = b_i > a_i$, deallocate $L_{i'}$.
}
\end{algorithm}

\end{minipage}}
\caption{The algorithm for \ksmin{}}
\label{figure:lthalgo}
\end{figure}

It's clear that the first pass uses $lN/B$ space and computes the desired values. Taking this for granted, we'll show that in the subsequent $l$ passes, the correct value is computed for every window in every block (ignoring whether it is output, and in what order). Then we'll show that for every window, the output is indeed printed in exactly one of the $l$ passes, and that in any fixed pass, the list of window indices for which output is produced is strictly increasing. Then we will show that the space complexity of each of the last $l$ passes is $\tilde{O}(l^2B)$.

\begin{claim}
Let $0 \leq i < N/B$. For any $iB \leq i' < (i+1)B$, the set $S_{l-1}([i',i'+K)) \cap [a_i, b_i]$ has size $$|\text{Conf}_{l-1,a_i-1}([iB,iB+K))| - |\text{Conf}_{l-1,b_i}([iB,iB+K))|.$$
\end{claim}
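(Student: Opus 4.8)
The plan is to reduce the claim to two constancy statements about the functions $g_{a_i-1}$ and $g_{b_i}$ over the window indices in block $i$, exploiting exactly how $a_i$ and $b_i$ are defined.

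First I would unpack both sides in terms of lowsets. Directly from the definition of $\text{Conf}$, for any window $W = [i',i'+K)$ one has $|\text{Conf}_{l-1,b}(W)| = |\{c \in S_{l-1}(W) : c > b\}|$, and since $a_i \leq b_i$ (indeed $a_i \leq L_{iB}(l) \leq b_i$ by construction),
\begin{equation*}
|\text{Conf}_{l-1,a_i-1}(W)| - |\text{Conf}_{l-1,b_i}(W)| = |S_{l-1}(W) \cap [a_i,b_i]|.
\end{equation*}
In particular the right-hand side of the claim is $|S_{l-1}([iB,iB+K)) \cap [a_i,b_i]|$, so it suffices to show that $|S_{l-1}([i',i'+K)) \cap [a_i,b_i]|$ is the same for every $i'$ with $iB \leq i' < (i+1)B$.

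Second I would extract the defining properties of $a_i,b_i$. Because the two \textbf{while} loops in \textsc{Process}$(p,\cdot)$ terminate, we have $g_{a_i-1}(iB) \not< g_{a_i-1}((i+1)B)$ and $g_{b_i}(iB) \not< g_{b_i}((i+1)B)$; combined with the monotonicity half of Lemma~\ref{gmonotonicity} these are in fact equalities. Since $g_b$ is non-decreasing in the window index (Lemma~\ref{gmonotonicity} again), equality at the endpoints $iB$ and $(i+1)B$ forces $g_{a_i-1}$ and $g_{b_i}$ to be constant on every window index $j \in [iB,(i+1)B]$.

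Third — this is the crux — I would split this constancy across the two terms of $g_b$. Writing $g_b(j) = |\text{Conf}_{l,b}([j,j+K))| + |\text{Conf}_{l-1,b}([j,j+K))|$, Lemma~\ref{nondec} implies that each summand is individually non-decreasing in $j$: if the $l$-lowsets (resp. $(l-1)$-lowsets) of consecutive windows are elementwise non-decreasing, the number of their elements exceeding a fixed $b$ can only grow. A sum of two monotone non-decreasing integer functions that is constant on an interval has each summand constant there, so applying this with $b = a_i-1$ and with $b = b_i$ shows $|\text{Conf}_{l-1,a_i-1}([j,j+K))|$ and $|\text{Conf}_{l-1,b_i}([j,j+K))|$ are each constant for $j \in [iB,(i+1)B]$, hence equal to their values at $j = iB$. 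Subtracting and invoking the first step gives
\begin{equation*}
|S_{l-1}([i',i'+K)) \cap [a_i,b_i]| = |\text{Conf}_{l-1,a_i-1}([iB,iB+K))| - |\text{Conf}_{l-1,b_i}([iB,iB+K))|
\end{equation*}
for all $i' \in [iB,(i+1)B)$, which is the claim. The main obstacle is precisely this last step: noticing that Lemma~\ref{nondec} yields monotonicity of each configuration size separately, so that constancy of the combined function $g_b$ propagates down to the individual term $|\text{Conf}_{l-1,b}|$ appearing in the statement. One should also check the boundary cases $a_i = 0$ and $b_i$ beyond the end of the relevant window, but there the corresponding $g_b$ is trivially constant ($2l-1$ or $0$), so nothing breaks.
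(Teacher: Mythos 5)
Your proposal matches the paper's proof essentially step for step: unpack the configuration difference as a count of lowset indices in $[a_i,b_i]$, observe that the termination of the two \textbf{while} loops combined with Lemma~\ref{gmonotonicity} forces $g_{a_i-1}$ and $g_{b_i}$ to be constant across the block, and then use the fact (via Lemma~\ref{nondec}) that each summand of $g_b$ is individually non-decreasing to conclude that each configuration size is constant across the block. The paper compresses the last step into "The two summands of $g_{a_i-1}$ are similarly monotonic," which is exactly your key observation, so this is the same argument, just spelled out in more detail.
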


\begin{proof}
The number of elements of $S_{l-1}([i',i'+K))$ which exceed $a_i-1$ is $|\text{Conf}_{l-1,a_i-1}([i',i'+K))|$. But by construction, $g_{a_i-1}(iB) = g_{a_i-1}((i+1)B)$. Pairing this equality with the monotonicity shown in Lemma~\ref{gmonotonicity} yields $g_{a_i-1}(iB) = g_{a_i-1}(i')$. The two summands of $g_{a_i-1}$ are similarly monotonic, so in fact $$|\text{Conf}_{l-1,a_i-1}([iB,iB+K))| = |\text{Conf}_{l-1,a_i-1}([i',i'+K))|.$$ A symmetric argument shows that the number of elements of $S_{l-1}([i',i'+K))$ which exceed $b_i$ is $|\text{Conf}_{l-1,b_i}([iB,iB+K))|$.
\end{proof}

\begin{claim}
Let $0 \leq i < N/B$. For any $iB \leq i' < (i+1)B$, we have $s_l([i',i'+K)) \in [a_i, b_i].$
\end{claim}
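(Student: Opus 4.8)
The plan is to read off from the construction of $[a_i,b_i]$ the two conditions under which the \texttt{while} loops computing $a_i$ and $b_i$ terminate, namely $g_{a_i-1}(iB)=g_{a_i-1}((i+1)B)$ and $g_{b_i}(iB)=g_{b_i}((i+1)B)$, and then convert each of these equalities, via Lemma~\ref{gmonotonicity}, into a statement that $s_l([i',i'+K))$ stays on a fixed side of the corresponding boundary for every window start $i'$ in block $i$. Recall also that by construction $a_i \le L_{iB}(l) \le b_i$, and $L_{iB}(l) = s_l([iB,iB+K))$.

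The first step is the observation that, since $g_b$ is non-decreasing (Lemma~\ref{gmonotonicity}), an equality $g_b(iB)=g_b((i+1)B)$ forces $g_b$ to be constant on the entire range $[iB,(i+1)B]$. Then for each $i''\in\{iB+1,\dots,(i+1)B\}$ we have $g_b(i''-1)=g_b(i'')$, so combining with the inequality $g_b(i''-1)\le g_b(i'')$, the contrapositive of the ``furthermore'' clause of Lemma~\ref{gmonotonicity} gives $i''\notin S_{b-\text{cross}}$. Unwinding the definition of $S_{b-\text{cross}}$, this says exactly that the truth value of ``$l\in\text{Conf}_{l,b}([i',i'+K))$'', equivalently of ``$s_l([i',i'+K))>b$'', is the same for every window start $i'\in[iB,(i+1)B]$.

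It then remains to apply this invariance to the two choices $b=a_i-1$ and $b=b_i$, seeded by the values at the left endpoint $i'=iB$. Since $a_i\le L_{iB}(l)\le b_i$, we have $s_l([iB,iB+K))=L_{iB}(l)>a_i-1$ and $s_l([iB,iB+K))\le b_i$; i.e. $l\in\text{Conf}_{l,a_i-1}([iB,iB+K))$ and $l\notin\text{Conf}_{l,b_i}([iB,iB+K))$. Propagating the first fact along the block gives $s_l([i',i'+K))\ge a_i$ for all $iB\le i'<(i+1)B$, and propagating the second gives $s_l([i',i'+K))\le b_i$; together these yield $s_l([i',i'+K))\in[a_i,b_i]$, as claimed. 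The degenerate case $a_i=b_i$ needs no separate treatment, since the argument used only $a_i\le L_{iB}(l)\le b_i$.

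The step I expect to require the most care is the middle one: translating ``$g_b$ constant on the block'' into ``the relation of $s_l$ to the boundary $b$ is frozen on the block,'' since this routes through the definitions of $S_{b-\text{cross}}$ and $\text{Conf}_{l,b}$ and ultimately rests on the elementwise monotonicity of $l$-lowsets (Lemma~\ref{nondec}, via Lemma~\ref{gmonotonicity}). Everything else is direct bookkeeping with the definitions.
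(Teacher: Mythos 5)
Your proof is correct and takes essentially the same approach as the paper, routing through the monotonicity of $g_b$ and the $S_{b\text{-cross}}$ machinery of Lemma~\ref{gmonotonicity}. The only difference is one of framing: the paper fixes $j = s_l([i',i'+K))$ and shows every boundary $j'$ between $L_{iB}(l)$ and $j$ must satisfy $g_{j'}(iB) < g_{j'}((i+1)B)$ and hence lie inside $[a_i,b_i)$, whereas you argue the contrapositive, showing the two boundaries $a_i-1$ and $b_i$ satisfy $g(iB)=g((i+1)B)$ and hence are never crossed, so $s_l$ stays trapped in $[a_i,b_i]$ throughout the block.
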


\begin{proof}
Suppose that $s_l([i', i'+K)) = j$. Then for every $j'$ between $s_l([iB,iB+K))$ and $j$, the $l^\text{th}$-smallest element crosses the boundary $j'/j'+1$ at some point during block $i$. But by Lemma~\ref{gmonotonicity}, $g_{j'}$ increases at each crossing, and is non-decreasing otherwise, so we must have $g_{j'}(iB) < g_{j'}((i+1)B)$. It follows, by construction of $[a_i,b_i]$, that $j \in [a_i, b_i]$.
\end{proof}

Now we can prove correctness of the computations, aside from the issue of output.

\begin{lemma}
Let $0 \leq i < N/B$, and let $iB \leq i' < (i+1)B$. After processing element $b_i$, we have $\text{index}(L_{i'}(d+1)) = s_l([i',i'+K))$.
\end{lemma}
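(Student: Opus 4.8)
The plan is to track what the dictionary $L_{i'}$ contains after element $b_i$ has been processed in pass $p$, and show that its $(d+1)^{\text{th}}$ smallest entry is exactly the element of rank $l$ in window $[i', i'+K)$. First I would observe that in pass $p$, the dictionary $L_{i'}$ is created when index $a_i$ is processed and receives insertions of every input element $A_j$ with $j \in [i', i'+K)$ and $a_i \le j \le b_i$; so after processing $b_i$, the dictionary $L_{i'}$ holds (capped at $l$) the $l$ smallest values among $\{A_j : j \in [i',i'+K) \cap [a_i, b_i]\}$, together with their indices, in sorted order. The content of the lemma is therefore a counting statement: among the elements of window $[i', i'+K)$ lying in the index range $[a_i, b_i]$, the element of rank $l$ (in the whole window) occupies position $d+1$.

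**The key counting step.** The crucial point is that $d$, as defined, counts exactly the number of elements of $S_{l-1}([i', i'+K))$ (i.e.\ elements of window-rank strictly less than $l$) whose index lies in $[a_i, b_i]$ — this is precisely the content of the first Claim above, applied with the identity $|\text{Conf}_{l-1,a_i-1}([iB,iB+K))| - |\text{Conf}_{l-1,b_i}([iB,iB+K))| = |S_{l-1}([i',i'+K)) \cap [a_i,b_i]|$. By the second Claim, $s_l([i',i'+K)) \in [a_i, b_i]$, so the element of rank $l$ is itself one of the elements of the window caught in $[a_i, b_i]$. Now, among all elements of the window whose index lies in $[a_i, b_i]$, the ones strictly smaller in value than $A_{s_l([i',i'+K))}$ (or equal but with smaller tie-broken rank) are precisely those of window-rank $\le l-1$ that fall in $[a_i, b_i]$ — because global window-rank restricted to a sub-collection of indices is consistent with the relative order on that sub-collection. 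Hence within the multiset of values inserted into $L_{i'}$, the rank-$l$ element of the window sits in position $d+1$, where $d = |S_{l-1}([i',i'+K)) \cap [a_i,b_i]|$. Since $d + 1 \le l$ (as $d \le l-1$), this entry has not been evicted by the $l$-cap, so $L_{i'}(d+1)$ is well-defined and $\text{index}(L_{i'}(d+1)) = s_l([i',i'+K))$, as claimed.

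**The main obstacle.** The step I expect to require the most care is justifying that rank is "inherited" correctly: specifically, that an element $A_m$ with $m \in [a_i,b_i] \cap [i',i'+K)$ has window-rank $\le l-1$ in $[i',i'+K)$ if and only if it precedes $A_{s_l([i',i'+K))}$ in the tie-broken value order, \emph{and} that the number of such elements equals $d$. The "if and only if" direction relies on the fact that removing elements from the window (those outside $[a_i,b_i]$) never changes the relative tie-broken order of the remaining elements, so ranks restricted to a subset are order-isomorphic to the subset's own internal ranking; this is elementary but should be stated explicitly. The more delicate half is the count: one must invoke the first Claim to identify $d$ with $|S_{l-1}([i',i'+K)) \cap [a_i,b_i]|$, and here I would carefully re-derive the two monotonicity equalities $g_{a_i-1}(iB) = g_{a_i-1}(i')$ and the analogous statement at $b_i$ from Lemma~\ref{gmonotonicity} together with the defining maximality of the interval $[a_i, b_i]$ (namely $g_{a_i-1}(iB) = g_{a_i-1}((i+1)B)$ and $g_{b_i}(iB) = g_{b_i}((i+1)B)$), since those equalities are exactly what pin down that no rank-$<l$ element enters or leaves the range $[a_i,b_i]$ as the window slides within block $i$.

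**Assembling.** Putting these together: after processing $b_i$, the dictionary $L_{i'}$ correctly stores the bottom $l$ (value, index) pairs among window elements with index in $[a_i,b_i]$; by the two Claims the rank-$l$ element of the full window is among them and occupies slot $d+1$; therefore $\text{index}(L_{i'}(d+1)) = s_l([i',i'+K))$. I would close by noting that $d$ is the same for every $i'$ in block $i$ — it depends only on $i$ through $a_i, b_i$ — which is what makes it legitimate for the algorithm to compute $d$ once per block rather than per window.
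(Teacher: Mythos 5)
Your proposal is correct and follows essentially the same approach as the paper: use the first Claim to identify $d$ with $|S_{l-1}([i',i'+K)) \cap [a_i,b_i]|$, use the second Claim to place $s_l([i',i'+K))$ in $[a_i,b_i]$, and conclude by noting that relative order within the restricted index set matches global window order. The extra care you take in spelling out that ranks are ``inherited'' on subsets, and that $d+1 \le l$ so the entry survives the cap, are small refinements of the same argument.
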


\begin{proof}
Note that after processing element $b_i$, dictionary $L_{i'}$ stores the $l$ smallest elements of $[i',i'+K) \cap [a_i, b_i]$. Of the $l-1$ elements in $[i',i'+K)$ which are smaller than $s_l([i',i'+K))$, exactly $d$ of them (where $d$ is as computed in the algorithm) are in $[i',i'+K) \cap [a_i, b_i]$, as shown in our first claim. From this and from our second claim, which shows that $s_l([i',i'+K))$ is an element of $[i',i'+K) \cap [a_i, b_i]$, we can conclude that the element of rank $d+1$ in $[i',i'+K) \cap [a_i, b_i]$ has index $s_l([i',i'+K))$.
\end{proof}

The next two lemmas address the division of output among the $l$ input passes, and bound the number of output passes.

\begin{lemma}
The above algorithm produces output for each window exactly once.
\end{lemma}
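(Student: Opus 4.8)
The plan is to track, for a single fixed window $[i', i'+K)$ with $i' \in [iB, (i+1)B)$, exactly which pass produces its output, and to verify that this pass is unique. The output for block $i$ is triggered in pass $p$ precisely when $O(i) = 1$, i.e.\ when $b_i$ is the $(p-1)^{\text{th}}$ smallest \emph{index} in the dictionary $L_i$ (in the degenerate case $a_i = b_i$, output happens instead at index $a_i = b_i$, but governed by the same flag condition). So the first step is to observe that the interval $[a_i, b_i]$ and the dictionary $L_{iB}$ are recomputed identically at the start of each of passes $2, \dots, l+1$ — they depend only on the first-pass data $L_{iB}$, which is fixed — so the endpoint $b_i$ has a well-defined rank among the (at most $l$) indices stored in $L_{iB}$. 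Since $b_i \in [a_i, b_i]$ and, by the first claim applied with the full range, $b_i$ lies in $S_l([iB, iB+K))$ (it is one of the $l$ smallest-index elements recorded, because $[a_i,b_i]$ was built around $L_{iB}(l)$), this rank is a well-defined integer in $\{1, \dots, l\}$. Hence there is exactly one value of $p - 1 \in \{1, \dots, l\}$, i.e.\ one pass $p \in \{2, \dots, l+1\}$, for which $O(i) = 1$.

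Next I would check that in that unique pass, output for every window in block $i$ is actually produced exactly once. Within the pass, the code outputs for block $i$ either at the initialization step (when $j = a_i = b_i$) or at the post-processing step (when $j = b_i > a_i$), but never both, since these cases are mutually exclusive on whether $a_i = b_i$. In the first case the ``For each of the $B$ windows in block $i$'' loop writes each window once. In the second case the ``For $iB \le i' < (i+1)B$'' loop does likewise, using the value $L_{i'}(d+1)$ whose correctness was established in the preceding lemma. So within a block, each of the $B$ windows receives exactly one write, in exactly one pass.

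Finally I would address the boundary bookkeeping between consecutive blocks. Blocks are indexed $0 \le i \le N/B$ and block $i$ nominally covers window starting indices $iB$ through $(i+1)B$, so naively blocks $i$ and $i+1$ overlap at the single index $(i+1)B$. The clean way to handle this is to declare that block $i$ is \emph{responsible} for windows $[iB, (i+1)B)$ — a half-open range — so the responsibility ranges partition $\{0, 1, \dots, N-K\}$ exactly (with the last block picking up any remainder), and to note that the output loops in Figure~\ref{figure:lthalgo}, read with this convention, write precisely the windows in the responsible range of each block. Combining: every window start index $i'$ lies in the responsibility range of a unique block $i$; block $i$'s output is triggered in a unique pass; and in that pass window $i'$ is written exactly once. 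Therefore the algorithm produces output for each window exactly once.

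The main obstacle I anticipate is not any deep mathematics but the off-by-one/boundary accounting: making sure the blocks' output ranges genuinely tile $\{0, \dots, N-K\}$ with no gaps and no double-writes at shared endpoints $(i+1)B$, and confirming that the ``$(p-1)^{\text{th}}$ smallest index in $L_i$'' test is well-posed (in particular that $b_i$ is always \emph{in} $L_i$, which requires $b_i \ge a_i$ and the fact that the interval construction keeps $b_i$ within the $l$-lowset recorded in the first pass). Once those are pinned down the uniqueness argument is immediate from the fact that $[a_i,b_i]$, $L_i$, and hence the flag $O(i)$ are pass-independent functions of the first-pass state.
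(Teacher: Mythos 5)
The gap is in the step where you claim $b_i \in S_l([iB, iB+K))$ ``because $[a_i,b_i]$ was built around $L_{iB}(l)$.'' That does not follow. Being centered on $L_{iB}(l) = s_l([iB,iB+K))$ only guarantees $s_l([iB,iB+K)) \in [a_i,b_i]$; it says nothing about whether the right endpoint $b_i$ --- which is determined by where the inequality $g_j(iB) < g_j((i+1)B)$ first \emph{fails} --- has rank at most $l$ in the block's first window. In fact the claim is false. Take $l=2$, $K=3$, $B=2$, and $A = (1,2,5,0,0,\dots)$. Then $S_2([0,3)) = \{0,1\}$ and $s_2([0,3)) = 1$. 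Computing $g_j(0)$ and $g_j(2)$ gives $g_j(0)<g_j(2)$ exactly for $j \in \{0,1,2,3\}$, so $[a_0,b_0] = [0,4]$ and $b_0 = 4 \notin S_2([0,3))$.

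What is actually true, and what the paper proves, is that $b_i \in S_l([(i+1)B, (i+1)B+K))$ --- the $l$-lowset of the first window of the \emph{next} block (in the example, $S_2([2,5)) = \{3,4\} \ni 4$; this also tells you which dictionary the flag test must consult). This is the real content of the lemma, and it does not come from the interval being ``centered'' but from its \emph{maximality} at the right endpoint. When $a_i < b_i$, suppose $b_i \notin S_l([(i+1)B,(i+1)B+K))$; then
$$g_{b_i}(iB) \leq g_{b_i-1}(iB) < g_{b_i-1}((i+1)B) = g_{b_i}((i+1)B),$$
where the first step is monotonicity of $g_m(iB)$ in the threshold $m$, the middle step holds since $b_i-1 \in [a_i,b_i)$, and the last holds since $b_i$ having rank $>l$ leaves both configurations unchanged; this yields $g_{b_i}(iB)<g_{b_i}((i+1)B)$, so the interval should have been extended past $b_i$ --- contradiction. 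When $a_i = b_i$, one instead applies the containment $s_l([i',i'+K)) \in [a_i,b_i]$ at $i'=(i+1)B$, forcing $a_i = b_i = s_l([(i+1)B,(i+1)B+K))$. Your proposal assumes membership of $b_i$ in a lowset rather than deriving it, and names the wrong lowset, so the ``exactly one pass'' conclusion is not actually reached. The surrounding bookkeeping --- pass-independence of $[a_i,b_i]$, mutual exclusivity of the $a_i=b_i$ and $a_i<b_i$ branches, and the partition of window start indices into blocks --- is fine but is the easy part.
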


\begin{proof}
We must show that $b_i \in S_l([(i+1)B, (i+1)B+K))$ for each block $i$. There are two cases:
\begin{enumerate}[label = (\roman*)]
\item($a_i < b_i$) Suppose the contrary. Then we have
$$g_{b_i}(iB) \leq g_{b_i-1}(iB) < g_{b_i-1}((i+1)B) = g_{b_i}((i+1)B),$$
where the first inequality follows from monotonicity of the function $G(m) = g_m(iB)$; the second holds since $b_i-1 \in [a_i, b_i]$; and the equality holds since $b_i \not \in S_l([(i+1)B,(i+1)B+K])$. Therefore $g_{b_i}(iB) < g_{b_i}((i+1)B)$, so $b_i+1 \in [a_i, b_i]$. This is a contradiction.
\item ($a_i = b_i$) We know that $s_l([(i+1)B, (i+1)B+K)) \in [a_i, b_i]$, so in fact $$a_i = b_i = s_l([(i+1)B, (i+1)B+K)).$$ 
\end{enumerate}

Hence, $b_i \in S_l([(i+1)B, (i+1)B+K))$ for all $i$, so for every block, output will be produced in exactly one pass.
\end{proof}

\begin{lemma}
The above algorithm uses $l$ output passes.
\end{lemma}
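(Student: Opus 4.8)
The plan is to invoke the sufficient condition for ``$p$ output passes'' from Section~\ref{sec:prelim}: it suffices to check that pass~$1$ writes nothing to the output stream, and that for each of the remaining passes $p \in \{2,\dots,l+1\}$ the window answers produced during pass~$p$ are emitted in strictly increasing order of starting index. There are $l$ such passes, and by the previous lemma every window is output in exactly one of them, so this gives the claim. That pass~$1$ is silent is immediate from \textsc{Process}$(1,j)$, so all of the work lies in the per-pass ordering statement.

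Fix a pass $p \in \{2,\dots,l+1\}$. By inspection of \textsc{Process}$(p,j)$, output during pass~$p$ is produced only for windows lying in a block $i$ with $O(i)=1$, and all $B$ answers for such a block are emitted at the single stream step $j=b_i$ (or $j=a_i=b_i$ in the degenerate case). Since the blocks partition the windows into the consecutive intervals $\{iB,\dots,(i+1)B-1\}$, increasing block index is the same as increasing window index, and within a block the inner loop over $iB\le i'<(i+1)B$ already emits answers in increasing order of $i'$. Hence it suffices to show: whenever blocks $i<i'$ both receive output during pass~$p$, block $i$ is handled no later than block $i'$, i.e.\ $b_i\le b_{i'}$, with the correct relative order when $b_i=b_{i'}$.

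The inequality $b_i\le b_{i'}$ is exactly where Lemma~\ref{nondec} is used. Block $i$ receives output in pass~$p$ precisely when $b_i$ occupies position $p-1$ once the $l$-lowset of the relevant $K$-window is listed by increasing index, and that window's left endpoint is a nondecreasing function of $i$; so by Lemma~\ref{nondec}, which asserts element-wise monotonicity of $l$-lowsets under sliding, the $(p-1)^\text{th}$ smallest index in these $l$-lowsets is itself nondecreasing in $i$, and therefore $i<i'$ forces $b_i\le b_{i'}$. If $b_i<b_{i'}$ we are done; if $b_i=b_{i'}=:j$ then both blocks are handled while processing stream element $j$, and we must verify that block $i$'s answers are emitted before block $i'$'s. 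This is the one place requiring care: it holds once we ensure that at a fixed step $j$ the blocks $i$ with $b_i=j$ are output in increasing order of $i$ irrespective of which of the two output branches (``$a_i=b_i$'' or ``$a_i<b_i$'') they fall into---a cosmetic reorganization of \textsc{Process}$(p,j)$ that does not affect the space bound. I expect this tie-breaking bookkeeping, together with the minor matter of identifying precisely which $l$-lowset the flag $O$ is tested against, to be the only delicate point; the substantive content---monotonicity of the output steps $b_i$ across blocks that share a pass---drops straight out of Lemma~\ref{nondec}.
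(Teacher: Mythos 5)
Your proof is correct and follows essentially the same route as the paper: both observe that for any block~$i$ output during pass~$p$, the index $b_i$ occupies position $p-1$ in the $l$-lowset $S_l([iB,iB+K))$ of the block's representative window, then invoke Lemma~\ref{nondec} (elementwise monotonicity of lowsets) to conclude $b_i$ is nondecreasing across such blocks, and finally appeal to tie-breaking by block index together with the in-order emission within a block. The only difference is cosmetic: you flag the tie-breaking as needing a small reorganization of the pseudocode, whereas the paper simply asserts "ties are broken by block index," but you are both relying on the same implementation detail.
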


\begin{proof}
We'll show something slightly stronger: for any fixed pass, the order of output is non-decreasing, so that every input pass uses only one output pass. Let $\{i_1, \dots, i_r\}$ be the blocks output in a fixed pass $p$, where $i_1 < i_2 < \dots < i_r$. Observe that the sorted position of $b_{i_k}$ in $S_l([i_kB, i_kB+K))$ is equal to $p-1$, for $1 \leq k \leq r$. Since the sets $S_l([i,i+K))$ are elementwise increasing, it follows that $b_{i_1} \leq b_{i_2} \leq \dots \leq b_{i_r}$. Since block $i$ is output upon processing the element with index $b_i$, and since ties are broken by block index, we can conclude that block $i_k$ is output before block $i_{k+1}$ for $1 \leq k < r$. Within each block, the minima are output in the correct order, so the overall order for each pass is correct.
\end{proof}

The only remaining detail is space complexity.

\begin{theorem} \label{thm:ksmin}
For any $l$, \ksmin{} can be solved in $l+1$ input passes and $l$ outputs passes with $O(l^{3/2} N^{1/2} (\log N + \log R))$ space.
\end{theorem}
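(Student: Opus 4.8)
The plan is to observe that the preceding lemmas have already handled everything except the space bound: in passes $2$ through $l+1$ the value $s_l([i',i'+K))$ is computed correctly for every window (by the computation-correctness lemma, which rests on the two claims), each window's answer is printed exactly once across these passes, and the algorithm uses $l$ output passes; the input-pass count $l+1$ is immediate from the description. So I would devote the proof entirely to counting memory, noting that every stored quantity is either a single bit (an entry of the flag array $O$) or an (index, value) pair costing $O(\log N + \log R)$ bits, and that the data in memory at any moment is of two kinds: the $N/B$ ``coarse'' dictionaries $L_{iB}$ that are retained throughout (each holding at most $l$ pairs, so $O(lN/B)$ pairs in total; these also suffice to recompute every interval $[a_i,b_i]$ at the start of a pass), and the ``fine'' dictionaries $L_{i'}$ allocated for blocks that are currently active.

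The heart of the argument is a bound on how many blocks can be active simultaneously, which reduces to the sub-claim: every index $b$ lies in $[a_i,b_i]$ for only $O(l)$ values of $i$. I would restrict attention to blocks with $a_i<b_i$, since the degenerate blocks allocate no fine dictionaries. If $b\in[a_i,b_i)$ then, by the construction of the interval, $g_b(iB)<g_b((i+1)B)$; and if $b=b_i$ then, since $a_i<b_i$ forces $b-1\in[a_i,b_i)$, we get $g_{b-1}(iB)<g_{b-1}((i+1)B)$. So every block containing $b$ witnesses a strict increase, across that block, of $g_b$ or of $g_{b-1}$. By Lemma~\ref{gmonotonicity} both functions are non-decreasing in the window index, and each is a sum of a set of size at most $l$ and a set of size at most $l-1$, hence bounded by $2l-1$; since the endpoints $iB$ are increasing in $i$, a telescoping argument over the blocks shows $g_b$ (resp.\ $g_{b-1}$) can strictly increase across at most $2l-1$ blocks. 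Therefore $b$ lies in at most $4l-2=O(l)$ of the intervals. I expect this step --- in particular, cleanly handling the right endpoint $b=b_i$ and confirming that the degenerate blocks $a_i=b_i$ are harmless --- to be the main obstacle; everything else is bookkeeping.

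Given the sub-claim, while processing index $j$ the only blocks with allocated fine dictionaries are those with $j\in[a_i,b_i]$, of which there are $O(l)$; each contributes $B$ dictionaries of at most $l$ pairs, for $O(l^2 B)$ pairs. Adding the $O(lN/B)$ pairs of the retained coarse dictionaries and the $N/B$ bits of the flag array $O$ (which is of lower order), the space used by each pass is $O((lN/B+l^2B)(\log N+\log R))$ bits. Finally I would balance the two terms by taking $B=\sqrt{N/l}$, which makes both $lN/B$ and $l^2B$ equal to $l^{3/2}\sqrt N$, yielding the claimed bound $O(l^{3/2}N^{1/2}(\log N+\log R))$. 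This also confirms that the first pass, storing $N/B$ coarse dictionaries of $l$ pairs each, already fits within the same budget.
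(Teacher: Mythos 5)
Your proposal is correct and matches the paper's proof essentially line for line: same accounting of coarse versus fine dictionaries, same key counting claim (each index $b$ lies in $[a_i,b_i]$ with $a_i<b_i$ for at most $4l-2$ blocks, via $g_b$ or $g_{b-1}$ strictly increasing across the block and both being non-decreasing integer functions bounded by $2l-1$), and the same choice $B=\sqrt{N/l}$.
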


\begin{proof}
Correctness has been shown in the above lemmas. It's also been shown that the algorithm uses the desired number of output passes. As for space complexity, the first pass uses $O(N/B)$ dictionaries, and thus $\tilde{O}(lN/B)$ space. For any index $j$, the space used while processing element $j$ in any of the last $l$ passes is $\tilde{O}(lN/B + c_j lB)$, where $c_j$ is the number of blocks $i$ such that $j \in [a_i, b_i]$ and $a_i < b_i$: no space is used for blocks with $a_i = b_i$. But if $j \in [a_i, b_i]$ and $a_i < b_i$, then either $g_{j}(iB) < g_{j}((i+1)B)$ or $g_{j-1}(iB) < g_{j-1}((i+1)B)$. Since $g_{j}$ is a non-decreasing integer function bounded by $0$ and $2l-1$, the first inequality can only hold for $2l-1$ indices $i$. By the same logic, the second inequality can only hold for $2l-1$ values of $i$. Therefore in total, $c_j \leq 4l-2$. We conclude that the space complexity is $\tilde{O}(lN/B + l^2B)$. Setting $B = \sqrt{N/l}$ yields space complexity of $\tilde{O}(l^{3/2}N^{1/2})$.

\end{proof}

\subsection{A Two-Party Communication Algorithm}\label{sec:lcc}

In the communication complexity model, we only care about whether the rank-$l$ element in a window is located in the first half or in the second half of the input. Therefore to simplify notation, we will drop the subscripts ``$b$'', letting $\text{Conf}_l([i,j])$ refer to $\text{Conf}_{l,N/2}([i,j])$, letting $S_\text{cross}$ refer to $S_{N/2-\text{cross}}$, and letting $g(i)$ refer to $g_{N/2}(i)$.

The high-level idea of the communication algorithm for \ksmin{} below is to compute the set of indices $S_\text{cross} = \{S_1,\dots,S_{|S_\text{cross}|}\}$ where the $l^\text{th}$-smallest element crosses from one party's input to the other's. Pick any $c$ with $1 \leq c \leq |S_\text{cross}|$. By Lemma~\ref{gmonotonicity}, $g(i) < g(S_c)$ for all $i < S_c$ and $g(i) \geq g(S_c)$ for all $i \geq S_c$. Therefore a simple $d$-ary search algorithm to find all (at most $2l-1$) indices at which $g$ increases, will produce a set of candidate indices which contains $S_\text{cross}$. For this small set of candidates, it is inexpensive to check which indices are in fact crossing indices. Now the parties can alternate producing output, handing control to the other party when a crossing index is reached.

Here is a more detailed description, with $D = N^{1/(p-2l-1)}$ the search width parameter.

\begin{algorithm}
{\parindent0pt
1. For all $v \in [2l-1]$ in parallel, perform the following search algorithm for candidate $C_v$.

\qquad Initialize the search interval as $[\text{low},\text{high}) = [N/2-K-1,N/2]$.

\qquad \textbf{While} $\text{high} - \text{low} > 1$,

\qquad \qquad Alice picks $D$ evenly spaced indices $\{i_1,\dots,i_D\}$ in $[\text{low},\text{high}]$.

\qquad \qquad For each index $i_j$, Alice communicates $i_j$ and $\{A_i \mid i \in S_l([i_j,N/2))\}$ to Bob.

\qquad \qquad For each index $i_j$, Bob computes $\text{Conf}_{l-1}([i_j,i_j+K))$, $\text{Conf}_l([i_j,i_j+K))$, and then $g(i_j)$.

\qquad \qquad Bob sets $[\text{low},\text{high}) = (i_{j^*},i_{j^*+1}]$ where $j^* = \max \{j \mid g(i_j) < v\}$. 

\qquad \qquad If loop not over, Bob and Alice switch roles (so next time Bob picks $D$ indices)

2. Assume the loop ends with Bob's turn; if not, an extra round is used, in which Alice communicates all $C_v$ to Bob. For each $v \in [2l-1]$, Bob communicates $C_v$ and $\{A_i \mid i \in S_l([N/2,C_v+K))\}$ and $\{A_i \mid i \in S_l([N/2,C_v+K-1))\}$ to Alice. With this information, Alice computes $\text{Conf}_l([C_v,C_v+K))$ and $\text{Conf}_l([C_v-1,C_v+K-1))$, and determines whether $C_v \in S_\text{cross}$. She also determines $|\text{Conf}_l([i,i+K))|$ for all indices $i$.

3. For each $i \in [0,S_1)$, Alice outputs the rank-$l'(i)$ element of window $[i, \min(i+K,N/2))$, where $l'(i) = l - |\text{Conf}_l([N/2,i+K))|$. Then Alice communicates $S_\text{cross}$ to Bob, and all candidate indices at which the size of the $l$-configuration increases.

4. Bob determines $|\text{Conf}_l([i,i+K))|$ for all indices $i$. Then for each $i \in [S_1, S_2)$, Bob outputs the rank $l''(i)$ element of window $[N/2,i+K)$, where $l''(i) = |\text{Conf}_l([N/2,i+K))|$.

5. Alice and Bob alternate producing output until done.
}
\end{algorithm}

\begin{claim}
The above algorithm is correct.
\end{claim}

\begin{proof}
Note that it is possible for Bob to compute an $l$-configuration given his own input and the $l$ smallest elements of the intersection of the window with Alice's input. And $g$ can be computed from an $l$-configuration and an $(l-1)$ configuration. Hence the search algorithm is feasible.

If $i \in S_\text{cross}$, then $g(i) > g(i-1)$ by Lemma~\ref{gmonotonicity}, so $i$ is the first index at which $g$ reaches $g(i)$. Hence $i = C_{g(i)}$. Therefore the set of candidates contains $S_\text{cross}$.

Computing the configurations in step $(2)$ is feasible, as already argued. From these one can check whether the candidate is a crossing index, by checking if $l$ enters or exits the configuration. And Alice can compute the size of every $l$-configuration, since any location at which the size increases is a candidate index $C_v$ for some $v$, and Alice can determine whether the size increases at $C_v$ since she knows the entire $l$-configurations at $[C_v-1,C_v+K-1)$ and $[C_v,C_v+K)$.

Bob can compute $S_\text{cross}$ and the size of every $l$-configuration, from the information sent by Alice. Finally, it's clear that if the rank-$l$ element for any window $[i,i+K)$ lies in Alice's input, its rank in $[i,\min(i+K,N/2))$ is exactly $l - |\text{Conf}_l([N/2,i+K))|$, and similarly for Bob.
\end{proof}

\begin{theorem}\label{ccgen}
Let $l \geq 1$ and $p \geq 2l+2$. In the $p$-round communication complexity model, \ksmin{} can be solved using $\tilde{O}(pl^2 N^{1/(p-2l-1)})$ bits of communication.
\end{theorem}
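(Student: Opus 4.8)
The plan is to prove Theorem~\ref{ccgen} by a direct accounting of the algorithm just described. Correctness is already established by the preceding claim, so only two things remain: bounding the number of communication rounds, and bounding the total number of bits sent. I would organize the argument around the round budget, showing that step~1 consumes at most $p-2l-1$ rounds and that steps~2--5 fit into the remaining $2l+1$ rounds guaranteed by the hypothesis $p\ge 2l+2$.

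\textbf{Counting rounds.} In step~1 the $D$-ary search for each candidate starts from the interval $[N/2-K-1,N/2]$, of length at most $N$, and each iteration replaces $[\text{low},\text{high})$ by a subinterval delimited by two consecutive of the $D$ evenly spaced probes, so the length shrinks by a factor of order $D$ per iteration. Hence after $\log_D N = p-2l-1$ iterations (with $D=N^{1/(p-2l-1)}$) every search has pinned down its candidate $C_v$; since the $2l-1$ searches run in parallel this is at most $p-2l-1$ rounds. Step~2 is one round (Bob to Alice), with possibly one extra round to repair parity if the search loop ended on the wrong party's turn, so at most two rounds. For steps~3--5 the crucial input is the corollary to Lemma~\ref{gmonotonicity}, namely $|S_\text{cross}|<2l$: the output stream is therefore cut into at most $2l$ consecutive segments, the parties alternate which segment each is responsible for, and each of the at most $2l-1$ handoffs (the first of which is the message of step~3) costs one round and carries only $O(1)$ bits. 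Summing, steps~2--5 use at most $2+(2l-1)=2l+1$ rounds, and the total is at most $(p-2l-1)+(2l+1)=p$.

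\textbf{Counting bits.} Steps~2--5 communicate only $O(l)$ indices and values, hence $\tilde{O}(l)$ bits, which is dominated. The cost is concentrated in step~1. In one iteration, for each of the $2l-1$ parallel searches the active party sends $D$ probe indices, and for each probe $i_j$ it sends $i_j$ together with the $l$ values $\{A_i : i\in S_l([i_j,N/2))\}$; as noted in the preceding claim, this is exactly what the receiver needs to reconstruct $\text{Conf}_{l-1}$, $\text{Conf}_l$, and hence $g(i_j)$. Each index or value occupies $O(\log N+\log R)=\tilde{O}(1)$ bits, so an iteration costs $\tilde{O}(D\cdot l\cdot l)=\tilde{O}(Dl^2)$ bits, and summing over the at most $p-2l-1\le p$ iterations gives $\tilde{O}(pDl^2)$ bits in total. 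Substituting $D=N^{1/(p-2l-1)}$ yields $\tilde{O}(pl^2N^{1/(p-2l-1)})$, as claimed.

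\textbf{Where the care goes.} The content of the proof is not a hard inequality but the round bookkeeping, so the step I expect to demand the most attention is verifying that all the pieces fit into exactly $p$ rounds: that the search still reaches a unit interval in $p-2l-1$ rounds despite the minor off-by-one in how ``$D$ evenly spaced'' probes split an interval (harmless for $D\ge 3$, and one should either note or exclude the degenerate regime $p-2l-1>\log N$ where $D$ collapses toward $1$), that the single spare round allotted to step~2's parity fix always suffices, and that the output handoffs of step~5 genuinely fit within the $2l-1$ rounds made available by $|S_\text{cross}|<2l$. Once these are pinned down, the bit bound is immediate from the per-iteration count above.
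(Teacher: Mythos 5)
Your proof follows the same decomposition as the paper's — step~1 of the algorithm consumes $p-2l-1$ rounds of $D$-ary search with $\tilde{O}(l^2 D)$ bits per round, steps~2--5 fit into the remaining $2l+1$ rounds using the bound $|S_\text{cross}| \leq 2l-1$, and summing gives $\tilde{O}(pl^2 N^{1/(p-2l-1)})$ — and the reasoning is correct. One small bookkeeping slip: in step~2 Bob sends, for each of the $2l-1$ candidates $C_v$, two $l$-lowsets' worth of values, which is $\tilde{O}(l^2)$ bits rather than the $\tilde{O}(l)$ you state (the paper records this as $O(l \log N + l^2 \log R)$); similarly the first handoff of step~3 carries $O(l\log N)$ bits, not $O(1)$. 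Neither affects the conclusion, since both are dominated by the $\tilde{O}(pl^2 D)$ from step~1. Your observation that the $D$-ary search degenerates when $p-2l-1 > \log N$ (so that $D$ collapses toward $1$) is a real caveat the paper does not spell out, and is worth flagging.
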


\begin{proof}
Step $(1)$ of the algorithm takes $p - 2l - 1$ rounds. Each $v$ contributes $O(l D \log R)$ bits to each round, yielding total communication $O(l^2 D \log R)$ per round. Step $(2)$ uses at most $2$ rounds, with $O(l \log N + l^2 \log R)$ communication. The remaining steps use $2l$ turns and thus $2l-1$ rounds, since $|S_\text{cross}| \leq 2l-1$. In the first round after Alice has output some answers, she sends $O(l \log N)$ bits to Bob, and no communication need occur in the last $2l-2$ rounds.
\end{proof}

Note that in the current statement, the bound in Theorem~\ref{ccgen} does not match the bound in Proposition~\ref{cc3} when $l = 1$. A constant number of rounds can be eliminated from the protocol by methods such as that employed in Proposition~\ref{cc3}, but such optimizations were not the focus of the proof, so we have ignored them for now.

\section{A Lower Bound for \smin{}}\label{sec:lowerbounds}

One might hypothesize that increasing the number of passes from two to three or more would allow even more space-efficient algorithms for \smin. However, it turns out that for large $R$, the above algorithm is asymptotically optimal up to a logarithmic factor---at least, among streaming algorithms which use any constant number of input passes, and only one output pass. To prove this, we start by lower bounding the difficulty of \smin{} in the one-round communication complexity model.

Fix an input size $N$, a window size $K$, and a maximum element value $R$. Assume that $N \geq 2K$ and $R \geq K$. We'll provide a lower bound for the case $N = 2K$, since increasing $N$ further can only increase the difficulty of the problem. Our goal is to construct a set of inputs $S$ such that any communication algorithm which uses few bits of communication can only solve a small fraction of the input. In particular, let the set of Alice's inputs be $$A = \{a_1, \dots, a_K \mid 0 < a_1 \leq \dots \leq a_K < R\}.$$ For $0 \leq i \leq K/(2M)$, let $B(i)$ be the sequence of $2Mi$ $R$'s, followed by $K-2Mi$ $0$'s. Then let the set of Bob's inputs be $B = \{B(0),\dots,B(K/(2M))$. Let $S = A \times B$, so that every input consists of an input for Alice picked from $A$, concatenated with an input for Bob picked from $B$. For notational convenience, we will use $a$ to denote an element of $A$ and use $b$ to denote an element of $B$.

This choice of inputs is motivated by the following observation. As the window slides through the input, the index of the minimum element never decreases, and eventually the minimum element enters Bob's input. If Alice knew the index of the last window for which the minimum was in her input, she could determine the minimums of all windows up until that last window, and Bob could determine the rest. Conversely, intuition suggests that if Alice does not know that index, it is hard for Alice and Bob to determine all minimums in one round. And we have chosen a set of inputs $S$ for which knowing Alice's input does not yield any information about the index of the last window. Therefore intuitively, for most inputs either Alice produces too much output or Bob produces too much output, so we can bound the fraction of inputs for which the output is correct. The following lemma works out the exact bounds.

\begin{lemma}\label{lem:badinputs}
Consider the two-party communication complexity model of \smin{} in which the window size is $K$, the input size is $N \geq 2K$, and the maximum element value is $R \geq K$. Let $M \geq 2 \log K$. Then there is a fixed set of inputs $S$ and a constant $c$ so that no $1$-round deterministic algorithm using $M$ bits of communication solves \smin{} (that is, outputs the correct $N+1-K$ sliding window minimums) on more than $cM|S|/K$ inputs in $S$.
\end{lemma}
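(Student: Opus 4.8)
The plan is to use a counting/charging argument against the communication transcript. Fix a deterministic $1$-round protocol using at most $M$ bits of communication. Alice's message is a function of her input $a \in A$, taking at most $2^M$ distinct values. After Bob receives Alice's message, the set of window-answers that Alice herself produces is determined by Alice's input alone (since she has no access to Bob's input before sending); so partition $A$ according to which prefix of windows Alice outputs. More precisely, since Alice must output a contiguous initial block of windows $[0,t_a)$ for the overall output order to be consistent in the single output pass, her behavior is summarized by the threshold $t_a$ together with the values she writes. The crucial point is that Alice's threshold $t_a$ is a function of her input only, and the number of distinct thresholds that can arise is at most $2^M$ (one per message), since the threshold and the outputted values must both be derivable from the message by Bob.

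Next I would pin down, for a given pair $(a,b) \in S$, what the ``correct'' last-Alice-window index $i^*(a,b)$ is. For $b = B(i)$ the minimum of window $[j, j+K)$ is the minimum of $a_{j+1},\dots,a_K$ (an entry of Alice's input) as long as the window has not yet reached Bob's block of $R$'s followed by $0$'s; once a $0$ from Bob enters the window the minimum is $0$, which lies in Bob's input. So $i^*(a, B(i))$ is essentially $2Mi$ shifted by $K$, i.e. it is a value that ranges over an arithmetic progression of $K/(2M)+1$ distinct values as $i$ varies, \emph{independently of $a$}. This is exactly the point of the product construction $S = A \times B$: the ``boundary'' Bob induces is spread out over $\Omega(K/M)$ equally-likely positions, and Alice's input tells her nothing about which one it is.

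Now the charging: fix $a$, so Alice's threshold $t_a$ is fixed. For the output to be correct on input $(a, B(i))$, every window with index $< i^*(a,B(i))$ whose answer Alice is responsible for must be output by Alice, and every window with index $\geq i^*(a,B(i))$ that Bob is responsible for must be output by Bob; combined with the contiguity of Alice's output block, correctness forces $t_a = i^*(a, B(i))$ unless the windows in the ``gap'' happen to be handled correctly by the other party anyway --- but for the windows straddling the true boundary, the minimum genuinely switches from an Alice-value (which Bob cannot know, as $a$ is unconstrained given the message) to $0$, so Bob cannot output those pre-boundary answers and Alice cannot output the post-boundary $0$'s consistently with a wrong threshold. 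Hence for each fixed $a$, at most one value of $i$ (the one with $i^*(a,B(i)) = t_a$) can yield a correct output --- wait, more carefully: at most \emph{one} $i$ per distinct value of $t_a$, but $t_a$ varies with $a$. So I instead fix Alice's \emph{message} $m$ (at most $2^M$ choices); all $a$ with message $m$ share the same $t_a =: t_m$ and the same outputted values; correctness on $(a, B(i))$ then requires $i^*(a, B(i)) = t_m$, which pins $i$ to a single value $i_m$, and additionally requires that the $l$ (here $l=1$) values Alice outputs around the boundary match $a$'s actual suffix-minima --- which, since all $a$ with message $m$ produce the \emph{same} output values, can hold for at most... this is the step that needs care.

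The main obstacle, and where I expect to spend the real work, is converting ``Alice's message has $2^M$ values'' into a bound of the form $cM|S|/K$ rather than merely $2^M \cdot (\text{something})$. The resolution should be that $M \geq 2\log K$ is exactly the regime where $2^M \geq K^2$ is large, so an exponential-in-$M$ bound is useless and we instead need a \emph{linear}-in-$M$ bound; this must come from the fact that Bob's message is the empty string (one round, Alice-to-Bob) \emph{or}, if the model allows Bob to also write output based only on the message, from the observation that the number of windows near the true boundary on which an error is forced is $\Omega(K/M)$ out of the $K/(2M)$ choices of $i$ --- no. Let me restate the likely correct mechanism: for a fixed message $m$ with threshold $t_m$, the inputs $(a, B(i))$ that could possibly be correct are those with $i = i_m$ fixed, so the correct inputs with message $m$ number at most $|\{a : \text{message}(a) = m\}|$, and summing over $m$ gives at most $|A| = |S|/(K/(2M)+1) \approx 2M|S|/K$ correct inputs total. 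So the key steps, in order, are: (i) argue Alice's output block is a contiguous prefix determined by her message; (ii) compute $i^*(a, B(i))$ and observe it is an injective function of $i$ not depending on $a$; (iii) argue correctness forces $t_m = i^*(\cdot, B(i))$ hence fixes $i$ given $m$; (iv) count: correct inputs $\leq \sum_m |m^{-1}| = |A| = O(M|S|/K)$. The obstacle is step (iii) --- ruling out that a ``wrong'' threshold could still be globally correct because the straddling windows happen to be covered --- which is where the unconstrained nature of $a$ given $m$, together with $R \geq K$ ensuring the suffix-minima of $a$ can be genuinely distinct across the family, does the work.
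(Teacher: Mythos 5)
Your proposal correctly identifies the high-level structure — the monotonicity of the minimum's index, the contiguity of Alice's output prefix, the fact that the threshold $t_a$ must be a function of the message alone, and the target count of roughly $|A| \approx 2M|S|/K$ — and this matches the paper's framework. You also correctly flag that the crux is ruling out ``wrong thresholds'' (your step (iii)). But you do not close that gap, and the assertion you offer in its place is false as stated.

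Specifically, you claim that ``Bob cannot output those pre-boundary answers'' when $t_m$ is smaller than the true boundary $i^*(a,B(i))$, because ``$a$ is unconstrained given the message.'' This is not true: Bob sees an $M$-bit message $m(a)$, which \emph{does} constrain $a$, and nothing prevents Bob from outputting the missing Alice-minimums $a_{t_m+1},\dots,a_{2Mi}$ if $m$ happens to encode enough about them. So correctness does \emph{not} force $t_m = i^*(a,B(i))$; the case $t_m \le i^*(a,B(i)) - \Theta(M)$ is genuinely possible and must be bounded separately. The paper handles this in its second case: for each message and each $i$, Bob's output of the $\ge 2M$ missing values is one of at most $2^M$ fixed strings, and the number of non-decreasing $a \in [1,R-1]^K$ agreeing with any one such string on a fixed window of $2M$ coordinates is at most $\binom{K-2M+R-1}{K-2M}$. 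Dividing by $|A| = \binom{R+K-1}{K}$ and using $R \ge K$ together with $M \ge 2\log K$, this fraction collapses to at most $K/2^M \le 1/K$, which is what makes the total come out to $O(M|S|/K)$. Without this counting step the proof is incomplete, and replacing it by the claim ``Bob cannot do it'' produces a false lemma statement (a protocol could, e.g., just send $a_1,\dots,a_{M/\log R}$ and let Bob cover a small overlap).

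A secondary inaccuracy: you write that ``all $a$ with message $m$ produce the same output values'' for Alice. That is not true and not needed — Alice sees $a$ and so her output values can and do depend on $a$; only the output \emph{count} $t_a$ is pinned by $m$. The quantity that \emph{is} fixed by $(m,b)$, and whose limited range ($2^M$ options) drives the second case, is \emph{Bob's} output. The paper's case~1 (matching the count you want) also relies not on fixing the message but on a cleaner observation you don't quite make: for a fixed $a$, the number of positive values among Alice's first $\mathrm{out}_A(a)$ outputs is a single number determined by $a$, while the true number of positive minimums among those windows is $\min(\mathrm{out}_A(a), 2Mi)$, which varies with $i$; hence at most $O(1)$ values of $i$ can make Alice's prefix correct, in the regime where Alice is not badly under-outputting. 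Combining that with the binomial-coefficient bound on the under-outputting regime is what you are missing.
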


\begin{proof}
For any input $(a,b) \in S$, Alice outputs several integers, which are supposed to equal the minimum values in the first several windows of size $K$. Then Alice passes a message of length $M$ to Bob, and Bob outputs several integers. Let $\text{out}_A(a)$ be the number of integers output by Alice; crucially, this number does not depend on Bob's input $b$. However, the number of minimums in Alice's input does. Suppose Bob's input is $b = B(i)$. Then the minimums of the first $2Mi$ windows are in Alice's input, and the minimums of the remaining $K - 2Mi$ windows are in Bob's input, and are equal to $0$.

There are two broad cases we'll bound separately. Either Alice produces ``too much'' output, in which case we obtain a straightforward bound on the fraction of such inputs which she solves; or Bob produces ``too much'' output. In the latter case, we subdivide by the exact value of $\text{out}_A(a)$ and bound the fraction of such inputs solved by Bob, in terms of the communication used.
\begin{enumerate}
\item We bound the number of solved inputs $(a,B(i)) \in S$ such that $\text{out}_A(a) > 2Mi - 2M$. Fix $a \in A$. The number of positive window minimums among the first $\text{out}_A(a)$ windows of input $(a,B(i))$ is different for each $i \in [0,K/(2M)]$ satisfying the above inequality. But the number of positive integers output by Alice is completely determined by her input $a$, so there is at most one $i \in [0, K/(2M)]$ satisfying $\text{out}_A(a) > 2Mi - 2M$ for which Alice gives the correct output. In total, there are at most $2M|S|/K$ inputs satisfying the inequality, for which correct output is given.

\item We bound the number of solved inputs $(a,B(i)) \in S$ such that $\text{out}_A(a) \leq 2Mi - 2M$. Pick some $v \in [0,K]$ and let $A_v = \{ a \in A \mid \text{out}_A(a) = v\}$. For any $a \in A_v$ and $B(i) \in B$, if $v = \text{out}_A(a) \leq 2Mi - 2M$, then Bob must output $a_{2Mi-2M+1}, a_{2Mi-2M+2}, \dots, a_{2Mi}$, since these $2M$ integers in Alice's input are the minimums of $2M$ windows which Alice should have output but did not. Fix some $B(i) \in B$ with $v \leq 2Mi - 2M$. Then Bob can output at most $2^M$ different ``guesses" for $a_{2Mi-2M+1}, \dots, a_{2Mi}$ on different inputs $(a,B(i))$, one for each communication string of length $M$ sent by Alice. So if Alice's input $a \in A_v$ has an assignment of $a_{2Mi-2M+1}, \dots, a_{2Mi}$ which is not in Bob's $2^M$ guesses, then Bob will produce incorrect output.

We want to bound the number of inputs $a \in A_v$ with assignments of the elements $a_{2Mi-2M+1}, \dots, a_{2Mi}$ that do match one of Bob's guesses. Fix one of Bob's guesses for these $2M$ elements. The number of non-decreasing sequences of length $K$ in $[1,R-1]$ which match this guess, is at most the number of non-decreasing sequences of length $K-2M$ in $[1, R-1]$, which is approximately $\binom{K-2M+R-1}{K-2M}$. Summing over all of Bob's $2^M$ guesses, there are at most $2^M \binom{K-2M+R-1}{K-2M}$ inputs $a \in A_v$ so that input $(a, B(i))$ yields correct output.

Summing over all $v \in [0,K]$ and $i \in [0, K/(2M)]$, the number of inputs $(a,B(i))$ with $\text{out}_A(a) \leq 2Mi - 2M$ for which correct output is produced is at most $(K^2/M) 2^M \binom{K-2M+R-1}{K-2M}$ . But $|A| = \binom{R+K-1}{K}$ and $|B| = K/M$, so the total number of inputs under consideration is $|S| = (K/M) \binom{R+K-1}{K}$. Repeatedly applying the identity $\binom{n}{k} = \frac{n}{k} \binom{n-1}{k-1}$ to the binomial coefficient and using the assumption $R \geq K$, we can show that the fraction of inputs on which correct output is produced is at most $K/2^{M}$. Under the assumption $M \geq 2 \log K$, this fraction is at most $1/K$.

\end{enumerate}

Summing over both cases the fractions of input which yield correct output, and setting the constant $c$ appropriately, yields the desired result.
\end{proof}

It remains to extend the above communication bound to a streaming bound. We want a stronger streaming bound than is possible to obtain in the communication model, so we must ``compound'' the communication bound somehow. Fix an input size $N$ and a space bound $M = O(\sqrt{N})$. Suppose an algorithm \textsc{algo} in the $p$-pass streaming model uses $M$ space to solve \smin{} for any $K$, when $R \geq K$. Our bound will be a worst-case bound over all window sizes $K$; so we'll choose $K = 2cM$. For these parameters, define a set of inputs $S'$ by $S' = S^{N/(2K)}$, so that for each block of size $2K$ an input is independently chosen from the set $S$ described in Lemma~\ref{lem:badinputs}.

By the following claim, we can identify a large set of inputs $S'' \subseteq S'$ for which a large fraction of the output is produced in a single, fixed pass.

\begin{claim}
There is a pass index $p'$, an input index $i'$, and a set $S'' \subseteq S'$ of size at least $|S'|/(pN)$, such that \textsc{algo} outputs the window minimums with starting indices in $[i', i' + (N-K)/p]$ during pass $p'$, for every input $s \in S''$.
\end{claim}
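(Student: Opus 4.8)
The plan is to prove the claim by a straightforward averaging (pigeonhole) argument over passes and over ``chunks'' of the output stream. Recall that \textsc{algo} uses $p$ input passes and a single output pass, meaning the $N-K+1$ window minimums are written to the output stream in order, and each position is written exactly once. First I would partition the output positions $[0, N-K]$ into $p$ consecutive intervals of length roughly $(N-K)/p$ each; call the $q$-th such interval $I_q$. For a fixed input $s$, each output position is written during exactly one of the $p$ input passes, so by pigeonhole there is some pass $p'(s)$ during which at least a $1/p$ fraction of the positions in $I_1$—equivalently at least $(N-K)/p^2$ positions scattered somewhere in $[0,N-K]$—are written. Actually the cleaner route: since the output is produced in order within the single output pass, the set of positions written during pass $q$ of the input is itself a contiguous interval $J_q(s)$ (the passes carve the output stream into $p$ consecutive blocks). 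So there are at most $p$ candidate ``breakpoints'', and at least one of the intervals $I_q$ is entirely contained in a single $J_{q'}(s)$-block: indeed, $p$ breakpoints can ``spoil'' at most $p$ of the... no—I would instead observe that the $p-1$ breakpoints of $\{J_q(s)\}$ lie in at most $p-1$ of the intervals $I_1,\dots,I_p$, so for every input $s$ there is at least one interval $I_{i(s)}$ that is output entirely within a single pass $p'(s)$.

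Now I would apply pigeonhole over the choices of $(i(s), p'(s))$: there are at most $p \cdot p = p^2$ such pairs (and in fact we can be coarser and say at most $pN$, matching the statement), so some fixed pair $(i', p')$ occurs for at least $|S'|/p^2 \geq |S'|/(pN)$ of the inputs $s \in S'$. Let $S''$ be that set of inputs. By construction, for every $s \in S''$, all window minimums with starting index in $I_{i'} = [i', i' + (N-K)/p]$ are output during pass $p'$. That is exactly the conclusion, once I relabel $I_{i'}$ as $[i', i'+(N-K)/p]$ (up to the harmless rounding of the interval lengths, which I would absorb into the constants / floor notation as elsewhere in the paper).

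The one point requiring care—and the main obstacle—is justifying that within the single output pass, the positions written during a given \emph{input} pass form a \emph{contiguous} block of the output stream. This is where the ``single output pass'' hypothesis is essential: the model guarantees the overall sequence of writes is in increasing order of output position, but writes are interleaved across the $p$ input passes. The key observation is that across the whole run the writes occur in increasing position order, so if position $x$ is written in input pass $q$ and position $y>x$ is written in input pass $q' < q$, the write to $y$ happens strictly before the write to $x$ in real time, yet $y$ is written in an earlier input pass than $x$; since each input pass occurs entirely before the next, this is impossible unless... one must check this against the definition more carefully. Concretely: all writes during input pass $1$ precede all writes during input pass $2$, etc., and writes are globally position-increasing; hence the positions written in input pass $q$ are an interval, and these intervals $J_1(s), \dots, J_p(s)$ are consecutive and partition $[0,N-K]$. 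Once this structural fact is nailed down, the rest is pure counting. I would state it as a short sub-observation at the start of the proof and then run the two pigeonhole steps above.
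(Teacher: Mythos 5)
Your proposal is correct and follows essentially the same structure as the paper's proof: both rely on the single-output-pass assumption to deduce that the positions written during a given input pass form a contiguous block of the output stream, and then apply a two-stage pigeonhole argument (over passes and over starting indices) to extract a fixed $(p', i')$ working for a large fraction of inputs. Your refinement of fixing $p$ intervals $I_1,\dots,I_p$ in advance, rather than letting $i'$ range over all $O(N)$ positions, gives the marginally sharper bound $|S'|/p^2$ in place of $|S'|/(pN)$, but the underlying ideas are identical.
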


\begin{proof}
For any input, there is some pass during which the algorithm outputs at least $(N+1-K)/p$ window minimums, by the pigeonhole principle. By our running assumption (unless stated otherwise) that only a single output pass is used, the window minimums output in one pass are the minimums of consecutive windows. Therefore applying pigeonhole again, there is some pass index $p'$ and some input index $i'$ so that for at least $|S'|/(pN)$ inputs in $S'$, the algorithm outputs the window minimums with starting indices $\{i', i'+1, \dots, i' + (N-K)/p\}$ during pass $p'$. Let the set of inputs in $S'$ satisfying this condition be denoted by $S''$.
\end{proof}

We can now focus entirely on pass $p'$, reducing the multi-pass algorithm to a one-pass algorithm with $M$ bits of advice. There are $2^M$ possibilities for the memory stored by \textsc{algo} immediately before pass $p'$. Fixing any one such possibility induces a one-pass deterministic algorithm. There must be at least one possibility which occurs for at least $|S''|/2^M$ inputs in $S''$. Call this algorithm \textsc{solve-p$'$}, and let $T \subseteq S''$ be the set of inputs which induce \textsc{solve-p$'$}. To bound $M$, we'll provide an opposing upper bound on $|T|/|S'|$ and therefore on $|T|/|S''|$. To bound $|T|/|S'|$, we'll show that for many blocks of size $2K$, the set $T$ only takes on a fraction of the potential values on that block. The following claim is the key.

\begin{claim}
There are $N/(4pK)$ blocks such that for all inputs in $T$, \textsc{solve-p$'$} outputs the (correct) window minimums for all windows with starting indices in any of these blocks.
\end{claim}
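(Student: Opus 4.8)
The claim asserts that the set $T\subseteq S''$ is ``compressed'' on many of the $N/(2K)$ blocks. The strategy is to first note that the one-pass algorithm \textsc{solve-p$'$} is outputting, for every input in $T$, the correct window minimums for all windows with starting indices in the interval $[i', i' + (N-K)/p]$. This interval has length $(N-K)/p \geq N/(2p)$, which spans at least $N/(2p)/(2K) = N/(4pK)$ complete blocks of size $2K$ (up to additive constants I would absorb). Call these the \emph{relevant blocks}; there are at least $N/(4pK)$ of them, and on each relevant block, \textsc{solve-p$'$} outputs the correct window minimums for every window whose starting index lies in that block, for every input in $T$. This gives exactly the $N/(4pK)$ blocks promised by the claim.

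First I would make precise the counting of how many full blocks of size $2K$ fit inside $[i', i'+(N-K)/p]$: since the blocks partition $[0,N)$ into consecutive intervals of length $2K$, an interval of length $(N-K)/p$ contains at least $\lfloor (N-K)/(2pK)\rfloor - 1$ fully-contained blocks, and for $N$ large relative to $K = 2cM$ and $p$ constant this is at least $N/(4pK)$. (This is where I'd quietly use $M = O(\sqrt N)$, so $K = 2cM = O(\sqrt N)$ and $N/K = \Omega(\sqrt N)$ is large.) Second, I would observe that ``outputs the window minimums for all windows with starting indices in $[i', i'+(N-K)/p]$ during pass $p'$'' — which holds for all of $S''\supseteq T$ by the previous claim — combined with correctness of \textsc{algo} on all of $S'$ (hence on $T$), means that the output produced by \textsc{solve-p$'$} on these windows is \emph{correct}; and each such window's starting index lies in one of the relevant blocks, so restricting attention to any single relevant block, every window starting in that block has its minimum output correctly by \textsc{solve-p$'$} on every input in $T$. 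That is the statement to be proved.

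The only subtlety — and the main thing to get right rather than a genuine obstacle — is bookkeeping at the block boundaries: a window of length $K$ starting in a given block of length $2K$ may extend into the next block, but since $K \le 2K$ it never extends past the block immediately following, and its starting index is what matters for the claim, so no issue arises. I would also need the window minimums output in pass $p'$ to be \emph{exactly} the minimums of the consecutive windows with those starting indices (no reordering within the pass), which is guaranteed by the single-output-pass assumption, just as in the proof of the previous claim. So the plan is: (i) count the full blocks inside the output interval of pass $p'$, getting $\ge N/(4pK)$ of them; (ii) combine the defining property of $S''$ (all of $[i', i'+(N-K)/p]$ output in pass $p'$) with correctness on $S'$ to conclude \textsc{solve-p$'$} is correct on every window starting in a relevant block, for every input in $T$; (iii) note boundary windows cause no trouble since $K$ is one block-length. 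Each step is short; there is no hard analytic content, only careful indexing.
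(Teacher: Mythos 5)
Your proposal is correct and takes essentially the same approach as the paper: use the defining property of $S''$ (plus $T\subseteq S''$ and overall correctness of \textsc{algo}) to see that \textsc{solve-p$'$} outputs the correct minimums for all starting indices in $[i',\,i'+(N-K)/p]$, then count that this interval contains at least $(N-K)/(2pK)-O(1)$ full blocks, which is at least $N/(4pK)$ once $K=2cM=O(\sqrt N)$ forces $N\ge(8p+2)K$ for large $N$.
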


\begin{proof}
By the properties of \textsc{algo} on $S''$, and since $T \subseteq S''$ is the set of inputs for which \textsc{algo} induces \textsc{solve-p$'$} for pass $p'$, it's clear that for any input in $T$, \textsc{solve-p$'$} outputs the correct minimums for all windows with starting indices in $[i', i' + (N-K)/p]$. This interval contains at least $((N-K)/(2pK)) - 2$ blocks. Since $K = O(\sqrt{N})$, we have $N \geq (8p+2)K$ for sufficiently large $N$, so the number of blocks is at least $N/(4pK)$.
\end{proof}

For each block, we'd like to use \textsc{solve-p$'$} to simulate a one-round communication algorithm solving all inputs $s \in S$ achieved by $T$ on that block, so that Lemma~\ref{lem:badinputs} applies. However, this would require the following procedure: for any $(a,b) \in S$ which is achieved by $T$ on a fixed block, find some string $s$ such that $sab$ is a prefix of a string in $T$, without knowing $b$. In general, this cannot be done. However, it can be done for a constant fraction of inputs, after discarding a constant fraction of the blocks. 

Formally, for $1 \leq i \leq N/(2K)$ and $s \in (A \times B)^{i-1}$ and $a \in A$, let $m_{i,s,a}$ be the number of $b \in B$ such that there exists some $s' \in (A \times B)^{N/(2K)-i}$ with $sabs' \in T$. Also, for $1 \leq i \leq N/(2K)$, let $n_i$ be the number of $a \in A$ such that there exists some $s \in (A \times B)^{i-1}$ for which $m_{i,s,a} \geq 9|B|/10$. If $n_i/|A|$ is bounded above zero, then we can achieve a nontrivial bound for block $i$, as seen in the following simulation lemma.

\begin{lemma}
Let $[2Kj, 2K(j+1))$ be a block such that $n_j \geq 9|A|/10$. Then the number of $(a,b) \in S$ such that some element of $T$ contains $ab$ at block $j$ is at most $3|S|/5$.
\end{lemma}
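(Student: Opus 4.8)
The plan is to set up a one-round communication protocol for \smin{} on block $j$ that uses \textsc{solve-p$'$} as a subroutine, so that Lemma~\ref{lem:badinputs} can be invoked to bound how many inputs from $S$ can be "achieved" by $T$ at block $j$. The subtlety, as flagged in the paragraph preceding the statement, is that simulating the streaming algorithm requires Alice (holding $a$) to reconstruct a consistent prefix $s$ for the blocks before $j$ without knowing Bob's half $b$ of block $j$. The condition $n_j \geq 9|A|/10$ is exactly what makes this possible for most $a$: by definition of $n_j$, for at least $9|A|/10$ of the values $a \in A$ there is a prefix $s = s(a) \in (A \times B)^{j-1}$ for which $m_{j,s,a} \geq 9|B|/10$, meaning that for at least $9|B|/10$ of the choices of $b \in B$ there is a completion $s'$ with $s\,a\,b\,s' \in T$. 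Fix such a choice of prefix $s(a)$ for each of these "good" $a$'s once and for all (as part of the protocol, not depending on $b$).

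Next I would describe the protocol. Alice, given $a \in A$: if $a$ is not good, she aborts (outputs garbage). Otherwise she runs \textsc{solve-p$'$} on the stream $s(a)\,a$ — i.e.\ the fixed prefix followed by her own half of block $j$ — and then continues running it on her own half of block $j$ as the streaming input, outputting whatever window minimums \textsc{solve-p$'$} outputs that fall within the windows starting in block $j$ and lying in Alice's part; she sends Bob the $M$-bit memory state of \textsc{solve-p$'$} at the point where block $j$'s Alice-portion ends. Bob, given $b$, resumes \textsc{solve-p$'$} from that memory state on his half of block $j$, and on a suitable fixed continuation of the stream beyond block $j$ (any fixed continuation works, since we only need correctness of the outputs inside block $j$), outputting the remaining window minimums for block $j$. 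By the claim preceding this lemma, for every input in $T$, \textsc{solve-p$'$} outputs the correct window minimums for all windows starting in block $j$; hence whenever $a$ is good and $b$ is one of the $\geq 9|B|/10$ values with a completion $s'$ making $s(a)\,a\,b\,s' \in T$, the simulated protocol outputs \smin{} correctly on the block-$j$ input $(a,b)$. This protocol uses exactly $M = cK/(2c) $... more precisely $M$ bits, and $K = 2cM$, so it uses $M = K/(2c)$ bits of one-round communication on inputs of size $2K$; Lemma~\ref{lem:badinputs} (with $M \geq 2\log K$, which holds for $N$ large) then says it solves \smin{} correctly on at most $cM|S|/K = |S|/2$ inputs in $S$.

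Finally I combine the counts. The set of $(a,b) \in S$ on which the protocol is correct contains every $(a,b)$ with $a$ good and $b$ good-for-$a$; the number of such pairs is at least $(9|A|/10)(9|B|/10) = (81/100)|S| > (3/5)|S|$ minus the at most $|S|/2$ cap from Lemma~\ref{lem:badinputs}. Wait — rather, the number of pairs achieved by $T$ at block $j$ that additionally fall in this "good" region is at most $|S|/2$; the pairs achieved by $T$ but falling outside the good region number at most $|S| - (81/100)|S| = (19/100)|S|$ (those with $a$ bad) plus, for each good $a$, at most $|B|/10$ bad-for-$a$ values of $b$, contributing another $(1/10)|S|$. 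So the total number of $(a,b) \in S$ achieved by some element of $T$ at block $j$ is at most $|S|/2 + (19/100)|S| + (1/10)|S| < (3/5)|S|$ — here I should double-check the arithmetic, choosing the $9/10$ thresholds so the bound closes; if the constants are slightly off I would tune $9/10$ to $1-\epsilon$ and the cap constant accordingly. The main obstacle is purely the bookkeeping of which pairs are "controlled" versus "uncontrolled": the non-obvious point is that the simulation only forces a restriction for $(a,b)$ in the good region, so the overall fraction $3/5$ is the cap $1/2$ plus the measure of the uncontrolled region, and the $n_j \geq 9|A|/10$ hypothesis is precisely calibrated to keep that sum below $1$ (indeed below $3/5$).
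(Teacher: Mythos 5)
Your proposal follows the same approach as the paper's proof: split the pairs achieved by $T$ at block $j$ into those with bad $a$, those with good $a$ but a $b$ for which $s(a)ab$ is not a $T$-prefix, and the remaining ``controlled'' pairs; simulate \textsc{solve-p$'$} as a one-round protocol in which Alice picks $s(a)$ maximizing $m_{j,s,a}$ and communicates the $M$-bit memory state; and invoke Lemma~\ref{lem:badinputs} on the controlled set. As you note, your constants as written do not close, but neither do the paper's own (its three case bounds sum to $7|S|/10$, not $3|S|/5$); since Theorem~\ref{thm:lbound} only needs a constant strictly below $1$, this is immaterial and the exact fraction in the lemma statement is freely tunable.
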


\begin{proof}
We want to bound the number of $a \in A$ and $b \in B$ such that some element of $T$ contains $ab$ at block $j$: formally, $sabs' \in T$ for some prefix $s \in (A \times B)^{j}$ and suffix $s' \in (A \times B)^{N/(2K) - j - 1}$. There are two cases to consider. In the first case, let $a \in A$ be such that for every $s \in (A \times B)^j$ we have $m_{i,s,a} \leq 9|B|/10$. There are only $|A| - n_j \leq |A|/10$ such choices of $a$. For each choice of $a$, there are at most $|B|$ choices of $b$; so in total there are at most $|S|/10$ choices of $a$ and $b$ satisfying the given property.

In the second case, we want to bound the number of $a \in A$ and $b \in B$ such that some element of $T$ contains $ab$ at block $j$, and $\max_{s \in (A \times B)^j} m_{i,s,a} \geq 9|B|/10$. Consider the following one-round communication algorithm, in which Alice is given an input $a \in A$ and Bob is given an input $b \in B$. Alice picks some $s = s_a \in (A \times B)^j$ which maximizes $m_{i,s,a}$ (the maximal value may be $0$). She runs \textsc{solve-p$'$} on $sa$, feeding to her output stream all output for windows in $ab$. She communicates the final memory state of \textsc{solve-p$'$} to Bob, who initializes \textsc{solve-p$'$} to that state and runs the algorithm on $b$. Consider the state $Q$ of the algorithm after processing $sab$. When \textsc{solve-p$'$} is run on elements of $T$, it eventually produces correct output for all windows in block $j$. So there is a function from states $Q$ to remaining output sequences, which Bob can apply to print the remaining output. If $sab$ is a prefix of some element of $T$, then Alice's output is correct, as is Bob's output. Let $I$ be the set of pairs $(a,b)$ such that some element of $T$ contains $ab$ at block $j$, and $\max_{s \in (A \times B)^j} m_{i,s,a} \geq 9|B|/10$; let $I_\text{correct} \subseteq I$ be the set of such pairs for which $s_aab$ is a prefix of some element of $T$. If $(a,b) \in I \setminus I_\text{correct}$, then there are only $|B|/10$ choices for $b$, so $|I \setminus I_\text{correct}| \leq |A| \cdot |B|/10 = |S|/10$. But we also have $|I_\text{correct}| \leq cM|S|/K = |S|/2$ by Lemma~\ref{lem:badinputs}, so $|I| \leq 11|S|/20$.

In total, the number of $a\in A$ and $b \in B$ such that some element of $T$ contains $ab$ at block $j$ is at most $|S|/10 + 11|S|/20 = 3|S|/5$.
\end{proof}

It only remains to show that $n_i \geq 9|A|/10$ for sufficiently many blocks. This follows from the density of $T$ in $S'$ and the following, purely combinatorial, lemma.

\begin{lemma}\label{lem:sparse}
Let $A$ and $B$ be sets, and let $T \subseteq (A \times B)^d$. For $1 \leq i \leq d$ and $s \in (A \times B)^{i-1}$ and $a \in A$, let $m_{i,s,a}$ be the number of $b \in B$ such that there exists some $s' \in (A \times B)^{d-i}$ with $sabs' \in T$. Also, for $1 \leq i \leq d$, let $n_i$ be the number of $a \in A$ such that there exists some $s \in (A \times B)^{i-1}$ for which $m_{i,s,a} \geq 9|B|/10$. Then the set of $i \in [d]$ for which $n_i \leq 9|A|/10$ has size at most $c_{\text{log}} \log \frac{(|A| \cdot |B|)^d}{|T|}$, where $c_{\text{log}} = 1/\log \frac{100}{99}$. 
\end{lemma}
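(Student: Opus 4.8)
The plan is to reduce the statement to a per-coordinate ``density drop'' and then compound the drops multiplicatively across coordinates. For a prefix $s \in (A \times B)^{i-1}$, let $E_s$ be the set of pairs $(a,b) \in A \times B$ such that some extension $sabs'$ lies in $T$; by definition of $m_{i,s,a}$ we have $|E_s| = \sum_{a \in A} m_{i,s,a}$. Set $M_i = \max_{s \in (A\times B)^{i-1}} |E_s|$. I will call an index $i$ \emph{bad} if $n_i \leq 9|A|/10$, and the goal is to bound the number of bad indices.

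\emph{Step 1 (density drop at a bad index).} First I would show that if $i$ is bad, then $M_i \leq \tfrac{99}{100}|A||B|$. Let $G_i$ be the set of $a \in A$ for which some prefix $s$ achieves $m_{i,s,a} \geq 9|B|/10$, so $|G_i| = n_i$; for every $a \notin G_i$ and every prefix $s$ we have $m_{i,s,a} \leq 9|B|/10$. Splitting the sum $|E_s| = \sum_{a} m_{i,s,a}$ over $G_i$ and its complement, bounding $m_{i,s,a} \leq |B|$ on $G_i$ and $m_{i,s,a}\leq 9|B|/10$ off $G_i$, gives $|E_s| \leq n_i|B| + (|A|-n_i)\tfrac{9|B|}{10}$; this is increasing in $n_i$, so plugging in $n_i \leq 9|A|/10$ yields $|E_s| \leq \tfrac{9}{10}|A||B| + \tfrac{9}{100}|A||B| = \tfrac{99}{100}|A||B|$, as claimed.

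\emph{Step 2 (telescoping).} Next I would prove the bound $|T| \leq \prod_{i=1}^d M_i$ by induction on $d$. Write $|T| = \sum_{(a,b) \in E_{\emptyset}} |T_{(a,b)}|$, where $T_{(a,b)} \subseteq (A\times B)^{d-1}$ is the set of length-$(d-1)$ suffixes $t$ with $(a,b)t \in T$. The number of nonzero summands is at most $|E_{\emptyset}| \leq M_1$, and each $|T_{(a,b)}|$ is at most $\prod_{i=2}^d M_i$ by the inductive hypothesis, since the per-coordinate maxima for $T_{(a,b)}$ are over a subfamily of the prefixes used for $T$ and hence are no larger than the corresponding $M_i$. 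Combining with Step 1: $M_i \leq |A||B|$ always, and $M_i \leq \tfrac{99}{100}|A||B|$ for each bad $i$, so if there are $b$ bad indices then $|T| \leq (|A||B|)^d (99/100)^{b}$. Taking logarithms and rearranging gives $b \leq \tfrac{1}{\log(100/99)}\log\tfrac{(|A|\cdot|B|)^d}{|T|} = c_{\text{log}}\log\tfrac{(|A|\cdot|B|)^d}{|T|}$.

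The computations are elementary; the only point requiring a little care is the induction in Step 2, specifically the observation that passing to suffixes after a fixed first coordinate can only shrink each per-coordinate maximum $M_i$, so that the inductive bound applies uniformly.
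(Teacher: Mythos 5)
Your proof is correct and follows essentially the same strategy as the paper: at each bad index $i$, the split of $A$ into $a$'s that ever achieve the $9|B|/10$ threshold and those that never do bounds the number of viable $(a,b)$-extensions of any prefix by $\tfrac{99}{100}|A||B|$, and compounding these per-coordinate bounds yields $|T|\le(|A||B|)^d(99/100)^k$. The paper phrases the compounding as a direct recursion on the count $|P_i|$ of live prefixes of length $i$ (so that $|P_i|\le\frac{99}{100}|A||B|\,|P_{i-1}|$ at bad indices), whereas you package the per-step bound as $M_i$ and prove $|T|\le\prod_i M_i$ by a separate induction, but this is only a cosmetic repackaging of the same argument.
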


\begin{proof}
For $0 \leq i \leq d$, define $P_i$ to be the set of $s \in (A \times B)^i$ for which there exists some $s' \in (A \times B)^{d-i}$ with $ss' \in T$.

Let $i \in [d]$ and suppose $n_i \geq 9|A|/10$. Then we simply apply the trivial bound $|P_i| \leq |A| \cdot |B| \cdot |P_{i-1}|$. Now suppose that $n_i \leq 9|A|/10$. Let $A_1$ be the set of $a \in A$ such that $m_{i,s,a} \leq 9|B|/10$ for every $s \in (A \times B)^{i-1}$. Fix some $p \in P_{i-1}$. Let $a \in A$. If $a \in A_1$, then $m_{i,p,a} \leq 9|B|/10$. So there are at most $9|B|/10$ choices of $b \in B$ such that $pab \in P_i$. And if $a \not \in A_1$, then there are at most $|B|$ choices. In total, the number of choices of $a \in A$ and $b \in B$ such that $pab \in P_i$ is at most \begin{align*} |A_1| \cdot \frac{9|B|}{10} + (|A| - |A_1|) \cdot |B| &= |A| \cdot |B| - |A_1| \cdot \frac{|B|}{10} \\&\leq \frac{99 |A| \cdot |B|}{100},\end{align*} where the last inequality holds since $|A_1| \geq |A|/10$. But every element of $P_i$ has the form $pab$ for some $p \in P_{i-1}$, $a \in A$, and $b \in B$, so we get the inequality $|P_i| \leq 99 |P_{i-1}| \cdot |A| \cdot |B| / 100.$

Since we have $|P_0| = 1$ and $|P_d| = |T|$, it follows that $$|T| \leq (|A| \cdot |B|)^d \cdot \left ( \frac{99}{100} \right)^k,$$ where $k$ is the number of $i \in [d]$ for which $n_i \leq 9|A|/10$. Taking logarithms yields the desired bound.
\end{proof}

Specifically, we get the following bound on the number of ``useful'' blocks.

\begin{claim}
If $M \leq (32pcc_\text{log})^{-1/2}N^{1/2}$, where $c$ and $c_\text{log}$ are the constants determined in above lemmas, then the set of $i \in [N/(2K)]$ for which $n_i \leq 9|A|/10$ has size at most $N/(8pK)$.
\end{claim}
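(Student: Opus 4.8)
The plan is to combine the two density facts already established. On one hand, the set $T \subseteq S' = S^{N/(2K)}$ satisfies $|T| \geq |S''|/2^M \geq |S'|/(pN 2^M)$, since $|S''| \geq |S'|/(pN)$ by the pigeonhole claim and $T$ is the largest class of inputs inducing a common memory state before pass $p'$. On the other hand, Lemma~\ref{lem:sparse} applied with $d = N/(2K)$ tells us that the number of block indices $i$ for which $n_i \leq 9|A|/10$ is at most $c_\text{log} \log \frac{(|A|\cdot|B|)^d}{|T|}$. So first I would substitute the lower bound on $|T|$ into this expression: $\log \frac{(|A|\cdot|B|)^d}{|T|} \leq \log(pN 2^M) = M + \log(pN)$.

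Next I would bound $M + \log(pN)$ crudely by a constant times $M$. Since $M = O(\sqrt N)$ and in particular $M \geq 2\log K$ is assumed in Lemma~\ref{lem:badinputs} (and $K = 2cM$, $N = O(M^2)$ up to constants), the term $\log(pN)$ is $O(\log M)$, which is dominated by $M$; so for $N$ large enough, $M + \log(pN) \leq 2M$, say. Hence the number of bad blocks is at most $2c_\text{log} M$. Then I would plug in the hypothesis $M \leq (32 p c c_\text{log})^{-1/2} N^{1/2}$ to convert this into a bound in terms of $N/K$: since $K = 2cM$, we have $N/(2K) = N/(4cM)$, and I want $2c_\text{log} M \leq N/(8pK) = N/(16pcM)$, i.e. $32 p c c_\text{log} M^2 \leq N$, which is exactly the stated hypothesis squared. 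So the bound follows directly.

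The main obstacle — really just a bookkeeping point rather than a genuine difficulty — is making sure the asymptotic absorption of $\log(pN)$ into $M$ is legitimate with the constants as stated; one has to be a little careful that ``sufficiently large $N$'' is consistent with the other places the proof invokes that phrase (e.g. the block-counting claim requiring $N \geq (8p+2)K$). I would handle this by noting that all these are finitely many lower-bound requirements on $N$ of the form $N \geq C_i$ or $N \geq C_i \log N$, each satisfied for $N$ past some threshold, so they can be imposed simultaneously; alternatively one can simply enlarge the constant $32$ in the hypothesis to swallow the lower-order term outright and avoid any appeal to ``large $N$''. Either way the arithmetic reduces to comparing $32 p c c_\text{log} M^2$ with $N$, which is immediate from the assumed upper bound on $M$.
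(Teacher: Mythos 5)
Your proposal is correct and follows essentially the same route as the paper's proof: apply Lemma~\ref{lem:sparse}, substitute the lower bound $|T| \geq |S'|/(pN\cdot 2^M)$ inherited from $|S''| \geq |S'|/(pN)$ and $|T| \geq |S''|/2^M$, obtain the bound $c_\text{log}(M + \log pN)$, and compare against $N/(16pcM) = N/(8pK)$ via the hypothesis on $M$ and the choice $K = 2cM$. In fact you are slightly more careful than the paper, which asserts $c_\text{log}(M + \log pN) \leq N/(16pcM)$ without comment on why the additive $\log pN$ term can be absorbed; your observation that this uses the factor-of-two slack built into the constant $32$ (or, equivalently, a ``sufficiently large $N$'' stipulation consistent with the other thresholds in the argument) is exactly the bookkeeping the paper elides.
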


\begin{proof}
By Lemma~\ref{lem:sparse}, we have that the set of $i \in [N/(2K)]$ for which $n_i \leq 9|A|/10$ has size at most $$c_\text{log} \log \frac{|S'|}{|T|} \leq c_\text{log} \log (pN \cdot 2^M) = c_\text{log}(M + \log pN) \leq N/(16pcM) = N/(8pK)$$ as desired.
\end{proof}

Thus, there are at least $N/(8pk)$ blocks $[2Kj, 2K(j+1))$ for which the number of $a\in A$ and $b \in B$ such that some element of $T$ contains $ab$ at block $j$ is at most $3|S|/5$. Identifying every input in $S''$ with a tuple of $N/(8pK)$ block inputs in $S$, the set $T$ cannot contain more than $(3|S|/5)^{N/(8pK)}$ distinct tuples.

Using this, we can bound the size of $T$. A set of inputs in $S'$ which all take the same values in the $N/(8pK)$ relevant blocks can have size at most $|S'|/|S|^{N/(8pK)}$. Hence a set of inputs in $S'$ which take at most $(3|S|/5)^{N/(8pK)}$ values in the $N/(8pK)$ relevant blocks can have size at most $|S'|(3/5)^{N/(8pK)}$. So $T$ can have size at most $|S'|(3/5)^{N/(8pK)}$ as well. By the bound $|S''| \geq |S'|/(pN)$, we conclude that \textsc{solve-p$'$} solves at most $pN|S''|(cM/K)^{N/(4pK)}$ inputs in $S''$.

But we assumed that $|T| \geq |S''|/2^M$, so \textsc{solve-p$'$} solves at least $|S''|/2^M$ inputs in $S''$. Thus $2^M \geq (pN)^{-1} (5/3)^{N/(8pK)}$. Substituting $K = 2cM$ and solving for $M$ yields $$M \geq -\frac{1}{2}\log pN + \frac{1}{4pc}\sqrt{4p^2c^2\log^2 pN + pcN\log \frac{5}{3}},$$ which is sufficient to prove the following theorem.

\begin{theorem}\label{thm:lbound}
For any constant $p$, no algorithm in the $p$-pass streaming model solves \smin{} in $o(N^{1/2})$ space for all window sizes $K$ when the maximum element value is $R = K$. 
\end{theorem}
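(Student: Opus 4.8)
The plan is to assemble Theorem~\ref{thm:lbound} directly from the chain of claims and lemmas already laid out in Section~\ref{sec:lowerbounds}. The statement asserts that for every constant $p$, no $p$-pass streaming algorithm (with a single output pass) solving \smin{} for all window sizes $K$ with $R = K$ can use $o(N^{1/2})$ space. So I would argue by contradiction: suppose \textsc{algo} is such an algorithm using space $M$. Fix $K = 2cM$ (allowed since the bound must hold for all $K$), and take $R = K$ so Lemma~\ref{lem:badinputs} applies to each length-$2K$ block. The whole point is to show that $M$ cannot be too small, i.e. that $M = \Omega(\sqrt{N})$, contradicting $M = o(\sqrt{N})$.

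Concretely, the steps are: (1) Build the compound input family $S' = S^{N/(2K)}$ from the hard single-block family $S$ of Lemma~\ref{lem:badinputs}. (2) Apply the first Claim (double pigeonhole over passes and output positions) to extract a fixed pass $p'$, a start index $i'$, and a subset $S'' \subseteq S'$ with $|S''| \geq |S'|/(pN)$ on which \textsc{algo} outputs a contiguous run of $(N-K)/p$ window minima during pass~$p'$. (3) Fix the $2^M$ possible memory states at the start of pass $p'$ to pull out a one-pass deterministic algorithm \textsc{solve-p$'$} induced by $T \subseteq S''$ with $|T| \geq |S''|/2^M$. (4) Use the block Claim to see \textsc{solve-p$'$} correctly outputs all window minima for $N/(4pK)$ full blocks on every input in $T$. (5) Invoke the combinatorial Lemma~\ref{lem:sparse} with $d = N/(2K)$ and the density bound $|T| \geq |S'|/(pN \cdot 2^M)$ to conclude that, provided $M \leq (32pcc_\text{log})^{-1/2} N^{1/2}$, at most $N/(8pK)$ of the blocks are ``bad'' (have $n_i \leq 9|A|/10$), hence at least $N/(8pK)$ of the output blocks are simultaneously ``good'' (densely covered) and ``useful'' (within the contiguous run). (6) For each such block apply the simulation Lemma to bound the number of single-block inputs realized by $T$ on that block by $3|S|/5$, so $T$ realizes at most $(3|S|/5)^{N/(8pK)}$ distinct block-tuples; combined with the counting bound $|S'|/|S|^{N/(8pK)}$ per fixed tuple, this gives $|T| \leq |S'|(3/5)^{N/(8pK)}$. (7) Chain this against $|T| \geq |S'|/(pN \cdot 2^M)$ to get $2^M \geq (pN)^{-1}(5/3)^{N/(8pK)}$, substitute $K = 2cM$, and solve the resulting inequality to obtain $M \geq -\tfrac12\log pN + \tfrac{1}{4pc}\sqrt{4p^2c^2\log^2 pN + pcN\log\tfrac53} = \Omega(\sqrt{N})$, the promised contradiction.

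The main obstacle is the logical bookkeeping in step~(5)--(6): we need the set of ``bad'' blocks (size $\leq N/(8pK)$ by Lemma~\ref{lem:sparse}) to be disjoint enough from the complement of the ``useful'' block range (which misses $\leq$ a few blocks out of the $\geq N/(4pK)$ useful ones) that their union still leaves $\geq N/(8pK)$ blocks that are both good and useful; this is just a counting check ($N/(4pK) - N/(8pK) = N/(8pK)$, up to the constant-number of edge blocks absorbed by taking $N$ large), but it is the place where all the constants must line up. A secondary subtlety is that Lemma~\ref{lem:sparse}'s bound depends on $M$ through $|T|$, while the per-block simulation bound also feeds back through $K = 2cM$, so one must be careful that the final inequality is genuinely an \emph{upper} bound on $1/M$ forcing $M$ up, rather than a circular constraint — but since $\log\frac{|S'|}{|T|} \leq M + \log pN$ grows only linearly in $M$ while the number of good-and-useful blocks $N/(8pK) = N/(16cM)$ shrinks like $1/M$, the product in the exponent of $(3/5)$ is $\Theta(N/M)$, and matching $2^{\Theta(M)}$ against $(3/5)^{-\Theta(N/M)}$ forces $M^2 = \Omega(N)$, exactly as the displayed solution shows.

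\begin{proof}
Suppose, for contradiction, that \textsc{algo} is a $p$-pass streaming algorithm (with a single output pass) solving \smin{} for all window sizes using space $M = o(N^{1/2})$, with $R = K$. Since the claimed space bound must hold for every $K$, we are free to set $K = 2cM$, where $c$ is the constant of Lemma~\ref{lem:badinputs}; as $M = o(\sqrt N)$ this satisfies $K = O(\sqrt N)$, so all the hypotheses invoked above (e.g. $N \geq (8p+2)K$ for large $N$, and $R = K \geq 2\log K$ for the constant of Lemma~\ref{lem:badinputs} once $M \geq \log K$) hold for sufficiently large $N$. Define $S' = S^{N/(2K)}$ as above.

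Applying the first Claim, fix a pass index $p'$, an input index $i'$, and $S'' \subseteq S'$ with $|S''| \geq |S'|/(pN)$ on which \textsc{algo} outputs the window minima with starting indices in $[i', i' + (N-K)/p]$ during pass $p'$. Fixing the memory state at the start of pass $p'$ that is shared by the largest number of inputs in $S''$, we obtain a one-pass deterministic algorithm \textsc{solve-p$'$} and a set $T \subseteq S''$ with $|T| \geq |S''|/2^M \geq |S'|/(pN \cdot 2^M)$ which induces it. By the second and third Claims, \textsc{solve-p$'$} outputs correct window minima on all windows with starting indices in $[i', i'+(N-K)/p]$ for every input in $T$, and this range contains at least $N/(4pK)$ complete length-$2K$ blocks.

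Now assume $M \leq (32 p c c_\text{log})^{-1/2} N^{1/2}$; otherwise $M = \Omega(\sqrt N)$ already, contradicting $M = o(\sqrt N)$. Then by the last Claim, at most $N/(8pK)$ indices $i$ have $n_i \leq 9|A|/10$. Since at least $N/(4pK)$ blocks lie in the useful range, and at most $N/(8pK)$ of all blocks are ``bad'', at least $N/(4pK) - N/(8pK) = N/(8pK)$ blocks are simultaneously useful (output by \textsc{solve-p$'$} on all of $T$) and satisfy $n_i \geq 9|A|/10$. For each such block, the simulation Lemma shows the number of single-block inputs $(a,b) \in S$ realized by some element of $T$ at that block is at most $3|S|/5$. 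Hence $T$ realizes at most $(3|S|/5)^{N/(8pK)}$ distinct tuples on these $N/(8pK)$ blocks, and since a set of inputs in $S'$ that agrees on these blocks has size at most $|S'|/|S|^{N/(8pK)}$, we conclude $|T| \leq |S'|(3/5)^{N/(8pK)}$.

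Combining with $|T| \geq |S'|/(pN \cdot 2^M)$ gives $2^M \geq (pN)^{-1}(5/3)^{N/(8pK)}$. Substituting $K = 2cM$ and taking logarithms,
\begin{equation*}
M \geq -\log(pN) + \frac{N \log(5/3)}{16 p c M},
\end{equation*}
i.e. $M^2 + M\log(pN) - \frac{N\log(5/3)}{16pc} \geq 0$, which forces
\begin{equation*}
M \geq -\frac{1}{2}\log(pN) + \frac{1}{4pc}\sqrt{4p^2 c^2 \log^2(pN) + pc N \log\tfrac{5}{3}} = \Omega(\sqrt N),
\end{equation*}
contradicting $M = o(\sqrt N)$. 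Therefore no such algorithm exists, which proves the theorem.
\end{proof}
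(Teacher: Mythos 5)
Your proposal is correct and follows essentially the same chain of reasoning as the paper's Section~\ref{sec:lowerbounds}: set $K = 2cM$, compound the hard single-block instances of Lemma~\ref{lem:badinputs} into $S' = S^{N/(2K)}$, localize to one pass and one memory state to extract $T$, use Lemma~\ref{lem:sparse} to find many dense-and-useful blocks, apply the per-block simulation bound of $3|S|/5$, and solve the resulting exponential inequality for $M$. The only differences are cosmetic (framing as an explicit contradiction, and cleanly splitting off the trivial case $M > (32pcc_\text{log})^{-1/2}N^{1/2}$), and your final algebra matches the paper's displayed bound on $M$.
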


%Suppose an algorithm uses $M$ space to solve \smin{}. Then there are $2^M$ possibilities for the memory stored between the penultimate pass and the last pass. Fix one such possibility. For this state, there is a fixed protocol for the last pass which does not depend on previous passes, and uses $M$ space. We show that there is a fixed set of inputs, and a constant $c$, so that no deterministic protocol for the last pass produces correct output on a larger fraction of these inputs than $(cM/K)^{N/(2K)}$.

%To construct the desired set of inputs, divide the input array into blocks of size $2K$. For each block, independently choose an input of length $2K$ from the set $S$ described in Lemma~\ref{lem:badinputs}. Suppose some protocol for the full input produces correct output for a particular full input, and uses $M$ space. Then this protocol produces the correct output for each block, and in no block does it use more than $M$ space. But by Lemma~\ref{lem:badinputs}, for each block, no protocol solves a larger fraction of the block inputs than $cM/K$. There are $N/(2K)$ blocks, and input is chosen independently for each block, so we conclude that no protocol solves a larger fraction of the full inputs than $(cM/K)^{N/(2K)}$.

%We have shown that an algorithm using $M$ space can have $2^M$ different protocols for the last pass, but needs at least $(K/(cM))^{N/(2K)}$ protocols for the last pass. So $2^M \geq (K/(cM))^{N/(2K)}$. This implies that $M \geq N/(2K) \log (K/(cM))$. Setting $K =  2cM$, we have $M \geq N/(4cM)$, so $M \geq \sqrt{N/(4c)}$.

\section{A Linear Lower Bound for \smaj{}}\label{sec:maj}

Now we show that for any constant number of passes, \smaj{} has no algorithm with space complexity sublinear in $N$ and independent of $K$. The proof broadly follows the same structure. We start with a lemma lower bounding the difficulty of \smaj{} in the two-party communication model. We then consider a streaming algorithm which solves \smaj{}, and pick a single pass in which the algorithm often produces significant input. However, now we need $K = \Theta(N)$ in order to obtain a linear lower bound on $M$, so we could only have a constant number of ``blocks''. Therefore the above method of discarding up to $c_\text{log}(M + \log pN)$ sparse blocks does not work. Instead, since we only need a single block, we only have two cases, which we'll bound separately: either the algorithm starts producing output for the block early, or it doesn't. The former case lends itself to an information bound, and the latter allows simulation of a communication algorithm.

Let's return to bounding the difficult of \smaj{} in the communication model. Pick any $K$. Let $A$ be the set of boolean strings $a$ such that $|a| = K-1$ and $a_{2i} + a_{2i+1} = 1$ for $0 \leq i < (K-1)/2$. Let $B$ be defined in the same way, and define $S = A \times B$, so that every input consists of a string $a \in A$ for Alice, and a string $b \in B$ for Bob.

For each $a \in A$, define $\text{out}_A(a)$ to be the number of bits output by Alice. So $0 \leq \text{out}_A(a) \leq K-1$ for all $a \in A$. Observe that if Alice outputs $v$ bits of (correct) output, then her output fixes the first $2 \lfloor \frac{v+1}{2} \rfloor$ bits of $b$. For example, if Alice's first bit of output is correctly $0$, then $b_0 = 0$ and $b_1 = 1$, whereas if her first bit of output is correctly $1$, then $b_0 = 1$ and $b_1 = 0$. Alice's third bit of output similarly fixes $b_2$ and $b_3$. The following lemma will bound the number of inputs in $S$ which any low-memory communication algorithm can solve, by caseworking on $\text{out}_A(a)$.

\begin{lemma}\label{lem:majbadinputs}
Consider the two-party communication complexity model of \smaj{} in which the window size is $K$, the input size is $N \geq 2K - 2$, and each party receives a contiguous half of the input. Let $M \geq 2 \log K$. Then there is a fixed set of inputs $S = S_N$ and constants $c_1, c_2$ so that no $1$-round deterministic algorithm using $M$ space solves \smaj{} (that is, outputs the correct $N+1-K$ sliding window majority bits) on more than $2^{c_1M - c_2K}|S|$ inputs in $S$.
\end{lemma}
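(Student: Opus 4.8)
The plan is to reuse the two‑party, casework‑on‑$\text{out}_A(a)$ template of Lemma~\ref{lem:badinputs}; the new wrinkle is that here \emph{no} choice of Bob's input singles out a ``correct'' split point, so the two cases become ``Alice outputs many windows'' versus ``Alice outputs few windows.'' Write $L=(K-1)/2$, so $K=2L+1$, $|A|=|B|=2^L$, and $|S|=2^{2L}=2^{K-1}$. First I would record the structural fact behind the construction: the majority of window $[i,i+K-1]$ of the concatenation $ab$ equals $b_i$ when $i$ is even and $a_i$ when $i$ is odd. Indeed this window contains the $K-1-i$ bits $a_i,\dots,a_{K-2}$ and the $i+1$ bits $b_0,\dots,b_i$, and since $a$ and $b$ are concatenations of pairs each summing to $1$, a short count gives a $1$‑count of $L+b_i$ for $i$ even and $L+1-a_{i-1}$ for $i$ odd; as $K/2=L+\tfrac12$ and $a_{i-1}=1-a_i$, the claim follows. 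Consequently, in a $1$‑round protocol on $(a,b)$, Alice's outputs (the majorities of windows $0,\dots,v-1$, where $v=\text{out}_A(a)$, a function of $a$) together with her message $m(a)$ depend on $a$ alone, and Bob then writes the majorities of windows $v,\dots,2L-1$ as a function of $(b,m(a))$. Hence $(a,b)$ is solved iff (i) Alice's bit for each even window $i<v$ equals $b_i$, and (ii) Bob's bit for each odd window $i\ge v$ equals $a_i$; the odd windows in Alice's prefix and the even windows in Bob's suffix are free, since whoever writes them knows the relevant input bit.

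Case A: $v=\text{out}_A(a)\ge L$. Alice's prefix then contains $\lceil v/2\rceil\ge\lceil L/2\rceil$ even windows, so by (i) the bits $b_i$ at those even positions are pinned to values determined by $a$; since each even‑index bit of $b$ fixes its pair, this pins at least $\lceil L/2\rceil$ of $b$'s $L$ pairs, leaving at most $2^{L-\lceil L/2\rceil}=2^{\lfloor L/2\rfloor}$ choices of $b$ for each $a$, hence at most $|A|\cdot 2^{\lfloor L/2\rfloor}\le 2^{3L/2}$ solved pairs. Case B: $v<L$. Fix $b$ and a message $m$. For a solved pair $(a,b)$ with $m(a)=m$, the number of bits Bob writes is $2L-v$ and is a function of $(b,m)$, so $v$ itself is determined by $(b,m)$; then by (ii) Bob's (correct) outputs for the $L-\lfloor v/2\rfloor$ odd windows in $[v,2L-1]$ equal the corresponding bits of $a$, pinning $L-\lfloor v/2\rfloor$ of $a$'s pairs. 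So at most $2^{\lfloor v/2\rfloor}\le 2^{(L-1)/2}\le 2^{L/2}$ strings $a$ are compatible with a given $(b,m)$; summing over the $\le 2^M$ messages and $2^L$ choices of $b$ gives at most $2^{L+M+L/2}=2^{3L/2+M}$ solved pairs with $v<L$.

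Combining, the number of solved inputs is at most $2^{3L/2}+2^{3L/2+M}\le 2^{3L/2+M+1}=2^{M+1-L/2}\,|S|$; substituting $L=(K-1)/2$ and absorbing the $O(1)$ additive terms into the exponent (using $M\ge 2\log K$ where convenient), this is at most $2^{c_1M-c_2K}|S|$ for absolute constants $c_1,c_2$ (say $c_1=1$, $c_2=1/8$), which proves the lemma for $N=2K-2$. For $N>2K-2$ one places $a$ at the end of Alice's half and $b$ at the start of Bob's half and pads both ends (e.g.\ with $01$‑blocks, whose windows are ties‑free and computable by whichever party owns them), so that the $K-1$ hard windows are exactly those straddling the two halves while every other window lies in a single half; the bound then transfers, since solving the padded input requires solving the embedded hard instance and $\text{out}_A(a)$ shifts by a fixed amount. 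The step I expect to be the crux is Case B: the natural move of trying to reconstruct \emph{both} $a$ and $b$ from a handful of shared parameters runs into a circular dependency — Bob's outputs pin bits of $a$ only once $b$ is known, Alice's guesses pin bits of $b$ only once $a$ is known, and Alice's guess function may depend on all of $a$. The fix is to hold $b$ entirely fixed and to notice that the \emph{length} of Bob's transcript already leaks $\text{out}_A(a)$, so that $b$ and the $M$‑bit message between them leave only $2^{\lfloor v/2\rfloor}$ candidates for $a$.
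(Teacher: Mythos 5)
Your proof is correct and follows essentially the same plan as the paper: the same hard input set (boolean strings in which consecutive pairs sum to one), the same structural observation that each straddling window's majority is a single bit of the other party's half (even windows from $b$, odd ones from $a$), and the same two-sided information-counting by cases on $\mathrm{out}_A(a)$—pinning pairs of $b$ when Alice outputs many bits, and pinning pairs of $a$ per $(b,\text{message})$ pair when she outputs few. The paper parametrizes by the exact value $v=\mathrm{out}_A(a)$ (bounding each $A_v\times B$ slice and picking the crossover threshold as a function of $M$ to tighten constants) rather than using your observation that the transcript length reveals $v$, and it takes $N=2K-2$ directly rather than padding, but these are bookkeeping differences rather than a different route.
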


\begin{proof}
Let $S^* \subseteq S$ be the set of inputs for which Alice and Bob produce the correct output. For $0 \leq v \leq K-1$, let $A_v = \{a \in A \mid \text{out}_A(a) = v\}$. We'll bound the size of each $(A_v \times B) \cap S^*$ separately.

Fix some $v$ with $0 \leq v \leq K-1$. Pick any $a \in A_v$. For any $(a,b) \in A_v \cap S^*$, Alice's $v$ bits of output fix the first $2 \lfloor \frac{v+1}{2} \rfloor$ bits of $b$. Hence, $$|(A_v \times B) \cap S^*| \leq \frac{|A_v| \cdot |B|}{2^{\lfloor \frac{v+1}{2} \rfloor}}.$$

The above bound is strong when $v$ is large. When $v$ is small, we use Bob's output. Fix some $b \in B$. There are $2^M$ possible messages Bob can receive from Alice; fix one of them. Then Bob's output is fixed, and it fixes the last $2\lfloor \frac{K-v}{2} \rfloor$ bits of any $a \in A_v$ for which $(a,b) \in S^*$. By the same logic as above, but this time summing over all $2^M$ messages which Bob could receive, $$|(A_v \times B) \cap S^*| \leq \frac{|A_v| \cdot |B|}{2^{\lceil \frac{K-v}{2} \rceil - M}}.$$

Using the first bound when $v \geq \frac{K-2M-1}{2}$ and the second bound otherwise, we get $$|(A_v \times B) \cap S^*| \leq \frac{|A_v| \cdot |B|}{2^{\frac{K-2M+1}{4}}}$$ for any $0 \leq v \leq K-1$. Summing over $v$, it follows that \[|S^*| \leq 2^{\frac{2M-K-1}{4}} |S|.\qedhere\]
\end{proof}

Now let's use this for a streaming lower bound. Fix an input size $N$, and let $K$ be the largest odd integer below $N/(2p + 1)$. Suppose that an algorithm \textsc{algo} in the $p$-pass streaming model uses $M$ space to solve \smaj{} for these parameters. Let $S$ be the set of boolean strings $s$ of length $N$ such that $s_{2i} + s_{2i+1} = 1$ for $0 \leq i < N/2$. As we saw in the previous section, there is some $S'' \subseteq S$ with $|S''| \geq |S|/(pN)$, so that for every input in $S''$, \textsc{algo} outputs the majority bits for the windows with starting indices $\{i', i'+1, \dots, i' + (N+1-K)/p - 1\}$ during pass $p'$. By choice of $K$, we have $(N+1-K)/p \geq 2K$. Let $S''_\text{early} \subseteq S''$ be the subset of $S''$ consisting of inputs for which window $i'+K-1$ is output before processing element $i'+K-1$. We'll bound $S''_\text{early}$ and $S \setminus S''_\text{early}$ separately.

\begin{lemma}
We have $|S''_\text{early}| \leq 2^{N/2 + M - K/2}$.
\end{lemma}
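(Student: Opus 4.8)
The plan is to show that membership in $S''_{\text{early}}$ forces \textsc{algo} to commit, before it has read a single element of them, to the correct majorities of roughly $K/2$ windows whose "deciding" bits live in pairwise-disjoint pairs of $S$; since those commitments depend only on the $\le M$ memory bits at the start of pass $p'$ together with a prefix of the stream that avoids all of those pairs, a union bound over memory states and prefixes yields the bound.

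First I would unpack the hypothesis. Because \textsc{algo} uses a single output pass, the windows output during pass $p'$ are written in strictly increasing order of starting index. Hence if the window starting at index $i'+K-1$ (that is, $[i'+K-1,\,i'+2K-1)$) is written before element $i'+K-1$ is processed in pass $p'$, then so is every window starting at an index in $\{i',i'+1,\dots,i'+K-1\}$; a one-line check using $(p+1)K\le N+1$ (which follows from the choice of $K$, since $(2p+1)K\le N+1$) confirms that all $K$ of these windows lie in the range $\{i',\dots,i'+(N+1-K)/p-1\}$ that $S''$ guarantees to be output in pass $p'$, so for $s\in S''_{\text{early}}$ each of them is output, and correctly. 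Consequently the configuration of \textsc{algo} immediately before processing element $i'+K-1$ in pass $p'$ --- including everything written so far to the output stream --- already records the correct majorities of the windows starting at $i',\dots,i'+K-1$. That configuration is a deterministic function $\Phi\bigl(q,\ s_0 s_1\cdots s_{i'+K-2}\bigr)$, where $q\in\{0,1\}^{\le M}$ is the memory state at the start of pass $p'$.

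Next I would exploit the structure of $S$. Since $K$ is odd and each pair $s_{2i}s_{2i+1}$ equals $01$ or $10$, within any window $[j,j+K)$ all but one position pair up and cancel, so the window majority equals $s_{j+K-1}$ when $j$ is even and $s_j$ when $j$ is odd. Among the starting indices $j\in\{i',\dots,i'+K-1\}$, at least $\lfloor K/2\rfloor$ are even; for each such $j$ the majority is the single bit $s_{j+K-1}$, whose position exceeds $i'+K-2$, so it lies outside the prefix $s_0\cdots s_{i'+K-2}$, and as $j$ ranges over these even values the positions $j+K-1$ are the first coordinates of pairwise-distinct pairs, all disjoint from the pairs that the prefix determines. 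Now fix a memory state $q$ (at most $2^M$ choices) and a valid prefix $x=s_0\cdots s_{i'+K-2}$ (which determines exactly $\tau$ of the $N/2$ pairs, and of which there are $2^{\tau}$); then $\Phi(q,x)$ fixes the target majorities $O_j$, and for $s\in S''_{\text{early}}$ with this $(q,x)$ the equations $s_{j+K-1}=O_j$ over even $j$ pin down $\ge\lfloor K/2\rfloor$ further pairs, all distinct and all disjoint from the $\tau$ pairs fixed by $x$. Hence there are at most $2^{\,N/2-\tau-\lfloor K/2\rfloor}$ such $s$, and summing over $q$ and $x$ gives $|S''_{\text{early}}|\le 2^M\cdot 2^{\tau}\cdot 2^{\,N/2-\tau-\lfloor K/2\rfloor}=2^{\,N/2+M-\lfloor K/2\rfloor}$, which matches the claimed $2^{\,N/2+M-K/2}$ up to the parity of $K$ (and when $i'+K-1$ is even the window starting there contributes one more pinned pair, removing the discrepancy).

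I expect the main obstacle to be exactly this bookkeeping: keeping the parities straight so that each deciding bit genuinely lies outside the conditioned prefix, so that the pinned pairs are mutually disjoint and disjoint from the prefix pairs, and so that nothing is overcounted in the sum over $q$ --- it is precisely because $q$ is itself a function of the whole input (from the earlier $p'-1$ passes) that one must union-bound over the $2^M$ values of $q$ rather than treat $q$ as independent of the later stream bits.
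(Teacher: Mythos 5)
Your proof follows essentially the same route as the paper's: union-bound over the $2^M$ possible memory states at the start of pass $p'$, observe that the memory state together with the prefix $s_0\cdots s_{i'+K-2}$ determines (deterministically) the output for the windows starting at $i',\dots,i'+K-1$, note that for inputs in $S''_{\text{early}}$ these outputs must all be emitted before element $i'+K-1$ is read and must be correct, and hence pin roughly $K/2$ of the free pair-bits lying beyond the prefix. Your version just spells out two things the paper leaves implicit --- that the single-output-pass condition forces all $K$ of those windows (not only window $i'+K-1$) to be output before element $i'+K-1$, and exactly which half of the $K$ outputs (the even starting indices) actually constrain bits outside the prefix --- while the paper delegates the bit-pinning to ``the same reasoning as in Lemma~\ref{lem:majbadinputs}'' and simply counts $N/2-K/2$ remaining boolean choices. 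The $\pm 1/2$ discrepancy you flag in the exponent is present in the paper's count as well (since $K$ is odd, $N/2-K/2$ is not an integer) and is immaterial to the theorem.
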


\begin{proof}
There are at most $2^M$ possibilities for the memory stored by \textsc{algo} immediately before starting pass $p'$. Each possibility induces a one-pass algorithm. We label the algorithms as \textsc{full-$i$} for $i \in [2^M]$, and let $T_\text{early}(i)$ be the set of inputs in $S''$ such that \textsc{full-$i$} outputs the majority bits for the windows with starting indices $\{i',\dots,i'+2K-1\}$, and outputs the bit for window $i'+K-1$ before processing element $i'+K-1$. Then $\cup_{i \in [2^M]} T_\text{early}(i) \supseteq S''_\text{early}$. 

Fix any $i \in [2^M]$. To identify an input in $T_\text{early}(i)$, it suffices to choose the first $i'+K-1$ elements, which fixes \textsc{full-$i$}'s output for windows $\{i',\dots,i'+K-1\}$, which in turn fixes the next $K$ input elements (by the same reasoning as in Lemma~\ref{lem:majbadinputs}); and then to choose the remaining $N + 1 - i' - 2K$ input elements. Every pair of elements requires a single boolean choices. Hence there are $N/2 - K/2$ boolean choices to make, so $|T_\text{early}(i)| \leq 2^{N/2 - K/2}$. Summing over all $i \in [2^M]$, it follows that $|S''_\text{early}| \leq 2^{N/2 + M - K/2}$.
\end{proof}

\begin{lemma}
We have $|S'' \setminus S''_\text{early}| \leq 2^{(c_1+1)M - c_2K + N/2}$, where the constants are as in Lemma~\ref{lem:majbadinputs}.
\end{lemma}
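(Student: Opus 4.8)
The plan is to adapt the streaming-to-communication reduction of Section~\ref{sec:lowerbounds} to the present setting, where $K=\Theta(N)$ forces us to work with a single ``block'' rather than many. Because $K$ was chosen so that $(N+1-K)/p\ge 2K$, on every input in $S''$ the algorithm \textsc{algo} outputs the correct majority bits of all windows whose starting index lies in $[i',\,i'+2K-1]$ during pass $p'$; in particular it outputs the correct majority bits of the $K-1$ windows that lie entirely inside the block of stream positions $E=[i',\,i'+2K-3]$, and these bits are determined by the contents of $E$ alone. The first step is to partition $S''\setminus S''_{\text{early}}$ according to the memory configuration $q'$ that \textsc{algo} holds immediately before reading stream position $i'$ in pass $p'$; there are at most $2^M$ classes, and I would bound the size of each class separately. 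The crucial economy over Section~\ref{sec:lowerbounds} is that conditioning on this single configuration already fixes everything the reduction needs, so we pay only one factor of $2^M$ here.

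Fix a class, with configuration $q'$, and let \textsc{partial} denote the one-pass algorithm that starts in state $q'$ and reads the stream from position $i'$ onward. I would then build a one-round, $M$-bit communication protocol for \smaj{} on inputs of length $2K-2$ with window $K$: Alice receives the first $K-1$ positions of $E$, Bob the last $K-1$. Alice runs \textsc{partial} on her positions; since every one of the $K-1$ block windows has its last element in Bob's half, and since for inputs in $S''\setminus S''_{\text{early}}$ the window with starting index $i'+K-1$ is \emph{not} output before its first element $i'+K-1$ is read, the single-output-pass (hence in-order) property forces Alice to output at most a prefix of the block windows, and never the window starting at $i'+K-1$ or any later one. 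Alice sends her $M$-bit state to Bob, who resumes \textsc{partial} on his $K-1$ positions and, exactly as in the simulation lemma of Section~\ref{sec:lowerbounds}, outputs the remaining block windows as a function of his final state. For any $\sigma$ in the class, running \textsc{partial} on the suffix $\sigma|_{[i',N-1]}$ is literally \textsc{algo}'s pass $p'$ continued from position $i'$, so it outputs the correct majorities for all block windows; hence the protocol is correct on $\sigma$'s block contents (up to the same constant-factor slack as in Section~\ref{sec:lowerbounds}, arising from Bob's incomplete knowledge of the true suffix). Applying Lemma~\ref{lem:majbadinputs} with input length $2K-2$, so that $|S|=2^{K-1}$, bounds the number of distinct block contents that can occur in this class by $2^{c_1M-c_2K}\cdot 2^{K-1}$.

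Finally, each $\sigma$ in a fixed class is determined by its block contents (at most $2^{c_1M-c_2K+K-1}$ choices) together with the remaining $N-2K+2$ stream positions, which---since $S$ consists of the pair-complementary strings, after shifting $E$ by one position if necessary to align it with the pairs---contribute at most $2^{N/2-K+1}$ further degrees of freedom; multiplying gives at most $2^{c_1M-c_2K+N/2}$ inputs per class, and summing over the at most $2^M$ classes yields $|S''\setminus S''_{\text{early}}|\le 2^{(c_1+1)M-c_2K+N/2}$. The step I expect to be the main obstacle is the same as in Section~\ref{sec:lowerbounds}: \textsc{algo} is permitted to defer the output of a block window until after it has read stream positions beyond $E$, so Bob cannot in a single round read those answers off directly; resolving this through the function-of-final-state device costs only a constant factor, which Lemma~\ref{lem:majbadinputs} absorbs. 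A minor additional bookkeeping point is keeping the block pair-aligned (and adjusting the ``special'' window $i'+K-1$ accordingly) so that its contents range over exactly the family of inputs covered by Lemma~\ref{lem:majbadinputs}.
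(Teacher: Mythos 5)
Your overall framework matches the paper's (partition by a single memory state, simulate a one‑round protocol, apply Lemma~\ref{lem:majbadinputs}, then multiply by the number of suffix choices), but the specific choice of conditioning point creates a real gap. You condition on the memory state $q'$ that \textsc{algo} holds immediately before reading position $i'$ in pass $p'$, and you take the block windows to be those with starting indices in $[i',\,i'+K-2]$. The paper instead conditions on the state immediately before reading position $i'+K-1$, and takes the relevant windows to start at $i'+K-1$ and later. This is not a cosmetic difference. The $S''_{\text{early}}$ condition only says that window $i'+K-1$ is not output before position $i'+K-1$ is processed. By the in‑order single‑output‑pass property this guarantees that \emph{all windows with starting index $\ge i'+K-1$} are output while processing positions $\ge i'+K-1$, which is exactly what makes the paper's \textsc{partial-$i$} (started at $i'+K-1$) produce every required output bit. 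It does \emph{not} rule out that some of \emph{your} block windows $i',\dots,i'+K-2$ were already output by \textsc{algo} in pass $p'$ before position $i'$ was reached (the algorithm has $p'-1$ earlier full passes and may legitimately have computed and emitted those majorities early in the pass). For such an input, your \textsc{partial}, started from $q'$, will never re‑emit those bits, so the one‑round protocol you construct does not ``solve'' the $(2K-2)$‑length instance in the sense required by Lemma~\ref{lem:majbadinputs}, and the inclusion $\bigcup_{q'} T(q') \supseteq S''\setminus S''_{\text{early}}$ that your count relies on fails.

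The fix is precisely what the paper does: move the conditioning point to immediately before position $i'+K-1$, let \textsc{partial} start there, and take the $(2K-2)$‑length block to be $[i'+K-1,\,i'+3K-4]$ with windows starting at $i'+K-1,\dots,i'+2K-3$. Then $S''\setminus S''_{\text{early}}$ exactly guarantees that \textsc{partial} emits all of these windows' outputs, the simulation goes through, and the same counting you already carried out (one factor of $2^M$ for the state, $2^{c_1M-c_2K}\cdot 2^{K-1}$ block contents from Lemma~\ref{lem:majbadinputs}, and $2^{N/2-K+1}$ completions outside the block) yields the claimed bound. Your discussion of the ``function of the final state'' device, the pair‑alignment bookkeeping, and the arithmetic are otherwise in order; only the alignment of the conditioning point with the definition of $S''_{\text{early}}$ needs to change.
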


\begin{proof}
Consider the $2^M$ possibilities for the memory stored by \textsc{algo} immediately before processing element $i'+K-1$ in pass $p'$. Each possibility induces a one-pass algorithm, and we label the algorithms as \textsc{partial-$i$} for $i \in [2^M]$. Let $T_\text{punctual}(i)$ be the set of inputs in $S''$ such that \textsc{partial-$i$} outputs the majority bits for the windows with starting indices $\{i'+K-1,\dots,i'+2K-1\}$. For any element $s \in S'' \setminus S''_\text{early}$, running \textsc{algo} over $s$ for $p'-1$ passes, and then over the first $i'+K-1$ elements of $s$ once, yields some memory state $i_s \in [2^M]$. As \textsc{partial-$i_s$} outputs the majority bits for the windows on $s$ with starting indices $\{i'+K-1,\dots,i'+2K-1\}$, it follows that $\cup_{i \in [2^M]} T_\text{punctual}(i) \supseteq S'' \setminus S''_\text{early}$.

Fix any $i \in [2^M]$. Here's a one-round communication algorithm on boolean strings $s$ of length $2K-2$ where $s_{2j} + s_{2j+1} = 1$ for $0 \leq j < K-1$. Alice receives string $a$, consisting of the first $K-1$ bits of input, and Bob receives string $b$, consisting of the remaining $K-1$ bits. Alice runs \textsc{partial-$i$} on $a$, and communicates the final memory state to Bob. Then Bob initializes \textsc{partial-$i$} to the given state, and runs it on $b$. At this point, if \textsc{partial-$i$} has not produces $K-1$ bits of output, the final memory state encodes the remaining bits, which Bob can compute and output. If there is some $t \in T_\text{punctual}(i)$ such that the input $s = ab$ is equal to the subsequence of $t$ between indices $i'+K-1$ and $i'+3K-4$, then the communication algorithm is correct on $s$. But the algorithm uses only $M$ bits of communication, so by Lemma~\ref{lem:majbadinputs}, it is correct on at most $2^{c_1M - c_2K}2^{K-1}$ inputs, for constants $c_1, c_2 > 0$. Each such input can correspond to at most $2^{N/2 - (2K-2)/2}$ elements of $T_\text{punctual}(i)$, so $|T_\text{punctual}(i)| \leq 2^{c_1M - c_2K + N/2}$. Summing over all $i \in [2^M]$, we obtain $|S'' \setminus S''_\text{early}| \leq 2^{(c_1+1)M - c_2K + N/2}$.
\end{proof}

The proof is essentially complete. Applying the above lemmas, \begin{align*}
|S''|/|S| 
&= 2^{-N/2}(|S''_\text{early}| + |S'' - S''_\text{early}|) \\
&\leq 2^{M - K/2} + 2^{(c_1+1)M - c_2K} \\
&\leq 2^{(c_1+1)M - K/2 + 1}.
\end{align*}
By assumption, $|S''| \geq |S|/(pN)$. So $2^{(c_1 + 1)M - K/2 + 1} \geq 1/(pN)$. Solving for $M$ and using $K = \Theta(N)$, the desired result follows:

\begin{theorem} \label{thm:majlbound}
For any constant $p$, no algorithm in the $p$-pass streaming model solves \smaj{} in $o(N)$ space for all windows sizes $K$.
\end{theorem}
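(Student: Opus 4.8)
The plan is to mirror the structure already used for Theorem~\ref{thm:lbound}, but exploit the fact that for \smaj{} we only need a single ``block'' of size $\Theta(N)$, which makes a linear rather than square-root lower bound attainable. First I would isolate the right parameters: take $K$ to be the largest odd integer below $N/(2p+1)$, so that one pass producing a $1/p$-fraction of the output covers an interval of at least $2K$ consecutive window starts, and restrict attention to the ``structured'' input family $S$ consisting of boolean strings of length $N$ with $s_{2i}+s_{2i+1}=1$, since this guarantees that a correct majority bit for a window of odd length $K$ pins down the parity-pair straddling the window's start. The communication core is Lemma~\ref{lem:majbadinputs}: its proof caseworks on $\mathrm{out}_A(a)$, observing that if Alice outputs $v$ correct bits then those bits fix the first $2\lfloor (v+1)/2\rfloor$ bits of Bob's input, while dually Bob's output (determined by one of $2^M$ messages) fixes the last $2\lceil (K-v)/2\rceil$ bits of Alice's input; balancing the two bounds at $v\approx (K-2M-1)/2$ and summing over $v$ gives $|S^*|\le 2^{(2M-K-1)/4}|S|$.

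Next I would carry out the streaming reduction. By two applications of pigeonhole (exactly as in the claim preceding Theorem~\ref{thm:lbound}), there is a pass index $p'$, a starting index $i'$, and a set $S''\subseteq S$ with $|S''|\ge |S|/(pN)$ such that \textsc{algo} outputs the majority bits for all windows with starts in $\{i',\dots,i'+(N+1-K)/p-1\}$ during pass $p'$, and this index set contains $\{i',\dots,i'+2K-1\}$. Here is the new wrinkle relative to the \smin{} argument: rather than discarding many sparse blocks, I split $S''$ into $S''_{\text{early}}$, the inputs for which window $i'+K-1$ is output \emph{before} element $i'+K-1$ is processed in pass $p'$, and the complement. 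For $S''_{\text{early}}$ I use an information/encoding bound: there are $2^M$ possible memory states entering pass $p'$, each inducing a one-pass algorithm \textsc{full-$i$}; an input in the corresponding set is determined by its first $i'+K-1$ elements (which fix \textsc{full-$i$}'s output for windows $\{i',\dots,i'+K-1\}$, hence the next $K$ input bits via the parity structure) together with the remaining $N+1-i'-2K$ elements, giving at most $2^{N/2-K/2}$ such inputs per state and $|S''_{\text{early}}|\le 2^{N/2+M-K/2}$. For $S''\setminus S''_{\text{early}}$ I instead branch on the $2^M$ memory states right before processing element $i'+K-1$, inducing one-pass algorithms \textsc{partial-$i$}; each such algorithm, restricted to the $2K-2$ relevant input bits, simulates a one-round communication protocol using $M$ bits, so Lemma~\ref{lem:majbadinputs} caps the number of solved length-$(2K-2)$ strings at $2^{c_1M-c_2K}2^{K-1}$, and multiplying by the $2^{N/2-(K-1)}$ completions gives $|S''\setminus S''_{\text{early}}|\le 2^{(c_1+1)M-c_2K+N/2}$.

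Finally I would combine: $|S''|/|S| = 2^{-N/2}(|S''_{\text{early}}|+|S''\setminus S''_{\text{early}}|) \le 2^{M-K/2}+2^{(c_1+1)M-c_2K} \le 2^{(c_1+1)M-K/2+1}$ (absorbing $c_2K$ into $K/2$ by adjusting constants), and since $|S''|\ge |S|/(pN)$ we get $2^{(c_1+1)M-K/2+1}\ge 1/(pN)$, i.e.\ $M = \Omega(K-\log(pN)) = \Omega(N)$ because $K=\Theta(N)$ for constant $p$. This yields Theorem~\ref{thm:majlbound}. The main obstacle I anticipate is the case split on $S''_{\text{early}}$: one has to be careful that ``output happens before processing element $i'+K-1$'' is exactly the dividing line that makes both the encoding bound and the communication simulation go through — if output for that window happens too late, Bob in the simulation would not have the data he needs, and if it can happen arbitrarily early, the encoding argument must account for the window's majority bit being determined by a prefix short enough that the parity structure still fixes the subsequent $K$ bits. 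Getting that boundary and the associated ``each solved communication input lifts to at most $2^{N/2-(K-1)}$ streaming inputs'' counting exactly right is the delicate part; the rest is bookkeeping parallel to Section~\ref{sec:lowerbounds}.
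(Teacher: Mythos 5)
Your proposal mirrors the paper's proof step for step: the same choice of $K$, the same parity-structured input family $S$, the same pigeonhole reduction to a single pass and index $i'$, the same split of $S''$ into $S''_{\text{early}}$ (handled by the $2^M$-state encoding bound) and its complement (handled by simulating a one-round protocol starting from the state before element $i'+K-1$ and invoking Lemma~\ref{lem:majbadinputs}), and the same final combination giving $M = \Omega(K) = \Omega(N)$. This is correct and essentially identical to the paper's argument.
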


\section{Future Work}
Our results leave a number of questions unanswered. Is it possible to decrease the dependence on $l$ in our streaming algorithm for \ksmin{}, or is there a matching lower bound of $\widetilde{O}(l^{3/2} \sqrt{N})$ (at least, for $l \leq N^{1/3}$)? We might ask a similar question in the communication complexity model. And our lower bounds only apply when the algorithm is restricted to a single output pass; can the bound be extended to $l$ output passes, or to any constant?
 
Another issue is the logarithmic gap between our $2$-pass streaming algorithm for \smin{} and our lower bound. It seems difficult to find an algorithm without the logarithmic factor, so we conjecture that the lower bound can be improved to $\Omega(N^{1/2} \log N)$. The case $R = o(K)$ still remains open; there may be an improved algorithm, possibly using more passes, or there may be a matching lower bound.

\end{document}